\definecolor{codepurple}{rgb}{0.58,0,0.82}
\definecolor{backcolour}{rgb}{0.95,0.95,0.92}
\lstdefinestyle{mystyle}{
    backgroundcolor=\color{backcolour},   
    commentstyle=\color{backcolour},
    keywordstyle=\color{backcolour},
    numberstyle=\tiny\color{gray},
    stringstyle=\color{codepurple},
    basicstyle=\ttfamily\footnotesize,
    breakatwhitespace=false,         
    breaklines=true,                 
    captionpos=b,                    
    keepspaces=true,                 
    numbers=left,                    
    numbersep=5pt,                  
    showspaces=false,                
    showstringspaces=false,
    showtabs=false,                  
    tabsize=2
}
\newcommand{\seqnum}[1]{\href{https://oeis.org/#1}{\rm \underline{#1}}}
\newcommand\NN{\mathbb{N}}
\newcommand\ZZ{\mathbb{Z}}
\newcommand\CC{\mathbb{C}}
\newcommand\val{\operatorname{val}}
\newcommand\rep{\operatorname{rep}}
\newcommand\supp{\operatorname{supp}}
\newcommand\sadd{\oplus}
\newcommand\zseq{(0)}
\newcommand\sequ{\mathbf{u}}
\newcommand\sshift{\sigma}
\newcommand{\facc}[1]{p_{\bf{#1}}}
\newcommand{\ab}[1]{\rho^\text{ab}_{\bf{#1}}}
\newcommand{\abk}[2]{\rho^{#1}_{\bf{#2}}}
\newcommand{\abeqk}[2]{\rho^{=#1}_{\bf{#2}}}
\newcommand\pav{\psi}
\newcommand\faset{\mathcal{L}}
\newcommand{\infw}[1]{{\bf{#1}}}
\newcommand{\interv}[2]{\left[#1\mathinner{{\ldotp}{\ldotp}}#2\right[}
\newcommand{\pfeq}{\operatorname{feq}}
\newcommand{\pabeq}{\operatorname{abeq}}
\newcommand{\pabexeq}{\operatorname{abexeq}}
\newcommand{\pbal}{\Delta}
\newcommand{\pzbal}{\operatorname{bal}}
\newcommand{\pocc}{\operatorname{occ}}
\theoremstyle{plain}
\newtheorem{theorem}{Theorem}
\newtheorem{lemma}[theorem]{Lemma}
\newtheorem{corollary}[theorem]{Corollary}
\newtheorem{proposition}[theorem]{Proposition}
\newtheorem{conjecture}[theorem]{Conjecture}
\theoremstyle{definition}
\newtheorem{definition}[theorem]{Definition}
\newtheorem{remark}[theorem]{Remark}
\newtheorem{example}[theorem]{Example}
\newtheorem{question}[theorem]{Question}
\numberwithin{equation}{section}
\title[Effective Computation of Generalized Abelian Complexity]{Effective Computation of Generalized Abelian Complexity for Pisot Type Substitutive Sequences}
\author[J.-M. Couvreur et al.]{Jean-Michel Couvreur}
\address{Université d'Orléans, INSA CVL, LIFO, UR 4022, Orléans, France}
\email{jean-michel.couvreur@univ-orleans.fr}
\author[]{Martin Delacourt}
\address{Université d'Orléans, INSA CVL, LIFO, UR 4022, Orléans, France}
\email{martin.delacourt@univ-orleans.fr}{}{}
\author[]{Nicolas Ollinger}
\address{Université d'Orléans, INSA CVL, LIFO, UR 4022, Orléans, France}
\email{nicolas.ollinger@univ-orleans.fr}
\author[]{Pierre Popoli}
\address{Department of Mathematics, University of Li\`ege, Belgium}
\email{pierre.popoli@uliege.be}
\author[]{Jeffrey Shallit}
\address{School of Computer Science, University of Waterloo, Canada}
\email{shallit@waterloo.ca}
 \author[]{Manon Stipulanti}
\address{Department of Mathematics, University of Li\`ege, Belgium}
\email{m.stipulanti@uliege.be}
\begin{document}

\begin{abstract}
Generalized abelian equivalence compares words by their factors up to a certain bounded length. The associated complexity function counts the equivalence classes for factors of a given size of an infinite sequence. How practical is this notion? When can these equivalence relations and complexity functions be computed efficiently? We study the fixed points  of substitution of Pisot type.
Each of their $k$-abelian complexities is bounded and the Parikh vectors of their length-$n$ prefixes form synchronized sequences in the associated Dumont--Thomas numeration system. Therefore, the $k$-abelian complexity of Pisot substitution fixed points is automatic in the same numeration system. Two effective generic construction approaches are investigated using the \texttt{Walnut} theorem prover and are applied to several examples. We obtain new properties of the Tribonacci sequence, such as a uniform bound for its factor balancedness together with a two-dimensional linear representation of its generalized abelian complexity functions.
\end{abstract}

\maketitle

\bigskip
\hrule
\bigskip

\noindent 2010 {\it Mathematics Subject Classification}: 11B85, 68R15, 68Q45

\bigskip
\hrule
\bigskip

\noindent \emph{Keywords: Generalized abelian complexity, morphisms, substitutions, fixed points, Pisot type substitutions, primitive substitutions, abstract numeration systems, Dumont--Thomas numeration systems, automatic sequences, synchronized sequences, regular sequences, Walnut theorem prover}

\bigskip
\hrule
\bigskip

A companion repository containing source code and files produced to prepare this document can be found as follows: \url{https://github.com/nopid/abcomp/}.

\bigskip
\hrule
\bigskip

\section{Introduction}
\label{sec:intro}

We consider sequences $\infw{x}$ over a finite alphabet.
One metric that has recently received some serious attention~\cite{Fici:2023} since its introduction by Richomme et al.~\cite{Richomme:2011} in 2011 is their \emph{abelian complexity}. It counts the number of distinct Parikh vectors of factors (i.e., contiguous blocks) that occur in $\infw{x}$. The \emph{Parikh vector} of a finite word records the number of occurrences of the distinct letters of the alphabet in that word. (See~\cref{sec:def} for definitions and notation.)
We deal with some generalizations of the abelian complexity, the so-called \emph{$k$-abelian complexity} (for some positive integer $k$) defined by Karhum\"aki et al.~\cite{Karhumaki:2013}.
For a positive integer $k$ and an integer $n$, the map $\abk{k}{x}(n)$ gives the number of length-$n$ factors of $\infw{x}$ that are \emph{$k$-abelian equivalent}, i.e., they share the same number of occurrences of factors of length at most $k$.

It turns out that the literature on generalized abelian complexity is limited to some famous examples.
For instance, there is a characterization of Sturmian sequences~\cite{Karhumaki:2013}.
However, computing the exact values of generalized abelian complexities is quite challenging.
Nonetheless, several papers~\cite{Chen:2018,Greinecker:2015,Lu:2022,Parreau:2015} suggest a conjecture about the inner structure of $\abk{k}{x}$ when $\infw{x}$ is produced by a finite automaton, namely, the $k$-abelian complexity of an $\ell$-automatic sequence is itself $\ell$-regular.
In this paper, we reinvestigate this conjecture and we provide two effective methods to construct a deterministic finite automaton with output (DFAO) that computes the $k$-abelian complexity of sequences satisfying some mild assumptions.
Both methods use the theorem prover {\tt Walnut}~\cite{Mousavi:2016,Shallit:2023} that relies on translating first-order logic predicates into automata and vice versa. 

The paper is organized as follows.
In~\cref{sec:def}, we introduce the setting of classical and abelian combinatorics on words, as well as the families of automatic, synchronized, and regular sequences.
In Section~\ref{sec:method 1}, we develop the first approach, which assumes that the sequence $\infw{x}$ is uniformly factor-balanced, i.e., the quantity ${| |u|_w - |v|_w | }$ is uniformly bounded for factors $u$, $v$, and $w$ of $\infw{x}$ ($u,v$ have equal length).
In this case, we show that the generalized abelian complexity of $\infw{x}$ is regular.
An innovative feature of the method, compared to previous literature, is to consider $(\abk{k}{x}(n))_{k\ge 1,n\ge 0}$ as a two-dimensional sequence.
We illustrate the effectiveness of the construction on several examples in~\cref{sec:method 1 effective computation} to~\cref{sec:Tribonacci}.
In particular, we provide new results about the well-studied Tribonacci sequence.
Then~\cref{sec:method 2} is devoted to our second method, where we consider fixed points of Pisot substitutions. (For a general discussion about Pisot type substitutions; see~\cite{Pytheas:2002}.)
First, in~\cref{sec:abelian complexity}, we obtain a DFAO computing the abelian complexity of these sequences, and as an application, we consider Parikh-collinear substitutive sequences.
Then, under a slightly different assumption, we show a similar result in~\cref{sec:k-abelian complexity} for the generalized abelian complexity.
The second method is different from the first, as we study the $k$-abelian complexity for a fixed $k$ and we translate the computation into that of the abelian complexity of the length-$k$ sliding-block code.
We note that this second method applies to a larger class of sequences.
We also illustrate it on one specific word, known as the Narayana word, in~\cref{sec:Narayana sequence}.
We finish the paper with some open questions and conjectures in~\cref{sec:open pbs}.

\section{Definitions and Notations}
\label{sec:def}

\subsection{General and abelian combinatorics on words}

Let $A^*$ denote the set of \emph{finite words} over the alphabet $A$ equipped with concatenation, and let $A^\NN$ denote the set of \emph{(infinite) sequences} over the same alphabet.
We write infinite sequences in bold.
For each $n\in \NN$, we let $A^n$ denote the set of length-$n$ words over $A$.
Let $\varepsilon$ denote the \emph{empty word}. 
For $\infw{x} \in A^*\cup A^\NN$, we let $|\infw{x}|$ denote its \emph{length}.
Let $\infw{x}[i]$ denote the letter appearing in position $0\leq i < |\infw{x}|$ inside $\infw{x}$. A \emph{factor} $u\in A^*$ of $\infw{x}$ is a sequence of consecutive letters appearing in $\infw{x}$, i.e., $u = \infw{x}[i] \cdots \infw{x}[i+n]$ for some $i,n\in\NN$. Let $\faset(\infw{x})$ denote the \emph{set of factors} of $\infw{x}$ and let $\faset_n(\infw{x}) = \faset(\infw{x})\cap A^n$ denote its the set of length-$n$ factors.
We let $\facc{x}$ denote the \emph{factor complexity} of $\infw{x}$, i.e., the map $\facc{x} \colon \NN \to \NN, n \mapsto \# \faset_n(\infw{x})$.

For a word $u\in A^*$, its \emph{Parikh vector} $\pav(u)\in\NN^A $ is defined as $\pav(u)[a] = |u|_a$ for $a\in A$, where $|u|_a$ denotes the number of occurrences of the letter $a$ in $u$. 
The \emph{abelian complexity} of a sequence $\infw{x}\in A^\NN$ is defined as $\ab{x}(n) = \#\{\pav(u)\mid u\in\faset_n(\infw{x})\}$, i.e., the number of different Parikh vectors obtained on factors of $\infw{x}$ for a given factor length~\cite{Richomme:2011}.
A generalization of the abelian complexity is the so-called \emph{$k$-abelian complexity} for some positive integer $k\ge 1$~\cite{Karhumaki:2013}.
Two words $u$ and $v$ are \emph{$k$-abelian equivalent} if $|u|_x=|v|_x$ for every word $x$ of length at most $k$, where $|w|_x$ denotes the number of occurrences of the factor $x$ in the word $w$.
We write $u \sim_k v$.
When $k=1$, we simply talk about \emph{abelian equivalence}.
For two same-length words $u,v$, we also define $u \sim_{=k} v$ if $|u|_x=|v|_x$ for every word $x$ of length exactly $k$.

It turns out that there is an equivalent definition for $k$-abelian equivalent words~\cite[Lemma~2.3]{Karhumaki:2013}.

\begin{lemma}
\label{lem:equivalent def for k-ab equiv}
Let $u,v\in A^*$ be two finite words and $k\ge 1$. The following statements are equivalent characterizations of 
$u\sim_k v$:
\begin{enumerate}
    \item The following two conditions are satisfied:
    \begin{enumerate}
    \item If $|u|<k$ or $|v|<k$, then $u=v$;
    \item Otherwise $u \sim_{=k} v$ and the length-$(k-1)$ prefixes and the length-$(k-1)$ suffixes of $u$ and $v$ are equal.
    \end{enumerate}
    \item We have $|u|=|v|$ and the following two conditions are satisfied:
    \begin{enumerate}
        \item If $|u|<k$, then $u=v$;
        \item Otherwise $u \sim_{=k} v$ and the length-$(k-1)$ prefixes of $u$ and $v$ are equal. 
    \end{enumerate}
\end{enumerate}
\end{lemma}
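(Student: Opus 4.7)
The plan is to reduce everything to a single counting identity and then run a downward induction on the length of the factor being counted. The identity I would isolate first is
\[
|w|_x = [x \text{ is a prefix of } w] + \sum_{a \in A} |w|_{ax},
\]
valid for any word $w$ and any $x$ with $|x| \le |w|$: each occurrence of $x$ in $w$ is either the leftmost one (at position $0$) or is preceded by some letter $a$, in which case it extends to an occurrence of $ax$. There is a symmetric version with suffixes obtained by appending a letter on the right. Here $[\cdot]$ denotes the $0/1$-indicator.

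Next I would prove that (2) characterises $u \sim_k v$. For the forward direction, taking $|x| \le 1$ in the definition of $\sim_k$ yields $|u| = |v|$; if the common length is less than $k$, then $u$ itself is a factor of $u$ of length at most $k$, so $|u|_u = 1 = |v|_u$ forces $u = v$. If the common length is at least $k$, then $u \sim_{=k} v$ is immediate, and for each length-$(k-1)$ word $x$ the identity above together with $|u|_x = |v|_x$ and $|u|_{ax} = |v|_{ax}$ for all $a \in A$ forces $[x \text{ prefix of } u] = [x \text{ prefix of } v]$; since each of $u, v$ has a unique length-$(k-1)$ prefix, they must coincide. For the reverse direction, assuming $|u| = |v| \ge k$, I would run a downward induction on $j = k, k-1, \ldots, 0$, using the identity to transfer $|u|_x = |v|_x$ from words of length $j+1$ down to length $j$; the matching length-$(k-1)$ prefixes ensure that $u$ and $v$ share all prefixes of length at most $k-1$, so the indicator terms agree at each step.

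Statement (1) then reduces to (2) essentially for free. The implication (1) $\Rightarrow$ (2) is immediate, since $u \sim_{=k} v$ is only defined between same-length words and we may simply drop the suffix clause. Conversely, $u \sim_k v \Rightarrow$ (1) is obtained by running the symmetric suffix-version of the identity argument in addition to the prefix one, yielding equality of both the length-$(k-1)$ prefixes and suffixes.

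The main obstacle is conceptual rather than technical: one has to notice that the identity packages together two things at once, namely the inductive propagation of the equalities $|u|_x = |v|_x$ across lengths, and the recovery of the actual length-$(k-1)$ prefix word from indicator information alone. Once this is recognised, the remainder is a routine case split on whether $|u| < k$ or $|u| \ge k$, and a finite induction on the factor length.
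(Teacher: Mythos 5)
The paper does not prove this lemma; it is quoted verbatim from Karhum\"aki, Saarela and Zamboni \cite[Lemma~2.3]{Karhumaki:2013}, so there is no in-paper proof to compare against. Your argument is correct and is essentially the standard one behind that reference: the occurrence-counting identity $|w|_x = [x \text{ is a prefix of } w] + \sum_{a\in A}|w|_{ax}$ (together with its suffix mirror) simultaneously propagates the equalities $|u|_x=|v|_x$ downward in length and recovers the length-$(k-1)$ prefix (resp.\ suffix) from the counts, and your closing of the cycle $(1)\Rightarrow(2)\Rightarrow u\sim_k v\Rightarrow(1)$ is sound. The only point worth making explicit is that your claim that $(1)\Rightarrow(2)$ is ``immediate'' relies on the convention, stated in this paper, that $u\sim_{=k}v$ is only asserted between same-length words, so that clause (1b) already carries $|u|=|v|$; with that convention the reduction is indeed free.
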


The \emph{$k$-abelian complexity} of a sequence $\infw{x}\in A^\NN$ is defined as $\abk{k}{x}(n) = \#\faset_n(\infw{x})/\sim_k$, i.e., we count length-$n$ factors of $\infw{x}$ up to $k$-abelian equivalence.
Similarly, we define the \emph{exact k-abelian complexity} $\abeqk{k}{x}$ of $\infw{x}$ as $\abeqk{k}{x}(n) = \#\faset_n(\infw{x})/\sim_{=k}$. 

\begin{lemma}
\label{lem:properties and observations for k-ab complexities}
Let $\infw{x}$ be a sequence.
We have $\ab{x}=\abk{1}{x}=\abeqk{1}{x}$.
For each integer $k\in\NN$, we have
\begin{enumerate}
    \item For all $n\in\NN$, $\abk{k}{x}(n) \le \abk{k+1}{x}(n)$;
    \item For all $n\in\NN$, $\abeqk{k}{x}(n) \le \abk{k}{x}(n) \le \prod_{i=1}^k \abeqk{i}{x}(n)$;
    \item For all $n<k$, $\abk{k}{x}(n)=\facc{x}(n)$. 
\end{enumerate}
\end{lemma}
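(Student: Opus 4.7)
The plan is to unpack each equivalence relation and compare them as refinements of one another on $\faset_n(\infw{x})$. The preliminary identities $\ab{x}=\abk{1}{x}=\abeqk{1}{x}$ are immediate: the relation $\sim_{=1}$ is exactly equality of Parikh vectors, yielding $\ab{x}=\abeqk{1}{x}$, while Lemma~\ref{lem:equivalent def for k-ab equiv}(2) with $k=1$ reduces $\sim_1$ on equal-length words to $\sim_{=1}$ together with agreement of the length-$0$ prefix, which is vacuous; hence $\abk{1}{x}=\abeqk{1}{x}$.

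For item (1), I would observe that $\sim_{k+1}$ refines $\sim_k$, since every word $x$ with $|x|\le k$ also satisfies $|x|\le k+1$, so $u\sim_{k+1} v$ forces $u\sim_k v$. Each $\sim_k$-class thus decomposes into $\sim_{k+1}$-classes, yielding $\abk{k}{x}(n)\le \abk{k+1}{x}(n)$. The same refinement argument applied to the inclusion $\sim_k \subseteq \sim_{=k}$ (the latter only requires agreement on length-$k$ words) gives the first inequality of (2).

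The core step is the second inequality of (2). Here I would use the identity $\sim_k = \bigcap_{i=1}^k \sim_{=i}$: requiring $|u|_x=|v|_x$ for every $x$ with $1\le|x|\le k$ is exactly requiring $u\sim_{=i}v$ for each $i\in\{1,\ldots,k\}$ separately. Consequently the assignment sending a $\sim_k$-class of length-$n$ factors to the tuple of its $\sim_{=i}$-classes $([u]_{\sim_{=i}})_{1\le i\le k}$ is well-defined and injective, producing the bound $\abk{k}{x}(n)\le \prod_{i=1}^k \abeqk{i}{x}(n)$.

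For item (3), Lemma~\ref{lem:equivalent def for k-ab equiv}(2)(a) shows that two length-$n$ words with $n<k$ are $\sim_k$-equivalent if and only if they are equal, so the $\sim_k$-classes of length-$n$ factors of $\infw{x}$ are singletons and $\abk{k}{x}(n) = \#\faset_n(\infw{x}) = \facc{x}(n)$. The only ingredient requiring a moment of thought is the identity $\sim_k = \bigcap_{i=1}^k \sim_{=i}$ invoked in item (2); once that is noted, every inequality becomes an immediate consequence of the definitions and of Lemma~\ref{lem:equivalent def for k-ab equiv}.
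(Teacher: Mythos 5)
Your proof is correct and follows essentially the same route as the paper's: item (1) from the refinement $\sim_{k+1}\subseteq\sim_k$, item (2) from the identity $\sim_k=\bigcap_{i=1}^k\sim_{=i}$ (with the injection into tuples of $\sim_{=i}$-classes made explicit, which the paper leaves implicit), and item (3) from \cref{lem:equivalent def for k-ab equiv}. No substantive difference.
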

\begin{proof}
The first item follows because if $u\sim_{k+1} v$ then $u\sim_k v$.
The second item is true 
as the set of words of length at most $k$ is given by $\cup_{0\le i \le k} A^i$, thus we have $u \sim_k v$ if and only if $u \sim_{= i} v$ for all $i\le k$.
The third item follows by the first item of~\cref{lem:equivalent def for k-ab equiv}.
\end{proof}

\begin{remark}
    In contrast with abelian equivalences, we do not have the implication ``$u\sim_{=(k+1)} v \Rightarrow u\sim_{=k} v$'' for all words $u,v$ and $k\ge 1$.
    For example, we have $0100\sim_{=2}1001$ but $0100 \nsim_{=1} 1001$.
    Therefore we cannot guarantee, as the first item of~\cref{lem:properties and observations for k-ab complexities}, that exact $k$-abelian complexities are increasingly nested with the same argument, i.e., we do not necessarily have $\abeqk{k}{x}(n) \le \abeqk{k+1}{x}(n)$ for all $k,n$. 
\end{remark}

There is a characterization of bounded $k$-abelian complexities, as follows.
Let $C$ be a positive integer.
A sequence $\infw{x}\in A^\NN$ is \emph{$C$-balanced} if, for all factors $u,v$ of $\infw{x}$ of equal length and for every letter $a\in A$, we have $||u|_a - |v|_a| \le C$.
When $C=1$, we usually omit the dependence on $C$, and the word is simply called \emph{balanced}.
We have the following folklore result.

\begin{lemma}
\label{lem:bounded ab iff C-bal}
A sequence $\infw{x}$ has bounded abelian complexity if and only if $\infw{x}$ is $C$-balanced for some positive integer $C$.
\end{lemma}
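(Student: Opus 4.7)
The plan is to prove each direction by a direct counting/sliding argument on Parikh vectors, in each case translating the hypothesis on one quantity into a bound on the other.

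For the easier forward direction, assume that $\infw{x}$ is $C$-balanced. Fix any length $n$ and fix one reference factor $u_0 \in \faset_n(\infw{x})$. By hypothesis, for every letter $a \in A$ and every factor $v \in \faset_n(\infw{x})$, the integer $|v|_a$ lies in the interval $[|u_0|_a - C, |u_0|_a + C]$, which has $2C+1$ values. Since $\pav(v)$ is determined by its $|A|$ coordinates, the number of possible Parikh vectors is bounded by $(2C+1)^{|A|}$, which does not depend on $n$. So $\ab{x}$ is bounded.

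For the converse, suppose that $\ab{x}(n) \le M$ for every $n$, and aim to show that $\infw{x}$ is $(M-1)$-balanced. The key ingredient is a discrete intermediate-value argument: if $u = \infw{x}[i\cdots i+n-1]$ and $v = \infw{x}[j \cdots j+n-1]$ are two factors of length $n$ with $i \le j$, then as we slide the window one step to the right, the count $|w|_a$ of any fixed letter $a$ changes by at most $1$ (we lose one letter at the left and gain one at the right). Consequently, as $m$ runs from $i$ to $j$, the values $|\infw{x}[m\cdots m+n-1]|_a$ take every integer value between $|u|_a$ and $|v|_a$. This produces at least $\bigl||u|_a - |v|_a\bigr| + 1$ distinct length-$n$ Parikh vectors, forcing $\bigl||u|_a - |v|_a\bigr| + 1 \le M$. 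Taking $C = M-1$ yields the $C$-balancedness.

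There is no genuine obstacle here: both implications are elementary once the sliding-window observation is stated. The only point to be careful about is that, in the backward direction, the intermediate-value step really requires $u$ and $v$ to both occur as factors of the same sequence $\infw{x}$ (so that we can interpolate between their starting positions), which is precisely the hypothesis; this also explains why the sliding argument would fail if we only knew the statistics of factors abstractly.
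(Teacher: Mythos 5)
Your proof is correct. The paper states this lemma as a folklore result and gives no proof at all, so there is nothing to compare against; your two-directional argument (the $(2C+1)^{|A|}$ counting bound for the forward direction, and the discrete intermediate-value/sliding-window argument for the converse) is exactly the standard proof from the literature on abelian complexity. The only cosmetic remark is that when $M=1$ your choice $C=M-1=0$ is not a \emph{positive} integer as the statement requires; take $C=\max(M-1,1)$ to be safe, since $0$-balanced trivially implies $1$-balanced.
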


A generalization of $C$-balancedness is the following one.
Let $k$ and $C_k$ be positive integers.
A sequence $\infw{x}\in A^\NN$ is \emph{$(k,C_k)$-balanced} if, for all factors $u,v$ of $\infw{x}$ of equal length and for each $w\in A^k$, we have $||u|_w - |v|_w| \le C_k$.
The boundedness of the generalized abelian complexity is related to the generalized balancedness as follows.

\begin{lemma}[{\cite[Lemma~5.2.]{Karhumaki:2013}}]
\label{lem: bounded ab iff kC balanced}
Let $k$ be a positive integer.
A sequence $\infw{x}$ has bounded $k$-abelian complexity if and only if $\infw{x}$ is $(k, C_k)$-balanced for some positive integer $C_k$.
\end{lemma}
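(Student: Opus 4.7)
The plan is to prove both implications separately, using \cref{lem:properties and observations for k-ab complexities} as the bridge between the $k$-abelian complexity $\abk{k}{x}$ and the exact $i$-abelian complexities $\abeqk{i}{x}$ for $i\le k$.

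For the implication ($\Leftarrow$), I would assume that $\infw{x}$ is $(k,C_k)$-balanced and first upgrade this to $(i,C'_i)$-balancedness for every $i\le k$, with constants $C'_i$ depending only on $k$, $|A|$ and $C_k$. The key observation is the combinatorial identity $|u|_w=\sum_{v\in A^k,\ w\text{ is a prefix of }v}|u|_v+R_u(w)$, valid for every factor $u$ of length $n\ge k$ and every $w\in A^i$, where $R_u(w)\in\{0,\ldots,k-i\}$ counts occurrences of $w$ starting in the last $k-i$ positions of $u$. Subtracting this identity for two length-$n$ factors and invoking the triangle inequality together with $(k,C_k)$-balancedness yields $\big||u|_w-|u'|_w\big|\le |A|^{k-i}C_k+(k-i)$. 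Plugging these bounds into \cref{lem:properties and observations for k-ab complexities}(2) then gives $\abk{k}{x}(n)\le \prod_{i=1}^k(C'_i+1)^{|A|^i}$, a constant independent of $n$.

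For the implication ($\Rightarrow$), I would assume that $\abk{k}{x}(n)\le M$ for all $n$ and derive $(k,M-1)$-balancedness directly. Fix $n$ and $w\in A^k$, and consider the map sending an admissible starting position $\ell$ in $\infw{x}$ to the integer $\varphi(\ell)=|\infw{x}[\ell]\cdots\infw{x}[\ell+n-1]|_w$. Incrementing $\ell$ by one removes a single length-$k$ window at the left of the current factor and appends a single one at the right, so $|\varphi(\ell+1)-\varphi(\ell)|\le 1$. The discrete intermediate value theorem then forces $\varphi$ to attain every integer between $\min\varphi$ and $\max\varphi$, producing that many pairwise non-$\sim_{=k}$-equivalent length-$n$ factors. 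Hence $\max\varphi-\min\varphi+1\le \abeqk{k}{x}(n)\le \abk{k}{x}(n)\le M$, which yields the uniform bound $\big||u|_w-|u'|_w\big|\le M-1$ over all length-$n$ factors $u,u'$ and all $w\in A^k$.

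The main obstacle I anticipate is the forward direction's reduction from length-$k$ balancedness to length-$i$ balancedness for $i<k$. A naive application of $(k,C_k)$-balancedness does not directly control differences of length-$i$ occurrences, because the summation over length-$k$ extensions of $w$ leaves an unavoidable boundary error at the right end of each factor; once this identity and its error term are properly isolated, the remaining estimates reduce to elementary counting and the use of \cref{lem:properties and observations for k-ab complexities}.
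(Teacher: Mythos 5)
Your proof is correct, but note that the paper does not prove this lemma at all: it imports it verbatim from Karhum\"aki et al.\ \cite[Lemma~5.2]{Karhumaki:2013}, recording only the quantitative consequences in the sentence that follows the statement. So the comparison is really between your argument and the standard one behind that citation. Your ($\Rightarrow$) direction is exactly the classical sliding-window argument: shifting the window by one changes $|{\cdot}|_w$ by at most $1$, so the attained values form an integer interval of length at most $\abeqk{k}{x}(n)\le\abk{k}{x}(n)$, and this reproduces the paper's remark that a bound $C_k$ on $\abk{k}{x}$ forces $(k,C_k-1)$-balancedness. Your ($\Leftarrow$) direction is valid but takes a longer route than necessary: you descend from $(k,C_k)$-balancedness to $(i,C'_i)$-balancedness for every $i\le k$ via the prefix-extension identity $|u|_w=\sum_{v}|u|_v+R_u(w)$ (which is correctly set up, error term $0\le R_u(w)\le k-i$ included) so as to invoke item~(2) of \cref{lem:properties and observations for k-ab complexities}. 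The more economical argument uses \cref{lem:equivalent def for k-ab equiv} directly: for $n\ge k$, a $\sim_k$-class of length-$n$ factors is determined by the $\sim_{=k}$-class together with the length-$(k-1)$ prefix, so $\abk{k}{x}(n)\le \facc{x}(k-1)\cdot(C_k+1)^{\facc{x}(k)}$, with no need to control occurrences of words shorter than $k$ at all; this is what yields the sharper constant quoted in the paper. Two cosmetic points: your constant $\prod_{i=1}^k(C'_i+1)^{|A|^i}$ can be tightened by ranging only over $w\in\faset_i(\infw{x})$ rather than all of $A^i$, and the case $n<k$ (where the prefix-extension identity is vacuous) should be dispatched by noting that then $|u|_w\le k$ trivially and $\abk{k}{x}(n)=\facc{x}(n)$ is finite. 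Neither affects correctness.
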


In particular, if $\abk{k}{x}$ is bounded by $C_k$, then $\infw{x}$ is $(k, C_k-1)$-balanced; conversely, if $\infw{x}$ is $(k, C_k)$-balanced, then $\abk{k}{x}\leq (C_k+1)^k$ {\cite[Lemma~5.2.]{Karhumaki:2013}}.
However, these bounds are far from being optimal in general (e.g., see~\cref{thm:Sturmian kk balanced}).

A \emph{morphism} is a map $\tau \colon A^* \rightarrow B^*$ compatible with concatenation, i.e., such that $\tau(uv) = \tau(u) \tau(v)$ for all $u,v\in A^*$. It is completely defined by its restriction $\tau_{|A} \colon A\rightarrow B^*$ to single letters. A \emph{substitution} $\tau\colon A\rightarrow A^*$ is the restriction of a morphism $\tau \colon A^* \rightarrow A^*$. A \emph{fixed point} of a substitution $\tau$ is a sequence $\infw{x}\in A^\NN$ such that $\tau(\infw{x}) = \infw{x}$. A substitution $\tau$ is \emph{prolongable} on a letter $a\in A$ if $\tau(a)=au$ for some $u\in A^*$ and $\lim_{n\rightarrow\infty} \left|\tau^n(a)\right| = +\infty$. The associated \emph{fixed point} $\tau^\omega(a)$ is $\lim_{n\rightarrow\infty} \tau^n(a) = a\prod_{n\geq 0} \tau^n(u)$. The \emph{incidence matrix} of a substitution $\tau \colon A \rightarrow A^*$ is the matrix $M_\tau\in \NN^{A\times A}$, the $(i,j)$ entry of which is $|\tau(a_i)|_{a_j}$ where $A = \{a_1,\ldots, a_n\}$.
A substitution $\tau \colon A \rightarrow A^*$ is \emph{primitive} if the corresponding matrix $M_\tau$ is primitive.

\subsection{Automatic, synchronized, and regular sequences}

An \emph{abstract numeration system with zeros (ANSZ) $\mathcal{N}$} \cite{Lecomte:2000} is a tuple $(L,A,<,0)$ where $A$ is a finite alphabet ordered by $<$ of minimal element $0\in A$ and $L\subseteq A^*$ is an infinite language of valid integer representations containing $\varepsilon$ and such that $w\in L \Leftrightarrow 0w\in L$ for all word $w\in A^*$.
The encoding $\rep_\mathcal{N}(n)$ of an integer $n\in\NN$ is the $n$th element of $L \setminus 0^+ L$ in \emph{radix order}: for all $u,v\in A^*$, let $u < v$ if $|u|<|v|$ or if $|u|=|v|$, $u\neq v$ and $u_i < v_i$ for the smallest $i$ such that $u_i\neq v_i$. The valuation $\val_\mathcal{N}(u)$ of a word $u\in L$ is $\rep_\mathcal{N}^{-1}(v)$ for the only $v\in L\setminus 0^+ L$ such that $u\in 0^* v$. 
Let $\left<.,.\right>$ denote the \emph{canonical isomorphism} between  $\cup_{n\geq 0} \left(A^n\times B^n\right)$ and  $(A\times B)^*$ for all alphabets $A, B$. 
A numeration system is \emph{regular} if both $L$ and the addition relation $\left\{ \left<x,y,z\right> \mid \val_\mathcal{N}(x) + \val_\mathcal{N}(y) = \val_\mathcal{N}(z) \right\}$ form regular languages.

A sequence $\infw{x}\in A^\NN$ is \emph{automatic in an abstract numeration system $\mathcal{N}$} (or simply \emph{$\mathcal{N}$-automatic}) if $\infw{x}$ can be computed by a DFAO in $\mathcal{N}$: the output of the DFAO on input $u\in A^*$ is defined only if $\val_\mathcal{N}(u)$ is defined and in this case it is equal to $\infw{x}[\val_\mathcal{N}(u)]$.
A sequence $s \colon \NN\rightarrow \NN^m$ form a \emph{synchronized sequence in an abstract numeration system $\mathcal{N}$} (or simply \emph{$\mathcal{N}$-synchronized}) if \[\left\{ \left<x,y_1,\ldots, y_m\right> \mid s(\val_\mathcal{N}(x)) = (\val_\mathcal{N}(y_1), \ldots, \val_\mathcal{N}(y_m)) \right\}\] is a regular language.
Finally, a sequence $\infw{x}\in A^\NN$ is \emph{regular in an abstract numeration system $\mathcal{N}$} (or simply \emph{$\mathcal{N}$-regular}) if there exist a row vector $\lambda$, a column vector $\gamma$, and a matrix-valued morphism $\mu \colon A^* \to \CC^{m\times m}$ such that $\infw{x}[n]=\lambda \mu(\rep_{\mathcal{N}}(n))\gamma$ for all $n\in\NN$.
The triple $(\lambda, \mu, \gamma)$ is called a \emph{linear representation} of $\infw{x}$. Among all linear representations computing the same function, representations of minimal dimension are called \emph{reduced representations} (sometimes called minimized in the literature).
These families of sequences are stable under several operations (e.g., sum, external product, and Hadamard product).
For more on these families of sequences, for instance see~\cite{Allouche:2000,Allouche:1992,Allouche:2003,Carpi:2001,Maes:2002,Shallit:1988,Shallit:2023}.



\section{The case of uniformly factor-balanced sequences}
\label{sec:method 1}

Our first approach to the computation of generalized abelian complexities deals with automatic sequences that are uniformly factor-balanced, namely, sequences for which the quantity ${| |u|_w - |v|_w | }$ is uniformly bounded when factors $u$, $v$ and $w$ of the sequence vary with $|u|=|v|$. Under this hypothesis, the generalized abelian equivalence predicate is synchronized and the two-dimensional generalized abelian complexity is regular. 
Since by \cref{lem: bounded ab iff kC balanced} the $k$-abelian complexity is bounded for fixed values of $k$, every $k$-abelian complexity function is also automatic.
We say that a sequence $\infw{x}\in A^\NN$ is \emph{uniformly factor-balanced} if there exists a uniform bound $C$ such that ${\left| |u|_w - |v|_w \right| \le C}$ for all $u,v\in\faset_n(\infw{x})$ for all $w\in\faset(\infw{x})$ and for all $n\in\NN$.

The factors of an automatic sequence are well captured by their appearance inside the sequence given by an index and a length. Let $\infw{x}\interv{i}{i+n}$ denote the length-$n$ factor of $\infw{x}$ starting at position $i$, i.e., $\infw{x}[i]\cdots \infw{x}[i+n-1]$. This leads to the definition of the following relations and functions:
\begin{align*}
\pfeq_\infw{x} &= \left\{(i,j,n) \mid \infw{x}\interv{i}{i+n} = \infw{x}\interv{j}{j+n}\right\};\\
\pabexeq_\infw{x} &= \left\{(i,j,k,n) \mid \infw{x}\interv{i}{i+n+k} \sim_{=k} \infw{x}\interv{j}{j+n+k}\right\};\\
\pabeq_\infw{x} &= \left\{(i,j,k,n) \mid \infw{x}\interv{i}{i+n} \sim_k \infw{x}\interv{j}{j+n}\right\};\\
\pbal_\infw{x}(i,j_1,j_2,k,n) &= \bigl|\infw{x}\interv{j_1}{j_1+n+k}\bigr|_{\infw{x}\interv{i}{i+k}} - \bigl|\infw{x}\interv{j_2}{j_2+n+k}\bigr|_{\infw{x}\interv{i}{i+k}};\\
\pzbal_\infw{x} &= \left\{(i,j_1,j_2,k,n) \mid \pbal_\infw{x}(i,j_1,j_2,k,n) = 0\right\}.
\end{align*}

\begin{lemma}\label{lemma:step balance}
The balance function $\pbal_\infw{x}(i,j_1,j_2,k,n)$ of a uniformly factor-balanced $\mathcal{N}$-automatic sequence $\infw{x}$ is $\mathcal{N}$-automatic.
\end{lemma}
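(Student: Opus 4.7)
The plan is to derive the $\mathcal{N}$-automaticity of $\pbal_\infw{x}$ by establishing two properties: (i) its range is a finite subset of $\ZZ$, and (ii) for each value $v$ in that range, the preimage $\pbal_\infw{x}^{-1}(v) \subseteq \NN^5$ is $\mathcal{N}$-regular. Together, these two properties yield a DFAO computing $\pbal_\infw{x}$.

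First I would establish boundedness. The uniform factor-balance hypothesis provides a constant $C$ such that $\bigl||u|_w - |v|_w\bigr| \le C$ for all same-length factors $u,v \in \faset_n(\infw{x})$ and all $w \in \faset(\infw{x})$. Specializing to $u = \infw{x}\interv{j_1}{j_1+n+k}$, $v = \infw{x}\interv{j_2}{j_2+n+k}$, and $w = \infw{x}\interv{i}{i+k}$ yields the uniform bound $\bigl|\pbal_\infw{x}(i,j_1,j_2,k,n)\bigr| \le C$, so the range lies in the finite set $\{-C,\ldots,C\}$.

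Next I would produce, for each $v$ in the range, a first-order formula $\varphi_v(i,j_1,j_2,k,n)$ equivalent to $\pbal_\infw{x} = v$ in the Büchi-type structure of $\mathcal{N}$ expanded by $\infw{x}$. Since $\infw{x}$ is $\mathcal{N}$-automatic, the factor-equality relation $\pfeq_\infw{x}$ is $\mathcal{N}$-synchronized via the formula $\forall p < n,\ \infw{x}[i+p] = \infw{x}[j+p]$, built from the first-order definable letter predicates. Hence the match predicate $M(p,i,k) \Leftrightarrow \infw{x}\interv{p}{p+k} = \infw{x}\interv{i}{i+k}$ is first-order definable. Assuming WLOG $v \ge 0$, the formula $\varphi_v$ existentially quantifies over $v$ ``excess'' positions $q_1 < \cdots < q_v$ in $[j_1,j_1+n]$ that are matches, and asserts that after removing them the remaining matches of window 1 and all matches of window 2 pair up in an order-preserving bijection; applying the Büchi-type theorem for $\mathcal{N}$ then yields $\mathcal{N}$-regularity of each $\pbal_\infw{x}^{-1}(v)$, and combined with the finite range this proves $\mathcal{N}$-automaticity.

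The main obstacle is encoding the order-preserving bijection in first-order logic without direct access to counting, since the two individual counts are themselves unbounded. The key technical leverage comes from the fact that the uniform factor-balance hypothesis bounds not only the final balance $\pbal_\infw{x}$ but every intermediate running balance on sub-intervals of $[j_1,j_1+n]$ and $[j_2,j_2+n]$; this affords a bounded-state certification of the bijection that can be transcribed into finitely many additional existential witnesses, keeping the formula first-order definable over the Büchi-type structure of $\mathcal{N}$.
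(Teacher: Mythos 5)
Your overall architecture is sound and its first half matches the paper exactly: the uniform factor-balance hypothesis, applied to $u = \infw{x}\interv{j_1}{j_1+n+k}$, $v = \infw{x}\interv{j_2}{j_2+n+k}$ and $w = \infw{x}\interv{i}{i+k}$ (which is always a factor of $\infw{x}$), does bound $\pbal_\infw{x}$ by $C$, and a bounded function with $\mathcal{N}$-regular level sets is $\mathcal{N}$-automatic. The gap is in the second half, which is where all the content of the lemma lives. Your formula $\varphi_v$ is supposed to assert that, after removing $v$ witnessed ``excess'' match positions, the remaining matches in the two windows ``pair up in an order-preserving bijection.'' The existence of an order-preserving bijection between two unbounded definable sets of positions is not a first-order property: it is equivalent to the two sets having equal cardinality on every relevant interval, i.e., to counting, which is exactly what first-order logic over the automatic structure cannot do directly. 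Your closing paragraph concedes this and appeals to a ``bounded-state certification \ldots transcribed into finitely many additional existential witnesses,'' but no such certification is constructed; as written, the proof stops at the point where the real work begins. (The idea is salvageable: since every intermediate running balance is also bounded by $C$ --- your observation here is correct --- the predicate $\pbal_\infw{x}=v$ is recognized by a counter confined to $\{-C,\ldots,C\}$, whose transformation monoid is aperiodic, so the language is star-free and hence first-order definable by McNaughton--Papert. But that argument, or an explicit formula, has to actually be supplied.)

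The paper avoids first-order logic for this step altogether and the contrast is instructive. It defines the synchronized occurrence predicate $\pocc_\infw{x}(i,j,k,n,u)$ (``$j\le u\le j+n$ and $\infw{x}\interv{u}{u+k}=\infw{x}\interv{i}{i+k}$''), then counts accepting paths of its DFA over the free variable $u$ to obtain a linear representation of the occurrence-counting function $\bigl|\infw{x}\interv{j}{j+n+k}\bigr|_{\infw{x}\interv{i}{i+k}}$; regular sequences are closed under difference, so $\pbal_\infw{x}$ gets a linear representation, and since its image is finite by the balancedness hypothesis, the semigroup trick converts that representation into a DFAO. The path-counting step is precisely the non-first-order ingredient your proposal is missing, and it is also what makes the construction effective (it is what the authors actually implement). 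If you want to keep your route, replace the ``order-preserving bijection'' gadget with the aperiodicity argument above; otherwise, route the counting through a linear representation as the paper does.
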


\begin{proof}
Let $\infw{x}$ be $\mathcal{N}$-automatic. The relation $\pfeq_\infw{x}$ is $\mathcal{N}$-synchronized. Thus, the predicate $\pocc_\infw{x}(i,j,k,n,u)$ that tests if $j\le u\le j+n$ and $\pfeq_\infw{x}(i,j,k)$ is also $\mathcal{N}$-synchronized. Given a deterministic finite automaton (DFA) that recognizes $\pocc_\infw{x}(i,j,k,n,u)$, one can count the number of accepting paths for a given tuple $(i,j,k,n)$ to obtain a $\mathcal{N}$-regular linear representation for $\bigl|\infw{x}\interv{j}{j+n+k}\bigr|_{\infw{x}\interv{i}{i+k}}$. Combining the linear representation with itself, one obtains a linear representation for $\pbal_\infw{x}(i,j_1,j_2,k,n)$.
As $\infw{x}$ is uniformly factor-balanced, this linear representation has a finite image. Using the semigroup trick \cite[Section 4.11]{Shallit:2023}, we obtain that $\pbal_\infw{x}(i,j_1,j_2,k,n)$ is $\mathcal{N}$-automatic.
\end{proof}

As the relation $\pzbal_\infw{x}(i,j_1,j_2,k,n)$ simply tests if $\pbal_\infw{x}(i,j_1,j_2,k,n)=0$, we obtain the following result.

\begin{lemma}\label{lemma:step balsync}
If the balance function $\pbal_\infw{x}(i,j_1,j_2,k,n)$ of a sequence $\infw{x}$ is $\mathcal{N}$-automatic then its balancedness relation $\pzbal_\infw{x}(i,j_1,j_2,k,n)$ is $\mathcal{N}$-synchronized.
\end{lemma}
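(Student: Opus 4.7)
The plan is a one-step construction: convert the DFAO computing $\pbal_\infw{x}$ into a DFA for $\pzbal_\infw{x}$ by redesignating the accepting states. By hypothesis, there exists a DFAO $M$ in the numeration system $\mathcal{N}$ which, on padded input of the form $\langle \rep_\mathcal{N}(i), \rep_\mathcal{N}(j_1), \rep_\mathcal{N}(j_2), \rep_\mathcal{N}(k), \rep_\mathcal{N}(n)\rangle$, outputs the integer value $\pbal_\infw{x}(i,j_1,j_2,k,n)$. Padding to a common length across the five coordinates is harmless because the ANSZ axiom $w \in L \Leftrightarrow 0w \in L$ guarantees that prepending zeros affects neither validity nor value.

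From $M$ I build a DFA $M'$ with the same state set, transition function, initial state, and input alphabet, but whose set of accepting states is exactly $\{q \suchthat \tau(q) = 0\}$, where $\tau$ denotes the output function of $M$. By construction, $M'$ accepts a padded tuple $\langle \rep_\mathcal{N}(i), \rep_\mathcal{N}(j_1), \rep_\mathcal{N}(j_2), \rep_\mathcal{N}(k), \rep_\mathcal{N}(n)\rangle$ if and only if $M$ outputs $0$ on it, which by the defining property of $M$ is equivalent to $\pbal_\infw{x}(i,j_1,j_2,k,n) = 0$, i.e.\ to $(i,j_1,j_2,k,n) \in \pzbal_\infw{x}$. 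Thus the language of $M'$ is a regular encoding of the relation $\pzbal_\infw{x}$, which is exactly what is required for $\pzbal_\infw{x}$ to be $\mathcal{N}$-synchronized.

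I do not anticipate any genuine obstacle here: the content of the lemma is the elementary observation that the preimage of a single output value under a DFAO is a regular language. The result is recorded separately from \cref{lemma:step balance} only to isolate the passage from pointwise computation of the balance function to a synchronized representation of its zero set, which is the form needed in order to plug $\pzbal_\infw{x}$ into the subsequent first-order predicates handled by {\tt Walnut}.
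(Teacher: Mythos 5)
Your proposal is correct and matches the paper's (implicit) argument exactly: the paper derives \cref{lemma:step balsync} from the one-line observation that $\pzbal_\infw{x}$ tests whether the automatic function $\pbal_\infw{x}$ equals $0$, i.e.\ that the preimage of a single output value under a DFAO is regular, which is precisely your re-accepting-states construction (and is what the \texttt{Walnut} command \texttt{Dequitri[i][j1][j2][k][n] = @0} implements). No gap.
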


\begin{lemma}\label{lemma:step abcomp}
If the balancedness relation $\pzbal_\infw{x}(i,j_1,j_2,k,n)$ of a sequence $\infw{x}$ is $\mathcal{N}$-synchronized, then the associated abelian equivalence relations $\pabeq_\infw{x}(i,j,k,n)$ and $\pabexeq_\infw{x}(i,j,k,n)$ are $\mathcal{N}$-synchronized and the two-dimensional generalized abelian complexity function $(k,n)\mapsto\abk{k}{\infw{x}}(n)$ is $\mathcal{N}$-regular.
\end{lemma}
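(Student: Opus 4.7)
My plan is to handle the three assertions in order, each time reducing the target object to a first-order formula over the $\mathcal{N}$-synchronized predicates already in hand. For $\pabexeq_\infw{x}(i,j,k,n)$, the key observation is that $|u|_w=|v|_w$ is automatic for any length-$k$ word $w$ that does not occur in $\infw{x}$, so it suffices to test this equality for length-$k$ factors of $\infw{x}$, each of which is of the form $\infw{x}\interv{\ell}{\ell+k}$ for some $\ell\in\NN$. This yields
\[
\pabexeq_\infw{x}(i,j,k,n) \iff \forall \ell\in\NN,\ \pzbal_\infw{x}(\ell,i,j,k,n),
\]
a universally quantified $\mathcal{N}$-synchronized relation, hence itself $\mathcal{N}$-synchronized. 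For $\pabeq_\infw{x}(i,j,k,n)$, I would apply \cref{lem:equivalent def for k-ab equiv}(2) to rewrite it as
\[
(n<k \andd \pfeq_\infw{x}(i,j,n))\ \vee\ (n\ge k \andd \pabexeq_\infw{x}(i,j,k,n-k) \andd \pfeq_\infw{x}(i,j,k-1)),
\]
using that $\pfeq_\infw{x}$ is $\mathcal{N}$-synchronized whenever $\infw{x}$ is $\mathcal{N}$-automatic and that Boolean combinations of synchronized relations remain synchronized.

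For the two-dimensional regularity of $(k,n)\mapsto\abk{k}{\infw{x}}(n)$, I would pick a canonical representative per $\sim_k$-class, namely the leftmost position at which the class occurs. A position $i$ is such a leftmost representative precisely when no earlier position is $\sim_k$-equivalent to it, giving
\[
\abk{k}{\infw{x}}(n) = \#\{i\in\NN \suchthat \forall j<i,\ \neg \pabeq_\infw{x}(i,j,k,n)\}.
\]
This count is finite for every $(k,n)$, being bounded above by $\facc{x}(n)$, and the inner condition is first-order definable over the synchronized relation $\pabeq_\infw{x}$. I then invoke the standard enumeration principle that counting the witnesses $i$ of an $\mathcal{N}$-synchronized predicate in parameters $(k,n)$ produces an $\mathcal{N}$-regular function of $(k,n)$: reading the predicate as a DFA on the product alphabet, the number of accepting values of $i$ is computed by summing weights along accepting paths, which yields a linear representation in $(k,n)$.

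The Boolean and quantifier reductions are routine first-order manipulations, so the main technical point is the last step, namely that the enumeration theorem applies uniformly in both parameters $k$ and $n$, not merely to one-parameter slices, and in an abstract numeration system rather than a positional base. This requires verifying the hypotheses of the general synchronized-counting construction (as developed for Walnut-style abstract numeration systems in \cite{Shallit:2023}) in the two-variable setting; the needed uniformity follows from the $\facc{x}(n)$ bound on the number of witnesses, which is independent of the parameter $k$.
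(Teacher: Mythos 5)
Your proposal is correct and follows essentially the same route as the paper: expressing $\pabexeq_\infw{x}$ as a universal quantification over $\pzbal_\infw{x}$, rewriting $\pabeq_\infw{x}$ via the case split of \cref{lem:equivalent def for k-ab equiv} with the shift $n-k$ and the prefix condition $\pfeq_\infw{x}(i,j,k-1)$, and then counting leftmost representatives of the equivalence classes by the path-counting technique to get a linear representation of $(k,n)\mapsto\abk{k}{\infw{x}}(n)$. The additional remarks on non-occurring words $w$ and on the uniformity of the counting in both parameters are details the paper leaves implicit, but they do not change the argument.
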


\begin{proof}
The relation $\pabexeq_\infw{x}(i,j,k,n)$ can be expressed as $\forall p\; \pzbal_\infw{x}(p,i,j,k,n)$. Following \cref{lem:equivalent def for k-ab equiv}, the relation $\pabeq_\infw{x}(i,j,k,n)$ can be expressed as a disjunction between ${\pfeq_\infw{x}(i,j,n)}$ when $n<k$ and $\pfeq_\infw{x}(i,j,k-1) \land \pabexeq_\infw{x}(i,j,k,n-k)$ when $n\ge k$. Once this relation is $\mathcal{N}$-synchronized, one can define the first occurrence of equivalent factors and from there derive a linear representation for $\abk{k}{\infw{x}}(n)$ using the path-counting technique \cite[Section 9.8]{Shallit:2023}, making the function $(k,n)\mapsto\abk{k}{\infw{x}}(n)$ $\mathcal{N}$-regular.
\end{proof}

Combining the previous lemmas we get the following theorem.

\begin{theorem}\label{thm:first approach}
Let $\infw{x}$ be a uniformly factor-balanced $\mathcal{N}$-automatic sequence.
Its abelian equivalence relation $\pabeq_\infw{x}(i,j,k,n)$ is $\mathcal{N}$-synchronized and its two-dimensional generalized abelian complexity function $(k,n)\mapsto\abk{k}{\infw{x}}(n)$ is $\mathcal{N}$-regular.
\end{theorem}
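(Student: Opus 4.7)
The plan is to simply chain the three preceding lemmas, since they have been set up precisely so that their composition yields the theorem. First I would observe that the hypotheses of \cref{lemma:step balance} are exactly those of the theorem: $\infw{x}$ is $\mathcal{N}$-automatic and uniformly factor-balanced. Applying that lemma gives that the balance function $\pbal_\infw{x}(i,j_1,j_2,k,n)$ is $\mathcal{N}$-automatic.

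Next I would feed this conclusion into \cref{lemma:step balsync}, whose hypothesis is precisely that $\pbal_\infw{x}$ is $\mathcal{N}$-automatic. Its output is that the balancedness relation $\pzbal_\infw{x}(i,j_1,j_2,k,n)$ is $\mathcal{N}$-synchronized. Finally, I would apply \cref{lemma:step abcomp}, whose hypothesis is the $\mathcal{N}$-synchronization of $\pzbal_\infw{x}$ and whose conclusion is exactly the two claims of the theorem: that $\pabeq_\infw{x}(i,j,k,n)$ is $\mathcal{N}$-synchronized, and that the two-dimensional map $(k,n)\mapsto\abk{k}{\infw{x}}(n)$ is $\mathcal{N}$-regular. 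The same chain also yields $\mathcal{N}$-synchronization of $\pabexeq_\infw{x}$ as a side benefit, although the theorem does not mention it.

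Since each of the three lemmas is stated in the exact form of an implication whose conclusion matches the next lemma's hypothesis, there is no genuine obstacle at the level of the theorem: all the real work has been packaged into the lemmas. If anything, the only thing to be careful about is to verify that the uniform factor-balancedness hypothesis is used at the right place, namely at the very first step in \cref{lemma:step balance} where it guarantees the finite image of the linear representation of $\pbal_\infw{x}$ so that the semigroup trick applies. The subsequent two steps are purely logical (existential/universal quantification, disjunction, and path-counting) and require no additional combinatorial hypothesis on $\infw{x}$. Thus the theorem's proof reduces to a two-line citation of the three lemmas in sequence.
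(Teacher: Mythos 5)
Your proposal is correct and matches the paper exactly: the paper's proof is precisely the chaining of \cref{lemma:step balance}, \cref{lemma:step balsync}, and \cref{lemma:step abcomp} in that order, with the uniform factor-balancedness hypothesis consumed in the first step. Nothing further is needed.
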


\subsection{Effective computation}
\label{sec:method 1 effective computation}

Our first approach to compute the generalized abelian complexity is quite naive and direct. It turned out to be quite computer-intensive. We were able to apply this approach only to a limited number of automatic sequences, proving a tight bound on their uniformly factor-balancedness in the process:
\begin{itemize}
    \item Some Sturmian sequences, the generalized abelian complexity of which is well known (see~\cref{thm: ab complexity of Sturmian sequences}):
    \begin{itemize}
        \item The Fibonacci sequence $\infw{f}=\varphi^\omega(0)$ where $\varphi : 0\mapsto 01,\; 1\mapsto 0$; 
        \item The Pell sequence $\infw{p}=\tau^\omega(0)$ where $\tau : 0\mapsto 001,\; 1\mapsto 0$;
    \end{itemize}
    \item The Tribonacci sequence $\infw{t}=\tau^\omega(0)$ where $\tau : 0\mapsto 01,\; 1\mapsto 02,\; 2\mapsto 0$;
    \item Some $k$-uniform fixed point $\infw{b}=\beta^\omega(0)$ where $\beta : 0\mapsto 001,\; 1\mapsto 010$.
\end{itemize}

The implementation combines several tools. The licofage toolkit \cite{licofage} was used to generate Dumont--Thomas numeration systems for fixed points of substitutions. The \texttt{Walnut} theorem prover \cite{Mousavi:2016,Shallit:2023} was used to manipulate first-order formulas and synchronized predicates. Some specific C++ programs were developed on top of the Awali \cite{Awali} library to manipulate regular sequences. In particular, the authors ported the exact rational representation of GMP \cite{GMP} to Awali and wrote an efficient OpenMP parallel reduction to reduce regular sequences in parallel. Experiments were conducted using servers with, respectively, two 24-core Intel Xeon Gold 5220R @2.2GHz processors and 64 GB of RAM, and two 24-core Intel Xeon Gold 6248R @3GHz processors and 256 GB of RAM. In both cases, with hyperthreading, 96 OpenMP threads were available to parallelize the computations.
The implementation follows the previous lemmas and is illustrated below for the Tribonacci sequence $\infw{t}$.

\subparagraph*{Implementing \cref{lemma:step balance}}
\texttt{Walnut} is first used, as follows, to produce a DFA recognizing \verb+occ_tri(i,j1,j2,k,n,u)+ ensuring that all 6 arguments are valid in the numeration system and that $\infw{t}\interv{u}{u+k} = \infw{t}\interv{i}{i+k}$ with $j_1\le u\le j_1+n$.

\begin{lstlisting}
def occ_tri "?msd_tri j1<=u & u<=j1+n & $feq_tri(i,u,k) & j2=j2":    
\end{lstlisting}

The first C++ program loads the DFA and applies the path counting argument to obtain a linear representation for $\bigl|\infw{t}\interv{j_1}{j_1+n+k}\bigr|_{\infw{t}\interv{i}{i+k}}$. Using the optimized Awali parallel code, it produces a reduced linear representation for $\pbal_\infw{t}(i,j_1,j_2,k,n)$ by computing the difference of the previous linear representation with a copy of itself where the arguments $j_1$ and $j_2$ are permuted. Then it applies the semigroup trick. If the sequence is uniformly factor-balanced, the program terminates with an automatic representation of $\pbal_\infw{t}(i,j_1,j_2,k,n)$ providing both a proof of the tightest balancedness bound and a useful DFAO computing $\pbal_\infw{t}$. This step is computer-intensive and might produce massive outputs. For the Tribonacci sequence $\infw{t}$, the computation took about 16 hours with 96 threads and produced a DFAO with 920931 states, proving that $\infw{t}$ has a tight uniform balancedness bound of 2.

\subparagraph*{Implementing \cref{lemma:step balsync}} \texttt{Walnut} is then used to define a predicate to capture the zeros of the DFAO computing $\pbal_\infw{t}(i,j_1,j_2,k,n)$. For the Tribonacci sequence, it took \texttt{Walnut} 75 seconds to compute the corresponding 487964-state DFA.

\begin{lstlisting}
def sametri "?msd_tri Dequitri[i][j1][j2][k][n] = @0":
\end{lstlisting}

\subparagraph*{Implementing \cref{lemma:step abcomp}}
\texttt{Walnut} is then used to derive the two-dimensional abelian equivalence relations and from there the first occurrence of each equivalence class.

\begin{lstlisting}
def abeqextri "?msd_tri Ai $sametri(i,j1,j2,k,n)":
def abeqtri "?msd_tri (n<k & $feq_tri(i,j,n)) 
    | (n>=k & $feq_tri(i,j,k-1) & $abeqextri(i,j,k,n-k))":
def abfirsttri "?msd_tri k>0 & ~Ej j<i & $abeqtri(i,j,k,n)":
\end{lstlisting}

The second C++ program loads the DFA and applies the path counting argument before applying the reduction algorithm to obtain a reduced linear representation for the two-dimensional generalized abelian complexity function $\abk{k}{\infw{t}}(n)$. For the Tribonacci sequence, we obtain a linear representation of dimension 264 with integer coefficients.

\subsection{Checking the validity of the result}\label{subsec:check validity}

A key element of the construction is the DFAO computing $\pbal_\infw{t}$. The validity of the DFAO can be checked inductively with first-order predicates. 
The inductive proof proceeds as follows:
\begin{enumerate}
    \item Assert that $\pbal_\infw{t}(i,j_1,j_2,k,0)$ takes only values $-1$, 0 and 1;
    \item Assert that the value of $\pbal_\infw{t}(i,j_1,j_2,k,0)$ is correct with respect to the equality of factors between $\infw{t}\interv{j_1}{j_1+k}$, respectively $\infw{t}\interv{j_2}{j_2+k}$, and $\infw{t}\interv{i}{i+k}$;
    \item Assert that $\pbal_\infw{t}(i,j_1,j_2,k,n+1) - \pbal_\infw{t}(i,j_1,j_2,k,n)$ takes only values $-1$, 0 and 1;
    \item Assert that $\pbal_\infw{t}(i,j_1,j_2,k,n+1) - \pbal_\infw{t}(i,j_1,j_2,k,n)$ is correct with respect to the equality between $\infw{t}\interv{j_1+n+1}{j_1+n+1+k}$, respectively $\infw{t}\interv{j_2+n+1}{j_2+n+1+k}$, and $\infw{t}\interv{i}{i+k}$.
\end{enumerate}

A detailed \texttt{Walnut} script is provided as an ancillary file along with the arXiv version of the paper. It took us only 45 minutes to check the 920931-state Tribonacci-DFAO.

The validity of the generalized abelian complexity has been experimentally checked against a direct approximation of the function for small values of $k$ and $n$ and against the functions computed using the second approach of~\cref{sec:method 2}.

\subsection{Application to Sturmian sequences}

Sturmian sequences are among the most famous sequences in combinatorics on words.
They have many equivalent definitions, one of which is that they are binary aperiodic sequences with minimal factor complexity, i.e., $\facc{}(n)=n+1$ (for instance, see~\cite[Chapter~2]{Lothaire:2002} for more on these sequences). In particular, with each Sturmian sequence $\infw{x}$, we associate its \emph{slope} defined by $\lim_{n \rightarrow \infty}\frac{|\infw{x}\interv{0}{n}|_1}{n}$.
The $k$-abelian complexity of Sturmian sequences is well-known and was studied in the paper~\cite{Karhumaki:2013} that introduced $k$-abelian complexities.

\begin{theorem}[{\cite[Theorem~4.1]{Karhumaki:2013}}]
\label{thm: ab complexity of Sturmian sequences}
Let $k$ be a positive integer and let $\infw{x}$ be a binary aperiodic sequence.
The sequence $\infw{x}$ is Sturmian if and only if its $k$-abelian complexity satisfies $\abk{k}{x}(n)= n+1$ if $0\le n \le 2k-1$, $\abk{k}{x}(n)= 2k$ if $n\ge 2k$.
\end{theorem}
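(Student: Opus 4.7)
For the forward direction, let $\infw{x}$ be Sturmian, so $\facc{x}(n) = n+1$ and $\infw{x}$ is $1$-balanced. When $0 \le n \le k-1$, \cref{lem:properties and observations for k-ab complexities}(3) gives $\abk{k}{x}(n) = \facc{x}(n) = n+1$. For $n \ge k$, I would apply \cref{lem:equivalent def for k-ab equiv}(2): each $\sim_k$-class of a length-$n$ factor is determined by its length-$(k-1)$ prefix together with its $\sim_{=k}$-class. Sturmian sequences have exactly $\facc{x}(k-1) = k$ length-$(k-1)$ prefixes, each extending to factors of every length $n \ge k-1$. Counting $\sim_{=k}$-classes per fixed prefix reduces, via the length-$k$ sliding block code of $\infw{x}$, to counting Parikh vectors of length-$(n-k+1)$ factors starting with a given letter. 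Because $\infw{x}$ is $1$-balanced, each length-$k$ factor occurs in a length-$n$ window with one of at most two consecutive multiplicities, so for $n \ge 2k$ exactly two Parikh vectors arise per prefix, giving $\abk{k}{x}(n) = 2k$. The transition range $k \le n \le 2k-1$ is handled by direct case analysis: as $n$ grows, the prefixes saturate one by one and the total is exactly $n+1$.

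For the converse, assume the closed form and aim to establish $\facc{x}(n) = n+1$ for all $n$, which characterizes Sturmian among binary aperiodic sequences. For $0 \le n \le k-1$, \cref{lem:properties and observations for k-ab complexities}(3) directly yields $\facc{x}(n) = \abk{k}{x}(n) = n+1$. For $n \ge k$, I would leverage the inequality $\ab{x}(n) \le \abk{k}{x}(n) \le 2k$ from \cref{lem:properties and observations for k-ab complexities}(1), which by \cref{lem:bounded ab iff C-bal} forces $\infw{x}$ to be $C$-balanced for some $C$. The plan is then to sharpen $C = 1$ by contradiction: a $3$-balance violation at a single letter would, after promotion to a suitable length-$k$ factor, produce length-$n$ factors occupying more than $2k$ distinct $\sim_k$-classes for some $n \ge 2k$, contradicting the closed form. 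Binary aperiodicity together with $1$-balancedness then characterizes Sturmian sequences.

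The main obstacle is this sharpening step in the converse: extracting the precise balance bound $C=1$ from the crude complexity inequality $\abk{k}{x}(n) \le 2k$, which on its own only gives bounded abelian complexity. The forward direction, although requiring some bookkeeping in the transition regime $k \le n \le 2k-1$, is essentially a mechanical counting argument once the sliding block code reformulation is in place.
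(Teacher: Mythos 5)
First, a remark on the comparison: the paper does not prove this statement at all --- it is imported verbatim as \cite[Theorem~4.1]{Karhumaki:2013}, so any proof you give is necessarily ``different from the paper's.'' Judged on its own terms, your proposal has genuine gaps in both directions.

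In the forward direction, the decisive counting step rests on a false premise. You claim that because $\infw{x}$ is $1$-balanced, ``each length-$k$ factor occurs in a length-$n$ window with one of at most two consecutive multiplicities.'' But $1$-balancedness controls occurrences of \emph{letters}, not of length-$k$ factors; for factors, Sturmian sequences are only guaranteed to be $(k,k)$-balanced (\cref{thm:Sturmian kk balanced}), and the paper itself exhibits the counterexample $\bigl||01010|_{00}-|00100|_{00}\bigr|=2$ in the Fibonacci word, so already for $k=2$, $n=5$ the factor $00$ takes three multiplicities $\{0,1,2\}$ among length-$5$ factors. Hence ``exactly two Parikh vectors per fixed prefix'' does not follow from your argument; the true count of $2k$ classes requires controlling the \emph{joint} distribution of all length-$k$ factor counts conditioned on the prefix, which is where the real work in Karhum\"aki et al.\ lies. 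The transition range $k\le n\le 2k-1$ is also only asserted (``prefixes saturate one by one''); note that for $n\le 2k-2$ the length-$(k-1)$ prefix and suffix overlap and determine the word, so $\abk{k}{x}(n)=\facc{x}(n)$ there and only $n=2k-1$ needs an argument --- an observation that would have made this part rigorous. In the converse, you correctly identify that the entire difficulty is upgrading the crude bound $\ab{x}(n)\le\abk{k}{x}(n)\le 2k$ (which only yields $C$-balancedness for some $C$) to $1$-balancedness, but you then leave exactly that step as a ``plan by contradiction'' without executing it. Since binary aperiodic $C$-balanced sequences with $C\ge 2$ exist in abundance, this sharpening is the whole content of the converse, and the proposal as written does not establish it. Both halves therefore remain proof sketches with the essential steps missing.
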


In particular, the $k$-abelian complexity of Sturmian sequences is bounded (it is even ultimately constant).
Regarding the generalized balancedness of Sturmian sequences, we have the following result, which is more precise than~\cref{lem: bounded ab iff kC balanced}.

\begin{theorem}[{\cite[Theorem~12]{Fagnot:2002}}]
\label{thm:Sturmian kk balanced}
For any $k\ge 1$, any Sturmian sequence is $(k,k)$-balanced.    
\end{theorem}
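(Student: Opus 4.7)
The plan is to reduce the length-$k$ factor balancedness of a Sturmian sequence $\infw{x}$ to a letter-level balancedness statement via its $k$-sliding-block code, and then to exploit the three-distance theorem for the refined partition that codes it.

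First I would define the $k$-sliding-block code $\infw{x}^{(k)} \in (\faset_k(\infw{x}))^\NN$ by $\infw{x}^{(k)}[i] = \infw{x}[i..i+k-1]$. For any length-$n$ factor $u = \infw{x}[i..i+n-1]$ with $n \ge k$ and any $w \in \faset_k(\infw{x})$, a direct counting argument gives $|u|_w = |u'|_w$, where $u' = \infw{x}^{(k)}[i..i+n-k]$ and $w$ is regarded as a single letter on the right-hand side. Thus, up to a boundary adjustment handling the at most $k-1$ missing positions, the $(k,C)$-balancedness of $\infw{x}$ reduces to the classical $C$-balancedness of $\infw{x}^{(k)}$; the cases $n<k$ are trivial since then $|u|_w=|v|_w=0$.

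Next I would invoke the standard realization of a Sturmian sequence as a coding of an irrational rotation $R_\alpha$ on $\RR/\ZZ$ with respect to a $2$-interval partition $\mathcal{P}=\{I_0,I_1\}$. The code $\infw{x}^{(k)}$ is then the coding of the \emph{same} rotation relative to the refined partition $\mathcal{P}_k = \bigvee_{j=0}^{k-1} R_\alpha^{-j}\mathcal{P}$. By the three-distance theorem, $\mathcal{P}_k$ consists of exactly $k+1$ intervals, whose endpoints all lie on the finite set $\{-j\alpha \bmod 1 : 0 \le j \le k-1\} \cup \partial I_0$.

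Finally, for each cell $J \in \mathcal{P}_k$, I would bound the imbalance of the letter-$J$ count of $\infw{x}^{(k)}$ over windows of fixed length. The visit-count $m \mapsto \#\{0 \le i < m : \{i\alpha+\beta\} \in J\}$ can be written as a difference of two floor functions whose integer coefficients (as multiples of $\alpha$) have absolute value at most $k-1$; comparing two starting points then yields a difference of at most $k$. Equivalently, writing $\mathbf{1}_{\infw{x}^{(k)}[i]=w} = \prod_{j=0}^{k-1}\mathbf{1}_{\infw{x}[i+j]=w_j}$ and telescoping the resulting sum using the $k=1$ balancedness of $\infw{x}$, each of the $k$ coordinates $j$ contributes at most one unit of imbalance. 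The main obstacle I anticipate is ensuring this last bound is tight: a naive product estimate over the $k$ coordinates would blow up exponentially in $k$, and recovering the linear bound $k$ relies crucially on the fact that each cell of $\mathcal{P}_k$ is a single interval on the circle, so that its indicator admits a telescoping (rather than a genuinely multiplicative) decomposition.
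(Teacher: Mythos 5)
The paper does not prove this theorem: it is imported verbatim from Fagnot and Vuillon (Theorem~12 of the cited reference), so there is no internal proof to compare your attempt against. Judged on its own, your proposal is a sound reconstruction of the standard rotation-theoretic argument, and your two reductions are correct: the passage to the sliding-block code is in fact exact (an occurrence of $w$ at position $p$ in $u=\infw{x}\interv{i}{i+n}$ corresponds bijectively to an occurrence of the letter coding $w$ at position $p$ in the length-$(n-k+1)$ coded factor, so no boundary adjustment is needed), and the cells of the refined partition are indeed the $k+1$ arcs cut out by the points $\{-j\alpha \bmod 1 : 0\le j\le k\}$.

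The one step you should not leave at the level of a gesture is the final count, because the sharp constant $k$ is easy to lose there. If you bound the discrepancy of each window separately (``the number of visits to $J$ deviates from $m|J|$ by at most $D$''), you only get a window-to-window fluctuation of $2D$, and the natural $D$ here is about $|p-q|$, which yields $2k$ rather than $k$. The bound $k$ comes from comparing the two windows directly: writing $\mathbf{1}_{[\{-j\alpha\},1)}(\{n\alpha+\beta\})=\lfloor (n+j)\alpha+\beta\rfloor-\lfloor n\alpha+\beta\rfloor-\lfloor j\alpha\rfloor$ and differencing the two endpoints $\{-p\alpha\},\{-q\alpha\}$ of the cell, the count over a window of length $m$ collapses to a difference of two sums of $|p-q|$ floor terms; comparing two windows of the same length then pairs these into $|p-q|\le k$ quantities of the form $\bigl(\lfloor (n+m)\alpha+\beta\rfloor-\lfloor n\alpha+\beta\rfloor\bigr)-\bigl(\lfloor (n'+m)\alpha+\beta\rfloor-\lfloor n'\alpha+\beta\rfloor\bigr)$, each lying in $\{-1,0,1\}$ precisely because $\lfloor (n+m)\alpha+\beta\rfloor-\lfloor n\alpha+\beta\rfloor$ takes only the two values $\lfloor m\alpha\rfloor$ and $\lfloor m\alpha\rfloor+1$ --- which is exactly the $1$-balancedness you wanted to telescope on. With that pairing made explicit your outline closes into a complete proof; it is, in substance, the argument of Fagnot and Vuillon rather than a genuinely new route.
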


For some classes of Sturmian sequences, we have the following result, which turns out to be finer than~\cref{thm:Sturmian kk balanced} in some cases and which is proved by putting together~\cite[Theorem~17]{Vandeth:2000} and the proof of~\cite[Corollary~13]{Fagnot:2002}.

\begin{theorem}
\label{thm:Sturmian max 3d balanced}
Let $\alpha\in(0,1)$ be a real number and $\infw{x}$ be a Sturmian sequence with slope $\alpha$. Let $\beta=\frac{\alpha}{1-\alpha}$ whose continued fraction $[b_0,b_1,b_2,\ldots]$ has bounded partial quotients.
Then, for any $k\ge 1$, the smallest integer $C_k\ge 1$ such that $\infw{x}$ is $(k,C_k)$-balanced is less than or equal to $2 + \max_i b_i$.
\end{theorem}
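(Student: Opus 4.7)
The plan is to derive the statement as a direct combination of the two cited results: \cite[Theorem~17]{Vandeth:2000}, which gives a sharp bound on the abelian imbalance (the $k=1$ case) of a Sturmian sequence in terms of the partial quotients of the continued fraction expansion of an associated slope parameter, and the argument used in the proof of \cite[Corollary~13]{Fagnot:2002}, which reduces the $(k,C_k)$-balance question to a $(1,C)$-balance question on a suitable derived Sturmian sequence.

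First I would unpack Fagnot's reduction. For a Sturmian sequence $\infw{x}$ of slope $\alpha$ and a fixed length-$k$ factor $w$ of $\infw{x}$, the positions carrying $w$ form a Beatty-like set arising from the rotation by $\alpha$ on $\RR/\ZZ$ intersected with an interval whose endpoints are determined by $\alpha$ and $w$ through the classical three-distance description of Sturmian factors. Up to a renormalization of time, the indicator of this set is itself a binary Sturmian sequence, whose slope is a $\ZZ$-affine function of $\alpha$ modulo $1$. The proof of Corollary~13 analyzes how the corresponding 1-balance constant varies with $w$ and singles out $\beta = \alpha/(1-\alpha)$ as the right parameter to use in order to obtain a uniform bound valid for every $k$ and every $w\in A^k$.

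Second I would apply Vandeth's Theorem 17 to the derived Sturmian sequence associated with $\beta$: its 1-balance constant is bounded by $2+\max_i b_i$, where $[b_0,b_1,\ldots]$ is the continued fraction expansion of $\beta$. Transporting this bound back through Fagnot's reduction yields the claimed inequality $C_k\le 2+\max_i b_i$ uniformly in $k$. The main obstacle I expect is precisely the bookkeeping in the reduction step: because different length-$k$ factors $w$ produce different derived Sturmian slopes, one needs Fagnot's case analysis to verify that each such slope has partial quotients bounded by those of $\beta$, so that Vandeth's bound applied with parameters $b_i$ really covers all $w$ at once. It is at this point that the boundedness hypothesis $\max_i b_i<\infty$ enters crucially to guarantee a finite, $k$-independent upper bound.
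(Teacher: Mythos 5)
Your proposal follows exactly the route the paper takes: the paper offers no independent argument for this theorem and states only that it is ``proved by putting together~\cite[Theorem~17]{Vandeth:2000} and the proof of~\cite[Corollary~13]{Fagnot:2002},'' which is precisely the reduction-plus-transport scheme you describe (Fagnot--Vuillon's recoding of the occurrences of a length-$k$ factor into a derived Sturmian sequence, followed by Vandeth's $1$-balance bound $2+\max_i b_i$ in terms of the partial quotients of $\beta$). Your flagged concern --- verifying that the derived slopes arising from all $w\in A^k$ have partial quotients controlled by those of $\beta$, uniformly in $k$ --- is exactly the substantive content of that combination.
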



\begin{remark}
Recall that our assumptions require the sequence of interest to be substitutive.
It is known which Sturmian sequences are fixed points of substitutions; see~\cite{Crisp:1993} and~\cite[Section~2.3.6]{Lothaire:2002}.   
\end{remark}

\subsubsection{The Fibonacci sequence}
\label{sec:Fibonacci}

Applying \cref{thm:first approach} to the \emph{Fibonacci sequence} $\infw{f}$, the fixed point of the \emph{Fibonacci substitution} $\varphi\colon 0\mapsto 01, 1\mapsto 0$, confirms \cref{thm: ab complexity of Sturmian sequences} and provides a tight bound for its balancedness, improving on \cref{thm:Sturmian max 3d balanced} bound from 3 to 2, since the sequence $\infw{f}$ is a Sturmian sequence with slope $\alpha=\frac{3-\sqrt{5}}{2}$, giving $\beta=\frac{\sqrt{5}-1}{2}=[0,\overline{1}]$. The computation is fast and the minimal DFA for $\pbal_\infw{f}$ has only 19134 states. A careful examination of $\pbal_\infw{f}$ with \texttt{Walnut} gives the following new results (also see~\cref{sec:appendix-C}).

\begin{theorem}
\label{thm:Fib k2balanced}
Let $\infw{f}$ be the Fibonacci sequence, fixed point of the substitution $0 \mapsto 01, 1\mapsto 0$.
For each $k\geq 2$, the smallest integer $C_k\ge 1$ such that $\infw{f}$ is $(k,C_k)$-balanced is $C_k=2$.
\end{theorem}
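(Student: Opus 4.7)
The plan is to leverage the $19134$-state DFAO computing the balance function $\pbal_\infw{f}(i,j_1,j_2,k,n)$ that was constructed via \cref{lemma:step balance} and invoked in \cref{sec:Fibonacci}. By construction, this DFAO outputs $|u|_w - |v|_w$ where $u=\infw{f}\interv{j_1}{j_1+n+k}$, $v=\infw{f}\interv{j_2}{j_2+n+k}$, and $w=\infw{f}\interv{i}{i+k}$. As $i$ ranges over $\NN$, the word $w$ ranges over every length-$k$ factor of $\infw{f}$, while $j_1,j_2,n$ range over every pair of equal-length factors of length at least $k$. Length-$k$ words that are not factors contribute $|u|_w=|v|_w=0$ and are irrelevant, while equal-length factors shorter than $k$ also trivially satisfy $|u|_w=|v|_w=0$. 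Thus $(k,C_k)$-balancedness of $\infw{f}$ is equivalent to $|\pbal_\infw{f}(i,j_1,j_2,k,n)|\le C_k$ for all valid $i,j_1,j_2,n$.

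To establish the upper bound $C_k\le 2$ for every $k\ge 2$, I would feed \texttt{Walnut} a predicate of the form
\[
\forall i\,\forall j_1\,\forall j_2\,\forall k\,\forall n\;\bigl(k\ge 2 \Rightarrow \pbal_\infw{f}(i,j_1,j_2,k,n)\in\{-2,-1,0,1,2\}\bigr),
\]
expressed by disjoining over the five possible output values of the DFAO exactly as in the \texttt{sametri}-style snippet of \cref{sec:method 1 effective computation}. Since $\pbal_\infw{f}$ is Fibonacci-automatic, this predicate is decidable and a single \texttt{Walnut} run yields the bound uniformly in $k$.

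For the matching lower bound $C_k\ge 2$, I would then ask \texttt{Walnut} to decide
\[
\forall k\;\bigl(k\ge 2\Rightarrow \exists i\,\exists j_1\,\exists j_2\,\exists n\;\pbal_\infw{f}(i,j_1,j_2,k,n)=2\bigr).
\]
A positive answer witnesses, for every $k\ge 2$, the existence of equal-length factors $u,v$ and a length-$k$ factor $w$ with $||u|_w-|v|_w|=2$, hence $\infw{f}$ is not $(k,1)$-balanced. Combining the two items gives $C_k=2$.

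The main obstacle is purely computational: both predicates are manipulated on the large $\pbal_\infw{f}$-DFAO, but this automaton is roughly fifty times smaller than its Tribonacci counterpart, so the runs should terminate in minutes, in the same spirit as the inductive consistency check of \cref{subsec:check validity}. As a safety net, one can extract explicit shortest witnesses for small $k$ (e.g., $k=2,3,4$) from the synchronized existential predicate so that the lower bound can be independently verified by hand from the definition of $\pbal_\infw{f}$.
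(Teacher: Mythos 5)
Your proposal is correct and follows essentially the same route as the paper: Appendix~\ref{sec:appendix-C} proves the upper bound by the single \texttt{Walnut} query \texttt{eval b1fib "?msd\_fib Ai,j1,j2,k,n Dequifib[i][j1][j2][k][n] <= @2"} on the automatic $\pbal_\infw{f}$ DFAO (your five-valued disjunction is equivalent by the $j_1\leftrightarrow j_2$ antisymmetry), and the lower bound by existential witness predicates of exactly the kind you describe (e.g., the explicit witness $\bigl||01010|_{00}-|00100|_{00}\bigr|=2$ for $k=2$ together with the \texttt{unb1fib}-style queries for larger $k$).
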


The general balancedness of a sequence may vary for different factors. For instance, in the Fibonacci sequence $\infw{f}$, we have $||01010|_{00}-|00100|_{00}|=2$, while $||u|_w-|v|_w|\le 1$ for $w\in\{01,10\}$ and any two factors $u,v$ of the same length of $\infw{f}$.
We thus introduce the following new notion. 

\begin{definition}
Let $k,C$ be positive integers.
A sequence $\infw{x}$ is \emph{totally $(k,C)$-unbalanced} if for every length-$k$ word $w$, there exist two factors $u,v$ of $\infw{x}$ with equal length such that $||u|_w-|v|_w|>C$. 
\end{definition}

From the previous discussion, we have already established that the Fibonacci sequence $\infw{f}$ is neither $(2,1)$-balanced nor totally $(2,1)$-unbalanced. Indeed, with \texttt{Walnut}, we determine that $\infw{f}$ is $(2,1)$-balanced on $01,10$ but not on $00$. Moreover, we also prove that $\infw{f}$ is not $(3,1)$-balanced nor totally $(3,1)$-unbalanced. 
More generally, we may prove the following (see~\cref{sec:appendix-C}).

\begin{theorem}
\label{thm: balanced Fib}
Let $\infw{f}$ be the Fibonacci sequence, fixed point of the substitution $0 \mapsto 01, 1\mapsto 0$.
For each $k\ge 4$, and only for these, $\infw{f}$ is totally $(k,1)$-unbalanced.
\end{theorem}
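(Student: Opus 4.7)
The plan is to lift everything into the automatic setting provided by the first approach. From \cref{sec:method 1 effective computation}, we have the $19134$-state DFAO computing $\pbal_\infw{f}(i,j_1,j_2,k,n)$ in the Fibonacci numeration system, and by \cref{thm:Fib k2balanced} its image is contained in $\{-2,-1,0,1,2\}$ for all $k\ge 2$ (and in $\{-1,0,1\}$ for $k=1$, by Sturmian $(1,1)$-balancedness). When $w=\infw{f}\interv{i}{i+k}$, $u=\infw{f}\interv{j_1}{j_1+n+k}$, and $v=\infw{f}\interv{j_2}{j_2+n+k}$, the quantity $\bigl||u|_w-|v|_w\bigr|$ equals $|\pbal_\infw{f}(i,j_1,j_2,k,n)|$. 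Hence the property that $\infw{f}$ is totally $(k,1)$-unbalanced is captured exactly by the first-order predicate
\[
\Phi(k) \;:=\; \forall i,\ \exists j_1,j_2,n :\ \pbal_\infw{f}(i,j_1,j_2,k,n) \ge 2 \ \lor\ \pbal_\infw{f}(i,j_1,j_2,k,n) \le -2.
\]

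For the forward direction, I would hand \texttt{Walnut} the formula $\forall k \ge 4,\ \Phi(k)$, built on top of the DFAO for $\pbal_\infw{f}$. Since $\pbal_\infw{f}$ is msd-Fibonacci automatic and $\Phi$ is first-order in this automatic structure, the check reduces to a DFA emptiness test. For the converse "only for these" direction, I would ask \texttt{Walnut} to decide $\neg \Phi(k)$ for $k \in \{1,2,3\}$ and, in each case, extract an explicit witness length-$k$ factor $w$ that is $(k,1)$-balanced (i.e., such that $|\pbal_\infw{f}(i,j_1,j_2,k,n)|\le 1$ for all $j_1,j_2,n$ with $i$ fixed by the chosen occurrence of $w$). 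The case $k=1$ is already immediate from Sturmian $(1,1)$-balancedness; for $k=2$ the factors $01$ and $10$ identified in the discussion preceding the theorem provide the witnesses; the case $k=3$ requires \texttt{Walnut} to pick out a suitable length-$3$ factor among the four factors $\{001,010,100,101\}$ of $\infw{f}$.

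The main obstacle I anticipate is the size of the intermediate automata generated by eliminating the nested quantifier block $\forall i\,\exists j_1,j_2,n$ over a $19134$-state base DFAO: even after internal minimization, the existential block alone may produce a very large transient. However, the analogous construction was already carried out successfully to obtain \cref{thm:Fib k2balanced} and for the inductive soundness check of $\pbal_\infw{f}$ in \cref{subsec:check validity}, so I expect the verification to complete within the same computational budget, with a detailed \texttt{Walnut} script deferred to the appendix.
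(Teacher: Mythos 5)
Your proposal is correct and matches the paper's own argument: the paper likewise takes the precomputed Fibonacci DFAO for $\pbal_\infw{f}$ and has \texttt{Walnut} decide the first-order predicate $\forall i\,\exists j_1,j_2,n:\ \pbal_\infw{f}(i,j_1,j_2,k,n)>1$ (equivalent to your $\Phi(k)$ by the antisymmetry of $\pbal$ in $j_1,j_2$), verifying in a single command that it holds exactly for $k\ge 4$ rather than splitting the two directions as you do. The only cosmetic difference is that the paper phrases both directions as one biconditional \texttt{eval} instead of separate checks for $k\ge 4$ and $k\in\{1,2,3\}$.
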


\subsubsection{The Pell sequence}
\label{sec:Pell seq}

Applying \cref{thm:first approach} to the \emph{Pell sequence} $\infw{p}$, indexed~\cite[\seqnum{A171588}]{Sloane}, the fixed point of the \emph{Pell substitution} $\tau\colon 0\mapsto 001, 1\mapsto 0$, confirms \cref{thm: ab complexity of Sturmian sequences} and provides a tight bound for its balancedness, improving on \cref{thm:Sturmian max 3d balanced} bound from 4 to 3, since the sequence $\infw{p}$ is a Sturmian sequence with slope $\alpha=\frac{2-\sqrt{2}}{2}$, giving $\beta=\sqrt{2}-1=[0,\overline{2}]$.  The computation is fast and the minimal DFA for $\pbal_\infw{p}$ has only 28713 states. A careful examination of $\pbal_\infw{p}$ with \texttt{Walnut} gives the following new results (also see~\cref{sec:appendix-C}/supplementary material of the paper).

\begin{theorem}
\label{thm:Pell k3balanced}
Let $\infw{p}$ be the Pell sequence, fixed point of the substitution $\tau\colon 0\mapsto 001, 1\mapsto 0$.
For each $k\geq 10$, the smallest integer $C_k\ge 1$ such that $\infw{p}$ is $(k,C_k)$-balanced is $C_k=3$.
\end{theorem}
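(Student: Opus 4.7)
The plan is to mirror the strategy already used for the Fibonacci sequence in Theorem \ref{thm:Fib k2balanced}. First I would construct, by applying Theorem \ref{thm:first approach} to the Pell substitution $\tau$, the DFAO computing $\pbal_\infw{p}(i,j_1,j_2,k,n)$ in the Dumont--Thomas numeration system associated with $\tau$. This is exactly the object produced in~\cref{sec:method 1 effective computation}: the paper records a minimal DFA with $28713$ states, whose very existence certifies that $\infw{p}$ is uniformly factor-balanced, and a pass over its outputs shows that the image of $\pbal_\infw{p}$ is contained in $\{-3,-2,-1,0,1,2,3\}$, hence $\infw{p}$ is $(k,3)$-balanced for every $k$.

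Once the DFAO is available, the theorem reduces to two first-order statements in the Dumont--Thomas numeration system of $\tau$, both decidable by \texttt{Walnut}:
\begin{enumerate}
    \item (Upper bound.) For every $k \ge 10$ and all $i,j_1,j_2,n$, $|\pbal_\infw{p}(i,j_1,j_2,k,n)| \le 3$.
    \item (Tightness.) For every $k \ge 10$, there exist $i,j_1,j_2,n$ with $|\pbal_\infw{p}(i,j_1,j_2,k,n)| = 3$.
\end{enumerate}
The first predicate follows already from uniform $3$-balancedness of $\infw{p}$ (no quantification over $k$ is even needed). The second is expressed by quantifying existentially over the five arguments and asserting that the DFAO outputs $3$ or $-3$; the resulting formula is evaluated by \texttt{Walnut} on the appropriate synchronizing automaton. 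Together they give $C_k = 3$ for all $k\ge 10$. I would additionally run \texttt{Walnut} for each $k \in \{1,\ldots,9\}$ separately to verify that the cutoff is sharp, i.e., that strictly smaller values of $C_k$ suffice in these cases, thereby justifying the lower threshold "$k \ge 10$" appearing in the statement.

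The main obstacle is the construction of $\pbal_\infw{p}$ itself: as in~\cref{lemma:step balance}, one must build a linear representation for the number of occurrences of a factor, subtract two copies, observe its image is finite (this is where uniform factor-balancedness of $\infw{p}$ enters crucially, and is guaranteed since $\beta = \sqrt{2}-1 = [0,\overline{2}]$ has bounded partial quotients, cf.~\cref{thm:Sturmian max 3d balanced}), and then apply the semigroup trick. After that, everything is routine predicate manipulation. As with the Fibonacci case, a complete \texttt{Walnut} script realizing these steps is deferred to~\cref{sec:appendix-C}.
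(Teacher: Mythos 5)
Your proposal matches the paper's proof: the paper builds the DFAO for $\pbal_\infw{p}$ exactly as in \cref{lemma:step balance} (path counting, difference of linear representations, semigroup trick, yielding the 28713-state automaton) and then certifies the bound with the \texttt{Walnut} command \texttt{eval b1pell} asserting $\pbal_\infw{p}(i,j_1,j_2,k,n)\le 3$ universally, with the tightness and the $k\ge 10$ threshold checked by further predicates on the same DFAO (relegated to the supplementary material). Your two additional checks (existence of a witness achieving $3$ for each $k\ge 10$, and smaller constants for $k\le 9$) are precisely the ``careful examination'' the paper alludes to, so the approach is essentially identical.
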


\begin{theorem}
\label{thm: unbalanced Pell}
Let $\infw{p}$ be the Pell sequence, fixed point of the substitution $0 \mapsto 001, 1\mapsto 0$.
For each $k\ge 6$, and only for these, $\infw{p}$ is totally $(k,1)$-unbalanced. It is not totally $(k,2)$-unbalanced for any $k\in\NN$.
\end{theorem}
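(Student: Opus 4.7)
The plan is to mirror the \texttt{Walnut}-based strategy used for \cref{thm: balanced Fib}, now applied to the DFAO for $\pbal_\infw{p}(i,j_1,j_2,k,n)$ produced in the first-method computation of \cref{sec:Pell seq}. Since \cref{thm:Pell k3balanced} supplies the uniform bound $|\pbal_\infw{p}| \le 3$, the balance function is automatic in the Pell Dumont--Thomas numeration system and every condition of the form $|\pbal_\infw{p}(\cdot)| \ge c$ or $|\pbal_\infw{p}(\cdot)| \le c$ is a regular predicate.

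The first step is to formalize the two claims as first-order predicates. Parameterizing a length-$k$ factor $w$ by its starting position $i$ inside $\infw{p}$ (which implicitly restricts $w$ to actual factors of $\infw{p}$, as required by the definition, since any length-$k$ word that is not a factor would trivially prevent total unbalancedness), the property "\,$\infw{p}$ is totally $(k,1)$-unbalanced" becomes
\[
\Phi_1(k)\;:\; \forall i \;\; \exists j_1, j_2, n \;\; |\pbal_\infw{p}(i,j_1,j_2,k,n)| \ge 2 ,
\]
while the negation of "\,totally $(k,2)$-unbalanced" reads
\[
\Phi_2(k)\;:\; \exists i \;\; \forall j_1, j_2, n \;\; |\pbal_\infw{p}(i,j_1,j_2,k,n)| \le 2 .
\]
Feeding $\Phi_1$ and $\Phi_2$ to \texttt{Walnut} produces DFAs over \texttt{msd\_pell} that accept exactly the sets of $k$ for which each statement holds. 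The first part of the theorem then reduces to checking that the DFA for $\Phi_1$ accepts precisely the Pell representations of integers $k \ge 6$; this handles both directions at once, including the five exceptional cases $k \in \{1,\ldots,5\}$. The second part reduces to verifying that the DFA for $\Phi_2$ accepts every $k \in \NN$, i.e., that after minimization it is the trivial one-state automaton recognizing all valid Pell representations.

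The conceptual work is minor, since both predicates are built from $\pbal_\infw{p}$ together with standard arithmetic and fit the \texttt{Walnut} framework directly. The main obstacle is computational: the DFAO for $\pbal_\infw{p}$ already has $28\,713$ states, and the quantifier alternations $\forall\exists$ in $\Phi_1$ and $\exists\forall$ in $\Phi_2$ can lead to substantial blowups during intermediate projections. Careful ordering of the quantifications and repeated minimization, as in the Fibonacci treatment of \cref{sec:Fibonacci}, should keep the construction tractable, and a full \texttt{Walnut} script carrying out the verification would accompany the paper as an ancillary file alongside those for \cref{thm: balanced Fib} in~\cref{sec:appendix-C}.
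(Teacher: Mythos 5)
Your proposal matches the paper's own proof: the paper verifies exactly these predicates in \texttt{Walnut} over the precomputed DFAO for $\pbal_\infw{p}$ (see~\cref{sec:appendix-C}), checking that the ``totally $(k,1)$-unbalanced'' predicate holds iff $k\ge 6$ and that the ``totally $(k,2)$-unbalanced'' predicate defines the empty set (the contrapositive of your $\Phi_2$). The only cosmetic differences are that the paper writes $\pbal > c$ rather than $|\pbal|\ge c+1$ (equivalent since swapping $j_1,j_2$ negates $\pbal$) and tests emptiness rather than universality for the second claim.
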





\subsection{Application to the Tribonacci sequence}
\label{sec:Tribonacci}

We consider the \emph{Tribonacci sequence} $\infw{t}=010201001\cdots$, the fixed point starting with $0$ of the \emph{Tribonacci substitution} $\tau\colon 0\mapsto 01, 1\mapsto 02, 2 \mapsto 0$. This sequence satisfies $\facc{\infw{t}}(n)=2n+1$ and is called an \emph{episturmian sequence}~\cite{Glen-2009}, a well-known generalization of Sturmian words.
Applying \cref{thm:first approach} to this sequence provides several new results for this well-studied sequence.

\begin{theorem}
Let $\infw{t}$ be the Tribonacci sequence, fixed point of $0\mapsto 01, 1\mapsto 02, 2 \mapsto 0$.
The two-dimensional generalized abelian complexity function $(k,n)\mapsto \abk{k}{\infw{t}}(n)$ is regular in the Tribonacci numeration system. It admits a reduced linear representation of dimension 264.
\end{theorem}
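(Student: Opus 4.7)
The plan is to obtain the result as a direct application of \cref{thm:first approach} to the Tribonacci sequence $\infw{t}$, followed by an explicit effective computation to identify the dimension of a reduced linear representation. First, I would note that $\infw{t}$ is automatic in the Dumont--Thomas numeration system $\mathcal{N}$ associated to the Tribonacci substitution $\tau$: this system is regular, a DFAO producing $\infw{t}$ can be generated (e.g., with the licofage toolkit), and the synchronized predicate $\pfeq_{\infw{t}}$ expressing factor equality is readily definable in \texttt{Walnut}. Consequently the whole machinery of \cref{lemma:step balance,lemma:step balsync,lemma:step abcomp} becomes available as soon as uniform factor-balancedness is established.

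The main obstacle is precisely this uniform factor-balancedness. To prove it, I would follow the implementation strategy of \cref{sec:method 1 effective computation}: build the \texttt{Walnut} predicate \verb+occ_tri(i,j1,j2,k,n,u)+ encoding the occurrences of $\infw{t}\interv{i}{i+k}$ inside $\infw{t}\interv{j_1}{j_1+n+k}$, apply the path-counting argument to obtain an $\mathcal{N}$-regular linear representation for the occurrence count $\bigl|\infw{t}\interv{j_1}{j_1+n+k}\bigr|_{\infw{t}\interv{i}{i+k}}$, subtract the analogue with $j_1$ and $j_2$ exchanged to get a linear representation for $\pbal_{\infw{t}}(i,j_1,j_2,k,n)$, and then invoke the semigroup trick of \cite[Section~4.11]{Shallit:2023}. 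Termination of the semigroup construction with a finite image is the crux: it would both certify that $\infw{t}$ is uniformly factor-balanced and produce a DFAO for $\pbal_{\infw{t}}$. As reported, this step yields a balancedness bound of $2$ and a DFAO of $920931$ states.

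Once $\pbal_{\infw{t}}$ is automatic, \cref{lemma:step balsync} gives that $\pzbal_{\infw{t}}$ is $\mathcal{N}$-synchronized, and \cref{lemma:step abcomp} then provides $\mathcal{N}$-synchronized predicates for $\pabexeq_{\infw{t}}$ and $\pabeq_{\infw{t}}$ by combining $\pzbal_{\infw{t}}$ and $\pfeq_{\infw{t}}$ according to the disjunction of \cref{lem:equivalent def for k-ab equiv}. Defining the first-occurrence predicate $\pabeq_{\infw{t}}$-minimal in $i$ (as in the \verb+abfirsttri+ formula of \cref{sec:method 1 effective computation}) and applying the path-counting technique of \cite[Section~9.8]{Shallit:2023} yields an $\mathcal{N}$-regular linear representation of the two-dimensional function $(k,n)\mapsto \abk{k}{\infw{t}}(n)$, which settles the regularity claim by \cref{thm:first approach}.

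To validate the announced dimension $264$, I would finally run the parallel reduction algorithm on the raw linear representation produced by the previous step, as implemented in the Awali-based C++ pipeline described in \cref{sec:method 1 effective computation}, using exact rational arithmetic via GMP. The output is a reduced representation with integer coefficients, whose dimension counts the linearly independent states reachable from the initial vector. A cross-check of correctness, beyond the validity test for $\pbal_{\infw{t}}$ sketched in \cref{subsec:check validity}, can be performed by comparing the values of $\abk{k}{\infw{t}}(n)$ predicted by the resulting linear representation against direct enumeration for small $(k,n)$ and against the second construction of \cref{sec:method 2}.
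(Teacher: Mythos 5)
Your proposal follows exactly the paper's route: the theorem is obtained by applying \cref{thm:first approach} to $\infw{t}$, where the crux is the termination of the semigroup-trick computation certifying uniform factor-balancedness (producing the 920931-state DFAO for $\pbal_\infw{t}$ with bound 2), followed by the path-counting and Awali-based reduction yielding the dimension-264 representation. This matches the paper's implementation described in \cref{sec:method 1 effective computation} and its validation in \cref{subsec:check validity}; no discrepancies.
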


\begin{theorem}
Let $\infw{t}$ be the Tribonacci sequence, fixed point of $0\mapsto 01, 1\mapsto 02, 2 \mapsto 0$.
Both first difference functions $(k,n)\mapsto \Delta_k\abk{k}{\infw{t}}(n) = \abk{k+1}{\infw{t}}(n)-\abk{k}{\infw{t}}(n)$ and  $(k,n)\mapsto \Delta_n\abk{k}{\infw{t}}(n) =\abk{k}{\infw{t}}(n+1)-\abk{k}{\infw{t}}(n)$ are automatic in the Tribonacci numeration system and thus bounded. \cref{fig:tribo differences} depicts the first values of these functions.
\end{theorem}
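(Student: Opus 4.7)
The plan is to build on the previous theorem, which provides a reduced Tribonacci-regular linear representation of dimension 264 for the two-dimensional map $(k,n)\mapsto \abk{k}{\infw{t}}(n)$. Since the class of $\mathcal{N}$-regular sequences is stable under argument shifts and under the Hadamard difference, both $\Delta_k\abk{k}{\infw{t}}(n)$ and $\Delta_n\abk{k}{\infw{t}}(n)$ admit Tribonacci-regular linear representations obtained by forming the direct sum of the base representation with copies in which $k$ (respectively $n$) is incremented by one, and then subtracting. Concretely, I would reuse the same C++/Awali pipeline described in \cref{sec:method 1 effective computation}: load the 264-dimensional representation, build the shift-and-subtract representation, and run the parallel reduction to obtain a reduced representation for each of the two difference functions.

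The second step is to promote these regular representations to automatic ones. Following \cref{lemma:step balance} and \cite[Section 4.11]{Shallit:2023}, this is done by the \emph{semigroup trick}: one iteratively closes the set of matrices $\mu(w)$ under the morphism action and checks that the resulting set of image vectors $\lambda\mu(w)$ is finite. If termination occurs, one reads off a DFAO computing the corresponding difference function. Equivalently, the same result can be stated and checked in \texttt{Walnut} by defining the predicates \verb+def dk "?msd_tri m = Abcomp[k+1][n] - Abcomp[k][n]"+ and an analogous one for $\Delta_n$, then asking \texttt{Walnut} to certify that these are synchronized functions whose output image is finite (it suffices to verify that the output alphabet produced by the minimization is finite and to output its maximum).

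The main obstacle is establishing that the two difference sequences actually have finite image, since this is what separates regularity from automaticity; the semigroup trick only terminates when this holds. A priori, uniform factor-balancedness bounds $\abk{k}{\infw{t}}(n)$ for each fixed $k$ (by \cref{lem: bounded ab iff kC balanced}), but it does not immediately yield a uniform bound independent of $k$. A combinatorial sanity check supporting boundedness is that incrementing $n$ by one modifies the Parikh vector of length-$k$ factors of any given window by at most one coordinate, and incrementing $k$ by one refines each $k$-abelian class into finitely many $(k+1)$-classes whose number is controlled by the uniformly factor-balanced bound $C_k=2$ proved for $\infw{t}$ earlier. These observations make finiteness of the image plausible, but the clean proof is computational: once the semigroup trick terminates, the resulting DFAO is itself the certificate, and \cref{fig:tribo differences} exhibits the explicit finite range of values. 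If the 264-dimensional representation turns out to be too large for the trick to converge in reasonable time, a fallback is to reduce first modulo each candidate bound $B$ and verify via \texttt{Walnut} that the original difference equals its reduction, using the synchronized version of the two-dimensional complexity function produced in \cref{lemma:step abcomp}.
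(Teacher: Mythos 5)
Your proposal matches the paper's approach: the paper obtains the two difference functions from the $264$-dimensional Tribonacci-regular linear representation of $(k,n)\mapsto\abk{k}{\infw{t}}(n)$ by shift-and-subtract (closure of regular sequences under these operations), and then certifies automaticity --- and hence boundedness --- computationally via termination of the semigroup trick, exactly as you describe. Your candid observation that no a priori bound independent of $k$ is supplied and that the terminating computation itself is the certificate is consistent with how the paper treats this result.
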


\begin{corollary}
Let $\infw{t}$ be the Tribonacci sequence, fixed point of $0\mapsto 01, 1\mapsto 02, 2 \mapsto 0$.
For each $k\ge 1$, the $k$-abelian complexity function $n \mapsto \abk{k}{\infw{t}}(n)$ is automatic in the Tribonacci numeration system and can be constructed efficiently and recursively on $k$.
\end{corollary}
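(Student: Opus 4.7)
My plan is to derive the corollary directly from the two preceding theorems of this subsection, combined with the standard promotion of a bounded regular sequence to an automatic one, and a product-then-minimize construction handling the recursion on $k$.

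First, I would fix $k \ge 1$ and obtain the slice $n \mapsto \abk{k}{\infw{t}}(n)$ as a Tribonacci-regular sequence by specializing the two-dimensional function $(k,n) \mapsto \abk{k}{\infw{t}}(n)$ from the first theorem at a constant value of $k$. Concretely, given a reduced linear representation $(\lambda, \mu, \gamma)$ on paired inputs $\langle \rep(k),\rep(n)\rangle$, one evaluates the matrix-valued morphism $\mu$ along the fixed $k$-coordinate to absorb its effect into a modified row vector, producing a genuine one-dimensional linear representation in $n$ of dimension at most $264$.

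Next, to promote regularity to automaticity I would use the uniform factor-balance of $\infw{t}$ with tight bound $C = 2$ established in \cref{sec:method 1 effective computation}. By \cref{lem: bounded ab iff kC balanced}, $\infw{t}$ is $(k,2)$-balanced for every $k$, so $\abk{k}{\infw{t}}(n) \le 3^k$ for all $n$; the slice therefore has finite image. Applying the semigroup trick \cite[Section~4.11]{Shallit:2023} to its linear representation then terminates and yields a DFAO computing $n \mapsto \abk{k}{\infw{t}}(n)$, giving the automaticity claim. For the recursive and efficient construction on $k$, I would exploit the identity
\[\abk{k+1}{\infw{t}}(n) = \abk{k}{\infw{t}}(n) + \Delta_k \abk{k}{\infw{t}}(n),\]
together with the second theorem, which ensures that the two-dimensional difference $(k,n) \mapsto \Delta_k \abk{k}{\infw{t}}(n)$ is Tribonacci-automatic (hence bounded), so every one-dimensional slice in $n$ is automatic as well. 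Since Tribonacci-automatic sequences with finite codomain are closed under pointwise addition via the standard product construction on DFAOs, the DFAO for $\abk{k+1}{\infw{t}}$ may be assembled from those of $\abk{k}{\infw{t}}$ and $\Delta_k \abk{k}{\infw{t}}$. Starting from the ordinary abelian complexity $\abk{1}{\infw{t}}$, iterating this operation provides the claimed recursive construction.

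The main obstacle I anticipate is computational rather than conceptual: the product construction can in principle cause the number of states to blow up at each step. However, the uniform balance bound forces the codomain of every slice to remain bounded by $3^k$, and performing minimization at each iteration (available directly in \texttt{Walnut}) keeps the intermediate DFAOs of manageable size in practice, as confirmed by the experiments of \cref{sec:method 1 effective computation}.
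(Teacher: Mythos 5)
Your proposal is correct and follows essentially the route the paper intends: the corollary is stated without a separate proof precisely because it combines the two preceding theorems the way you do --- slicing the regular two-dimensional function at a fixed $k$, using boundedness of that slice (via the uniform balance bound $C_k=2$ and \cref{lem: bounded ab iff kC balanced}) to promote the linear representation to a DFAO by the semigroup trick, and invoking the automaticity of $(k,n)\mapsto\Delta_k\abk{k}{\infw{t}}(n)$ for the recursive construction in $k$. One minor caveat: since the two input tracks of the linear representation are read in parallel rather than sequentially, fixing $k$ is not literally an absorption of $\mu$ into the row vector, but the standard restriction of the representation to the regular set of inputs whose first track spells $0^*\rep(k)$; this is routine and does not affect your argument.
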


It is well known that the Tribonacci sequence $\infw{t}$ is $(1,2)$-balanced (see \cite{Richomme:2010}), and the proof of this result is nontrivial. It is also known that the sequence is $(k,C_k)$-balanced for all $k\ge 2$  (see \cite{Berthe:2019}), but to our knowledge, no precise bound was hitherto known. A careful analysis of the 920931-state DFAO of $\pbal_\infw{t}$ provides a uniform bound (also see~\cref{sec:appendix-C}).

\begin{theorem}
\label{thm:Trib k2balanced}
Let $\infw{t}$ be the Tribonacci sequence, fixed point of $0\mapsto 01, 1\mapsto 02, 2 \mapsto 0$.
For each $k\geq 1$, the smallest integer $C_k\ge 1$ such that $\infw{t}$ is $(k,C_k)$-balanced is $C_k=2$.
\end{theorem}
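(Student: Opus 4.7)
The plan is to exploit the DFAO computing $\pbal_\infw{t}(i,j_1,j_2,k,n)$ produced by the pipeline of \cref{lemma:step balance,lemma:step balsync}. Since the excerpt already reports that the first-approach implementation terminates on the Tribonacci sequence, we have in hand a finite-state device over the Tribonacci numeration system that evaluates $\pbal_\infw{t}$ on every valid tuple, and this DFAO will be the sole ``external'' input to the proof: everything else reduces to first-order queries decidable by \texttt{Walnut}.

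Before reasoning with this DFAO I would first certify it, following the inductive scheme of \cref{subsec:check validity}: check that $\pbal_\infw{t}(i,j_1,j_2,k,0)\in\{-1,0,1\}$ with the correct value in terms of the equality predicates $\pfeq_\infw{t}(i,j_1,k)$ and $\pfeq_\infw{t}(i,j_2,k)$, then check that the first difference in $n$ also lies in $\{-1,0,1\}$ and has the correct value in terms of $\pfeq_\infw{t}$ applied to the factors about to enter the sliding window. Only after this guarantees the DFAO really encodes $\pbal_\infw{t}$ would I use it to extract balancedness bounds.

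For the upper bound $C_k\le 2$, I would ask \texttt{Walnut} to decide the single predicate $\forall i\,\forall j_1\,\forall j_2\,\forall k\,\forall n:\; -2\le \pbal_\infw{t}(i,j_1,j_2,k,n)\le 2$. A TRUE verdict gives a bound uniform in $k$, hence $(k,2)$-balancedness for all $k\ge 1$ at once, and is in fact essentially the content of the already-claimed ``tight uniform balancedness bound of 2'' in \cref{sec:method 1 effective computation}. For the sharpness $C_k\ge 2$, I would decide $\forall k\,(k\ge 1 \Rightarrow \exists i\,\exists j_1\,\exists j_2\,\exists n:\; \pbal_\infw{t}(i,j_1,j_2,k,n)=2)$; a TRUE verdict means that for every length-$k$ window $w=\infw{t}\interv{i}{i+k}$ size there exist two equal-length factors of $\infw{t}$ whose counts of $w$ differ by $2$, so $(k,1)$-balancedness fails, and combined with the upper bound this forces $C_k=2$. (The small-$k$ cases could alternatively be handled by exhibiting explicit witness factors in $\infw{t}$ and reported in \cref{sec:appendix-C}.)

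The main obstacle is purely computational: the DFAO for $\pbal_\infw{t}$ already has $920931$ states, and the upper-bound query has five universal quantifiers over a comparison automaton built from this DFAO, while the sharpness query interleaves a universal quantifier over $k$ with four existentials, both of which can cause substantial intermediate blow-up in \texttt{Walnut}. A secondary concern is hygiene of the Tribonacci representations: the predicates must enforce that $i,j_1,j_2,n$ encode positions at which the factors $\infw{t}\interv{j_\ell}{j_\ell+n+k}$ are actually defined, so that the difference equaling $\pm 2$ genuinely certifies unbalancedness for the pattern $\infw{t}\interv{i}{i+k}$ rather than an artifact of out-of-range indexing.
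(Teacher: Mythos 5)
Your proposal is correct and follows essentially the same route as the paper: the authors also build and inductively certify the $920931$-state DFAO for $\pbal_\infw{t}$, then discharge the upper bound with the single \texttt{Walnut} query $\forall i,j_1,j_2,k,n\;\pbal_\infw{t}(i,j_1,j_2,k,n)\le 2$ and the sharpness with a query asserting that for every $k\ge 1$ some tuple attains a value exceeding $1$ (in fact they prove the stronger total $(k,1)$-unbalancedness of \cref{thm:Trib k2unbalanced}, of which your existential version is the needed special case). The only cosmetic difference is that they bound $\pbal_\infw{t}$ only from above, the lower bound $-2$ following from the antisymmetry of $\pbal_\infw{t}$ in $j_1,j_2$.
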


\begin{theorem}
\label{thm:Trib k2unbalanced}
Let $\infw{t}$ be the Tribonacci sequence, fixed point of $0\mapsto 01, 1\mapsto 02, 2 \mapsto 0$.
For each $k\geq 1$, the sequence $\infw{t}$ is totally $(k,1)$-unbalanced.
\end{theorem}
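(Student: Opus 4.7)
The plan is to use the DFAO for $\pbal_\infw{t}(i,j_1,j_2,k,n)$ that was constructed in \cref{sec:method 1 effective computation} and validated inductively in \cref{subsec:check validity}. The key observation is that the value $\pbal_\infw{t}(i,j_1,j_2,k,n)$ depends on $i$ only through the length-$k$ factor $w=\infw{t}\interv{i}{i+k}$, so universal quantification over starting positions $i$ is equivalent to universal quantification over the length-$k$ factors of $\infw{t}$. Moreover, by \cref{thm:Trib k2balanced}, every value of $\pbal_\infw{t}$ lies in $\{-2,-1,0,1,2\}$, hence proving total $(k,1)$-unbalancedness reduces to exhibiting, for every such factor $w$, a witnessing tuple with $|\pbal_\infw{t}|=2$.

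Concretely, I would dispatch to \texttt{Walnut} the first-order predicate
\[
\forall i \,\forall k \,\bigl( k \ge 1 \;\Rightarrow\; \exists j_1 \,\exists j_2 \,\exists n :\; \pbal_\infw{t}(i,j_1,j_2,k,n) = 2 \;\lor\; \pbal_\infw{t}(i,j_1,j_2,k,n) = -2 \bigr).
\]
Operationally, one restricts the DFAO for $\pbal_\infw{t}$ to its output-$2$ and output-$(-2)$ states, takes the union, existentially projects away $j_1$, $j_2$, and $n$ to obtain a DFA of successful pairs $(i,k)$, complements to a DFA of failing pairs, existentially projects away $i$, intersects with $\{k \ge 1\}$, and verifies emptiness of the resulting single-variable DFA. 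Since each of these operations is standard in \texttt{Walnut}, the correctness of the implementation follows from the correctness of the DFAO for $\pbal_\infw{t}$ already established in \cref{subsec:check validity}.

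The main obstacle is strictly computational: the DFAO for $\pbal_\infw{t}$ has 920931 states, and the determinizations following each projection step can dramatically inflate intermediate automata. However, the same infrastructure already handled the harder direction in \cref{thm:Trib k2balanced} and the inductive correctness checks, so the hardware described in \cref{sec:method 1 effective computation} should suffice. The resulting \texttt{Walnut} script is provided as an ancillary file with the arXiv version of the paper, and the conclusion can be independently sanity-checked for small $k$ by enumerating length-$k$ factors of $\infw{t}$ and exhibiting explicit witnessing pairs $(u,v)$ of equal-length factors for each.
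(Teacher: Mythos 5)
Your proposal is correct and is essentially the paper's own proof: the paper evaluates in \texttt{Walnut} the predicate $\forall k\,(\,(\forall i\,\exists n\,\exists j_1,j_2:\ \pbal_\infw{t}(i,j_1,j_2,k,n)>1)\Leftrightarrow k\ge 1\,)$ against the validated 920931-state DFAO for $\pbal_\infw{t}$, which is the same computation you describe (your disjunction $\pbal=2\lor\pbal=-2$ is equivalent to $\pbal>1$ under the existential quantification of $j_1,j_2$, by antisymmetry in those arguments). The only cosmetic difference is that you spell out the automaton-level operations (projection, complementation, emptiness) that \texttt{Walnut} performs internally.
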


\begin{figure}
    \centering
    \begin{subfigure}[t]{0.4\textwidth}
    \includegraphics[width=\textwidth]{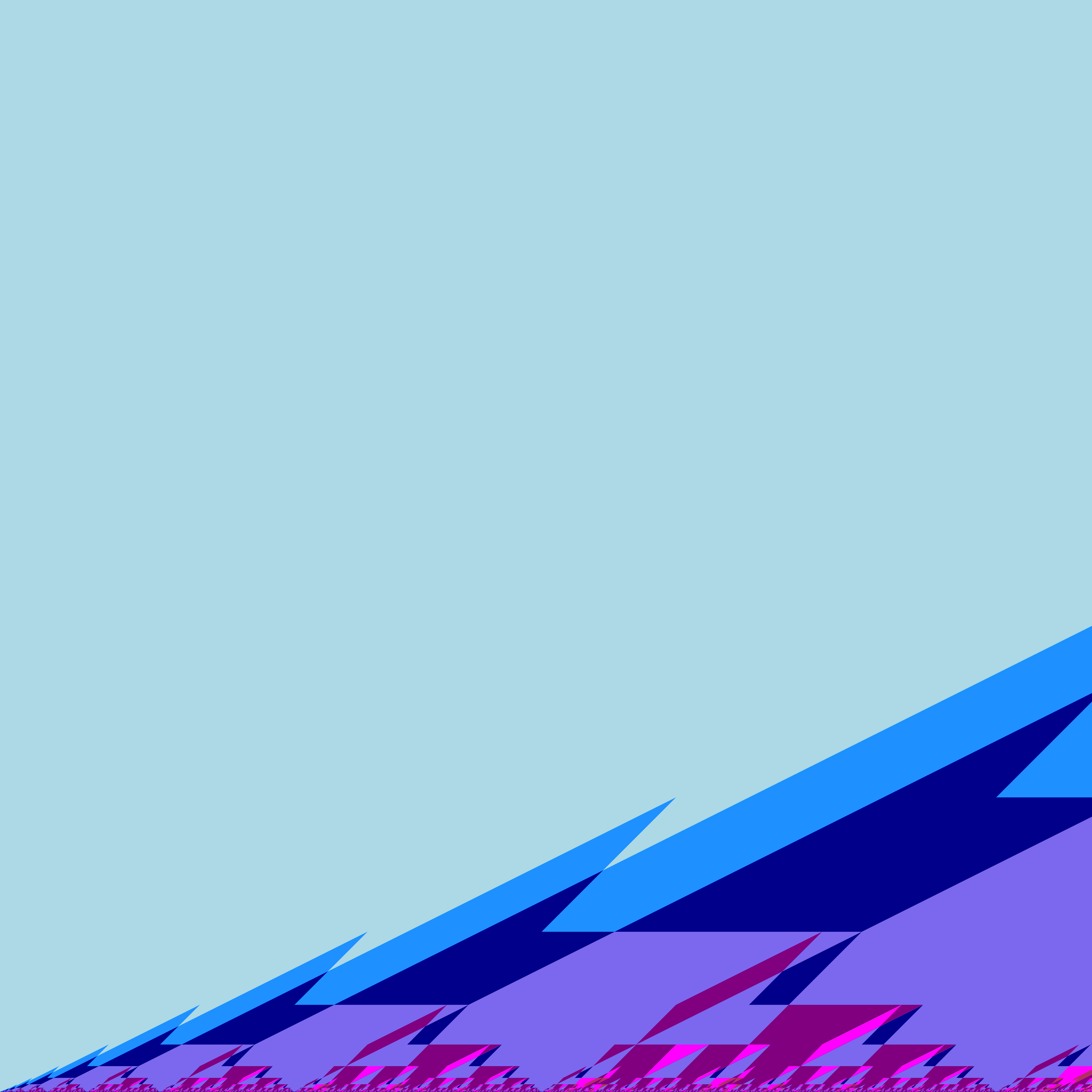}
    \caption{Two-dimensional automatic $\Delta_k\abk{k}{\infw{t}}(n)$.}        
    \end{subfigure}
    \hfil
    \begin{subfigure}[t]{0.4\textwidth}
    \includegraphics[width=\textwidth]{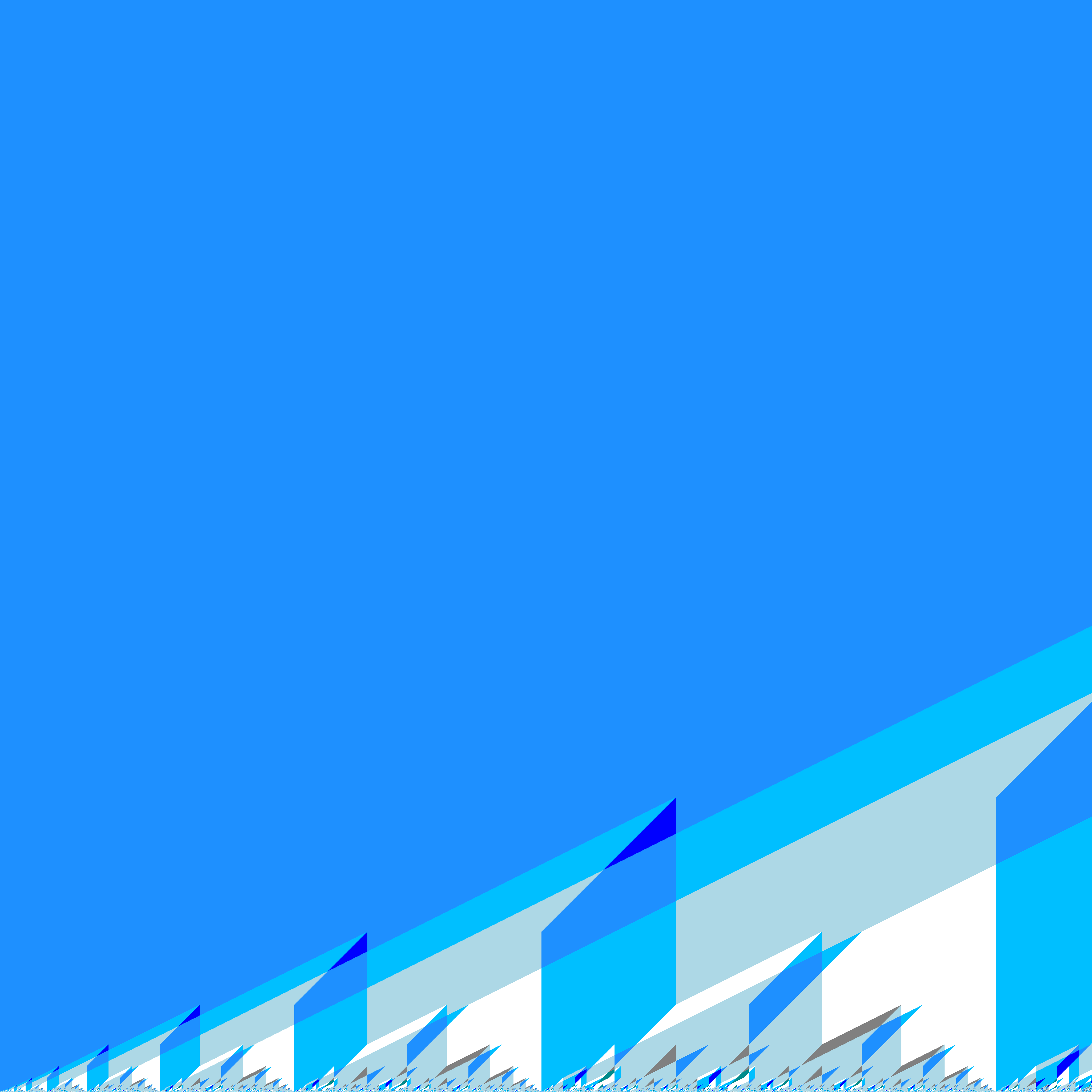}
    \caption{Two-dimensional automatic $\Delta_n\abk{k}{\infw{t}}(n)$.}        
    \end{subfigure}
    \caption{The Tribonacci sequence has bounded discrete derivatives (first 4096 values, $n$ on the horizontal axis, $k$ on the vertical axis, one color per value, a different palette for each picture).}
    \label{fig:tribo differences}
\end{figure}

\section{The case of Pisot substitutions}
\label{sec:method 2}

In this section, we handle the computation of the $k$-abelian complexity of fixed points of substitutions satisfying an assumption different from the uniform balancedness of the previous section. This second approach relies on ultimately Pisot substitutions and makes use of the concept of so-called sequence automata introduced by Carton et al.~\cite{Carton:2024}. 

A \emph{Pisot-Vijayaraghavan number} $\theta$ is an algebraic integer, which is the dominant root of its minimal monic polynomial $P(X)$ with integer coefficients, where $P(X)$ is irreducible over $\ZZ$ and admits $n$ complex roots $\theta_1$, \ldots, $\theta_n$, all distinct, satisfying $\theta=\theta_1 > 1 > |\theta_2| \geq \cdots \geq |\theta_n| > 0$. A substitution is of \emph{Pisot type}, or simply \emph{Pisot}, if the characteristic polynomial of its incidence matrix is the minimal polynomial of a Pisot number.
A substitution is of \emph{ultimately Pisot type}, or simply \emph{ultimately Pisot}, if the characteristic polynomial of its incidence matrix is the minimal polynomial of a Pisot number $\theta$ multiplied by a power of $X$, i.e., $X^m \cdot P_\theta(X)$ for some $m\ge 0$.
Such a polynomial is called \emph{ultimately Pisot}.
(Other terms exist to designate this property, e.g., \emph{Pisot up to a shift} in~\cite{Carton:2024}.)
Combining~\cite[Lemma~2.2]{Richomme:2011} and~\cite[Theorem~13]{Adamczewski:2003} gives the next result. 

\begin{theorem}\label{thm:bounded ab compl}
    The abelian complexity of the fixed point of a prolongable primitive substitution of ultimately Pisot type is bounded by a constant.
\end{theorem}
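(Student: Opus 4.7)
The plan is to chain together the two references indicated just before the statement. From~\cite[Theorem~13]{Adamczewski:2003} one extracts that the fixed point of a prolongable primitive substitution of Pisot type is $C$-balanced for some positive integer $C$. Then~\cite[Lemma~2.2]{Richomme:2011}, which is the folklore equivalence already recorded in this paper as~\cref{lem:bounded ab iff C-bal}, immediately converts $C$-balancedness into boundedness of the abelian complexity. So at the outermost level the proof reads: ``apply Adamczewski to get $C$-balancedness, apply Richomme to conclude''.

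The only real work is to check that Adamczewski's theorem applies under the weaker \emph{ultimately} Pisot hypothesis, where the characteristic polynomial of $M_\tau$ is $X^m \cdot P_\theta(X)$ rather than $P_\theta(X)$ itself. I would first observe that, thanks to primitivity together with Perron--Frobenius, $M_\tau$ still has a simple dominant real eigenvalue $\theta>1$ with a strictly positive eigenvector; the extra factor $X^m$ merely introduces a zero eigenvalue of algebraic multiplicity $m$, whose modulus $0$ lies strictly below the Pisot threshold $1$. Thus the essential spectral gap condition driving Adamczewski's proof --- a unique dominant eigenvalue $\theta>1$, with every other eigenvalue of modulus strictly less than $1$ --- is preserved verbatim in the ultimately Pisot setting.

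I would then trace through the standard Adamczewski-style argument to confirm it still closes: for two length-$n$ factors $u,v$ of $\infw{x}$, decompose $\pav(u)-\pav(v)$ via the Dumont--Thomas expansion into a sum of terms of the form $\pav(\tau^k(a))-\pav(\tau^k(b))$; each such term is a vector whose component along the dominant eigenspace cancels (because $|\tau^k(a)|=|\tau^k(b)|$ when $a,b$ occur at comparable positions in the expansion), leaving only components in the contracting eigenspaces, which include the nilpotent part associated with the $X^m$ factor and the contracting eigenspaces of $P_\theta$. All such components decay geometrically with $k$, so the total difference is uniformly bounded in $n$. This yields a uniform constant $C$ witnessing that $\infw{x}$ is $C$-balanced.

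The main obstacle is purely a verification: reading Adamczewski's proof closely enough to be confident that no step implicitly requires invertibility of $M_\tau$ or the absence of a zero eigenvalue. Because the mechanism relies only on the spectral gap between $\theta$ and the rest of the spectrum (all inside the open unit disk), and because nilpotent blocks contribute only polynomially before being killed by the geometric decay of the contracting directions, I expect the extension to ultimately Pisot substitutions to go through with no new combinatorial input, and the final one-line application of~\cref{lem:bounded ab iff C-bal} then delivers the theorem.
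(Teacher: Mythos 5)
Your proposal is correct and matches the paper's own justification, which consists precisely of the sentence preceding the theorem: combine \cite[Theorem~13]{Adamczewski:2003} (balancedness of the fixed point) with \cite[Lemma~2.2]{Richomme:2011}, i.e., \cref{lem:bounded ab iff C-bal}. Your extra care in checking that the ultimately Pisot case is covered is well placed but resolves as you expect, since Adamczewski's hypothesis is the spectral gap condition (all non-dominant eigenvalues of modulus strictly less than $1$), which the added zero eigenvalues do not disturb.
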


The \emph{addressing automaton} $\mathcal{A}_\varphi$ associated with the fixed point $\varphi^\omega(a)$ of a prolongable substitution $\varphi \colon A \rightarrow A^*$ is the DFA with state set~$A$, alphabet $\{0,1,\ldots,n-1\}$ where $n=\max_{b\in A}|\varphi(b)|$, initial state~$a$, final states~$A$ and whose transitions are defined by $\varphi$ as $\delta(b,i) = \varphi(b)_i$ for all $b\in A$ and $i \in \{0,\ldots,|\varphi(b)|-1\}$.
Let $L_\varphi = L(\mathcal{A}_\varphi)\setminus 0^+A^*$ where $L(\mathcal{A}_\varphi)$ is the language recognized by $\mathcal{A}_\varphi$. The \emph{Dumont--Thomas numeration system $\mathcal{N}_\varphi$} associated with the fixed point $\varphi^\omega(a)$ is the ANSZ $(L_\varphi, \{0,1,\ldots,n-1\}, <, 0)$ where $<$ is the usual order on $\NN$. The fixed point $\varphi^\omega(a)$ is automatic for $\mathcal{N}_\varphi$: the addressing automaton provides a valid DFAO when equipped with the output function $\pi \colon q\mapsto q$. See~\cite{Carton:2024} for more details.

\begin{theorem}[Carton et al.~\cite{Carton:2024}]
\label{thm:dtpisot}
    The Dumont--Thomas numeration system associated with a fixed point of a prolongable substitution of ultimately Pisot type is regular.
\end{theorem}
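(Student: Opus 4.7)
The plan is to verify the two ingredients in the definition of a regular abstract numeration system: regularity of the language $L_\varphi$ of valid representations and recognizability of the addition relation $\left\{\left<x,y,z\right> \mid \val_{\mathcal{N}_\varphi}(x)+\val_{\mathcal{N}_\varphi}(y)=\val_{\mathcal{N}_\varphi}(z)\right\}$. The first is immediate by construction: $L_\varphi = L(\mathcal{A}_\varphi)\setminus 0^+ A^*$, and both sides are regular, so their set-theoretic difference is as well.

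The substantive work concerns addition. I would first unfold the Dumont--Thomas valuation as a linear positional expansion. Setting $U_n^{(b)} = |\varphi^n(b)|$ for each letter $b\in A$, a telescoping argument along a path of $\mathcal{A}_\varphi$ expresses $\val_{\mathcal{N}_\varphi}(w)$ as an $\NN$-linear combination of these quantities, with coefficients read off the digits of $w$ and indices controlled by the visited states. The key observation is that the vector $\left(U_n^{(b)}\right)_{b\in A}$ is driven by the transpose of the incidence matrix $M_\varphi$; hence each scalar sequence $\left(U_n^{(b)}\right)_{n\ge 0}$ obeys a linear recurrence whose characteristic polynomial divides the ultimately Pisot polynomial $X^m\cdot P_\theta(X)$ of $M_\varphi$. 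From here I would invoke the Frougny--Solomyak/Bruy\`ere--Hansel framework for Pisot linear numeration systems: the dominance of the Pisot root $\theta>1$ over the remaining roots $|\theta_j|<1$ bounds the carries arising in digitwise addition, yielding a finite-state addition transducer. The factor $X^m$ is absorbed by treating the finitely many low-order weights as a transient, which is routine.

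The main obstacle is that the Dumont--Thomas setting is state-dependent: the weight attached to each digit position is not a single scalar sequence but depends on the current state visited in $\mathcal{A}_\varphi$, so Frougny's constructions do not apply off the shelf. I would address this by taking as the state space of the addition transducer a product of (a) a copy of $\mathcal{A}_\varphi$ tracking the current branch, (b) a bounded carry register whose range is controlled by the Pisot estimates on $|\theta_j|$, and (c) a synchronization component ensuring the two summand representations consume compatible branches. The delicate step is proving that the carry stays uniformly bounded simultaneously along every branch of $\mathcal{A}_\varphi$, which is precisely the refinement of the classical Pisot-numeration argument carried out by Carton et al.~\cite{Carton:2024}; accordingly I would cite their construction rather than redevelop it from scratch.
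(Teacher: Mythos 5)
Your proposal is correct in substance, and it bottoms out in the same place the paper does: both you and the paper ultimately delegate the hard technical step to Carton et al.~\cite{Carton:2024}. The difference is one of packaging. You follow the classical Frougny-style route: expand the Dumont--Thomas valuation positionally with weights $U^{(b)}_n=|\varphi^n(b)|$, observe these satisfy a linear recurrence with ultimately Pisot characteristic polynomial via Cayley--Hamilton, and then build an explicit addition transducer whose state carries a bounded remainder, a copy of $\mathcal{A}_\varphi$, and a synchronization component. The paper instead encapsulates all of this in the formalism of linear recurrence sequence automata: the valuation is the series $\nu_\varphi$ of the LRSA $\mathcal{S}_\varphi$, the class is closed under synchronized addition and external product with controlled recurrence polynomial, so the entire addition relation is expressed at once as $\supp(\nu_\varphi \sadd \nu_\varphi \sadd -\nu_\varphi)$, and a single invocation of \cref{prop:supp} (support of the series of an LRSA with ultimately Pisot recurrence polynomial is regular) finishes the argument. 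What the paper's route buys is that the state-dependence of the weights --- the obstacle you correctly flag as the reason Frougny's construction does not apply off the shelf --- is absorbed into the definition of a sequence automaton from the start, so no ad hoc product construction or separate carry-boundedness argument needs to be assembled; what your route buys is a more concrete picture of where the Pisot dominance actually enters (bounding the carry register). Since you explicitly cite the Carton et al. construction for the uniform carry bound rather than claiming to prove it, there is no gap.
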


 Let $\sequ \in \ZZ^\NN$ be an \emph{integer sequence} and let $\zseq$ denote the constant sequence everywhere equal to $0$. The \emph{shift operator} $\sshift \colon \ZZ^\NN \rightarrow \ZZ^\NN$ removes the first element of a sequence, i.e., $(\sshift\sequ)[n] = \sequ[n+1]$ for all $\sequ\in\ZZ^\NN$ and $n\in\NN$.

A \emph{sequence automaton} is a \emph{partial DFA} $(Q, A, \delta, q_0, F)$ equipped with a \emph{partial vector map} $\pi \colon Q\times A \rightarrow \ZZ^\NN$. Formally, $Q$ is the \emph{finite set of states}, $A$ is the \emph{finite alphabet of symbols}, $\delta \colon Q\times A\rightarrow Q$ is the \emph{partial transition map}, $q_0\in Q$ is the \emph{initial state}, $F\subseteq Q$ is the set of \emph{accepting states}. The transition and the vector map share the same domain. The transition map and vector map are inductively extended from symbols to words as follows, for all $q\in Q$, $u\in A^*$ and $a\in A$:
\begin{align*}
\delta(q,\varepsilon) &= q,  &\pi(q,\varepsilon) &= \zseq,\\
\delta(q, ua) &= \delta(\delta(q, u), a), &\pi(q, ua) &= \sshift\pi(q, u) + \pi(\delta(q, u), a).
\end{align*}

The class of \emph{$\ZZ$-rational series} is a well-studied class of functions from finite words to $\ZZ$. For a general introduction, see~\cite{Berstel:2011}. It admits finite linear representations.
Similarly to regular sequences, $\ZZ$-rational series are closed under several operations.
In particular, they are also closed under \emph{synchronized addition} $f\oplus g \colon \left<u,v\right> \mapsto f(u) + g(v)$ for all $\ZZ$-rational series $f \colon A^* \rightarrow \ZZ$ and $g \colon B^* \rightarrow \ZZ$. The \emph{support} $\supp(f)$ of a rational series $f$ is the language $A^* \setminus f^{-1}(0)$.

Let the \emph{series} $s_\mathcal{A}$ of a sequence automaton $\mathcal{A}$ map every word $u\in A^*$  to the first element of its vector $\pi(q_0, u)[0]$ when defined, or to $0$ otherwise. Let a sequence automaton be \emph{linear recurrence} when all sequences in the vector map of a sequence automaton are \emph{linear recurrence sequences}. In this case, the \emph{series} $s_\mathcal{A}$ is $\ZZ$-rational. The \emph{recurrence polynomial} $P_\mathcal{A}$ of a linear recurrence sequence automaton (LRSA) $\mathcal{A}$ is the minimal polynomial for all the sequences in the image of the vector map. 

Using the product of DFA and linear combinations of vector maps, sequence automata can be combined to produce linear combinations of sequence automata. For every sequence automata $\mathcal{A}$, $\mathcal{B}$ and $\alpha\in\ZZ$, let $\mathcal{A} + \mathcal{B}$ denote the \emph{sum sequence automaton} with series $s_\mathcal{A} \sadd s_\mathcal{B}$ and let $\alpha\mathcal{A}$ denote the \emph{external product sequence automaton} with series $\alpha s_\mathcal{A}$. If $\mathcal{A}$ and $\mathcal{B}$ are LRSA then $P_{\mathcal{A}+\mathcal{B}}$ divides the least common multiple of $P_\mathcal{A}$ and $P_\mathcal{B}$ and $P_{\alpha\mathcal{A}}$ divides $P_\mathcal{A}$.

Let $\mathcal{A}_\varphi$ be the addressing automaton of a Dumont--Thomas numeration system $\mathcal{N}_\varphi$ associated with the fixed point $\varphi^\omega(a)$ of a prolongable substitution $\varphi \colon A \rightarrow A^*$. The \emph{addressing sequence automaton} $\mathcal{S}_\varphi$ of $\mathcal{N}_\varphi$ is the LRSA derived from $\mathcal{A}_\varphi$ with the vector map $
\pi(a, i) = \left( |\varphi^n\left(\varphi(a)[0]\cdots\varphi(a)[i-1]\right)| \right)_{n\in\NN}$
for all $a\in A$ and  $i\in\{0,\dots,|\varphi(a)|-1\}$. 
By the Cayley--Hamilton theorem, its recurrence polynomial $P_{\mathcal{S}_\varphi}$ divides the characteristic polynomial of the incidence matrix of $\varphi$.
The series of $\mathcal{S}_\varphi$ is the \emph{valuation series} $\nu_\varphi$ of the Dumont--Thomas numeration system $\nu_\varphi(u) = \val_{\mathcal{N}_\varphi}(u) \text{~if defined,~} 0 \text{~otherwise}$. The numeration system $\mathcal{N}_\varphi$ is regular if and only if $\supp(\nu_\varphi \sadd \nu_\varphi \sadd -\nu_\varphi)$ is regular, i.e.,  if the support of the series of the LRSA $\mathcal{S}_\varphi + \mathcal{S}_\varphi - \mathcal{S}_\varphi$ is regular.
~\cref{thm:dtpisot} is the corollary of the following more technical proposition. 

\begin{proposition}[Carton et al.~\cite{Carton:2024}]\label{prop:supp}
   The support of the series of a  LRSA with ultimately Pisot recurrence polynomial is regular.
\end{proposition}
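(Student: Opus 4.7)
The plan is to construct an explicit finite automaton for the support by exploiting the Pisot contraction on the non-dominant spectral modes of the recurrence.

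Write $P_\mathcal{A}(X) = X^m P_\theta(X)$ with $P_\theta$ the minimal polynomial of a Pisot number $\theta > 1$ having conjugates $\theta_2, \ldots, \theta_d$ of modulus strictly less than $1$. All sequences in the image of $\pi$ lie in the $(m+d)$-dimensional space $V \subset \RR^\NN$ annihilated by $P_\mathcal{A}(\sshift)$. Over $\RR$, decompose $V = V_{\mathrm{dom}} \oplus V_s$, where $V_{\mathrm{dom}} = \RR \cdot (\theta^n)_n$ and $V_s$ is the complementary subspace on which some iterate $\sshift^N|_{V_s}$ has operator norm strictly less than $1$, combining the contracting conjugate eigenvalues with the nilpotent $X^m$ block. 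Correspondingly split the accumulated vector as $\pi(q_0, u) = c_u (\theta^n)_n + v_u$ with $c_u \in \RR$ and $v_u \in V_s$.

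The transition rule $\pi(q_0, ua) = \sshift \pi(q_0, u) + \pi(\delta(q_0, u), a)$ projects to $c_{ua} = \theta c_u + \alpha(q, a)$ and $v_{ua} = (\sshift|_{V_s}) v_u + \beta(q, a)$, with $\alpha, \beta$ the dominant and stable projections of the finite family $\{\pi(q, a)\}_{q \in Q, a \in A}$. Since $\sshift|_{V_s}$ is eventually contracting and the $\beta$ increments are bounded, the orbit $\{v_u : u \in A^*\}$ stays within a bounded set $B \subset V_s$ independent of $u$. This is the easy half.

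The key technical step, which constitutes the main obstacle, is to replace the bounded but a priori infinite orbit $\{v_u\}$ by a finite symbolic summary that, together with the current state $q(u)$, suffices to decide whether $s_\mathcal{A}(u) = c_u + v_u[0]$ vanishes. The Pisot property supplies the needed rigidity: every $\pi(q_0, u)$ lies in the integer lattice $L := V \cap \ZZ^\NN$, and the condition $c_u + v_u[0] = 0$ is an integrality constraint relating $c_u$ and $v_u$ whose solution set inside the bounded region $B$ is tightly controlled by the Galois geometry of $\ZZ[\theta]$. One then builds a DFA with state space $Q \times S$ for a finite set $S$ that codes the position of $v_u$ to sufficient precision, transitions that simulate the update $v_u \mapsto (\sshift|_{V_s}) v_u + \beta(q, a)$ on this coding, and accepting states those where the zero condition fails; this DFA recognizes $\supp(s_\mathcal{A})$. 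Establishing that a finite coding can be chosen so as to be both consistent under the simulated transition and fine enough to detect the zero condition is the technical heart of the argument, and is where the Pisot hypothesis on $P_\mathcal{A}$ is essentially used.
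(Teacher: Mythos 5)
First, a point of reference: the paper does not prove this proposition at all — it is imported verbatim from Carton et al.~\cite{Carton:2024} — so your attempt can only be measured against the standard argument of that reference. Your skeleton is the right one: split $\ker P_\mathcal{A}(\sshift)$ into the dominant line $\RR\cdot(\theta^n)_n$ and a complement $V_s$ on which some power of $\sshift$ is contracting (the conjugates of modulus $<1$ together with the nilpotent block coming from $X^m$), observe that the $V_s$-component $v_u$ of $\pi(q_0,u)$ stays in a bounded set, and note that $s_\mathcal{A}(u)=c_u+v_u[0]$ can vanish only while the dominant coefficient satisfies $|c_u|\le M$ for a fixed $M$ — a window that, once left, is never re-entered because $c_{ua}=\theta c_u+\alpha(q,a)$ with $\theta>1$ and bounded increments, so a single absorbing sink handles that regime.

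The proof is nonetheless incomplete exactly where you say it is: you never establish that a \emph{finite} state set $S$ exists, and the phrase ``codes the position of $v_u$ to sufficient precision'' points at an approximation scheme that cannot work as stated — an approximate value of $v_u$ neither decides exactly whether the integer $c_u+v_u[0]$ equals $0$ nor updates consistently under $v_u\mapsto\sshift|_{V_s}v_u+\beta(q,a)$. The missing ingredient is a discreteness statement, not a precision statement. Concretely: a solution of the order-$(m+d)$ recurrence is determined by $m+d$ consecutive terms, so the pair $(c_u,v_u)$ is determined by the \emph{integer} window $\bigl(\pi(q_0,u)[0],\ldots,\pi(q_0,u)[m+d-1]\bigr)$; the two bounds $|c_u|\le M$ and $v_u\in B$ force every entry of that window to have absolute value at most $M\theta^{m+d-1}$ plus a constant, so only finitely many windows occur in the regime where $s_\mathcal{A}$ can still vanish, and the automaton can track them \emph{exactly} since the update $\pi(q_0,ua)=\sshift\pi(q_0,u)+\pi(\delta(q_0,u),a)$ acts integer-linearly on windows. (Equivalently: $c_u$ lies in $\frac{1}{\Delta}\ZZ[\theta]$ for a fixed denominator $\Delta$, its Galois conjugates are read off the bounded $v_u$, and an algebraic number of bounded denominator with all conjugates bounded ranges over a finite set.) Without one of these two formulations the Pisot hypothesis is invoked but not actually used, and the existence of the DFA is asserted rather than proved.
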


When two Dumont--Thomas numeration systems $\mathcal{N}_\varphi$ and $\mathcal{N}'_\varphi$ are associated with the same Pisot number, $\supp(\nu_\varphi \sadd -\nu_{\varphi'})$ is regular and thus by \cref{prop:supp} the \emph{converter} ${\{ \left<u,v\right> \mid \val_{\mathcal{N}_\varphi}(u) = \val_{\mathcal{N}_{\varphi'}}(v) \}}$ between $\mathcal{N}_\varphi$ and $\mathcal{N}_{\varphi'}$ is regular.

\begin{proposition}[Carton et al.~\cite{Carton:2024}]\label{prop:converter}
The converter between two Dumont--Thomas numeration systems associated with a common Pisot number is regular.    
\end{proposition}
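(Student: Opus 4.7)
The plan is to reduce the statement directly to Proposition~\ref{prop:supp} by exhibiting an LRSA whose series encodes when two valuations coincide and whose recurrence polynomial is ultimately Pisot. Concretely, I would form $\mathcal{D} = \mathcal{S}_\varphi + (-1)\,\mathcal{S}_{\varphi'}$ using the external product and sum constructions on sequence automata recalled earlier. Its series is the synchronized difference $s_\mathcal{D} = \nu_\varphi \sadd -\nu_{\varphi'}$, assigning to $\langle u,v\rangle$ the value $\nu_\varphi(u) - \nu_{\varphi'}(v)$.

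The second step is to check that $P_\mathcal{D}$ is ultimately Pisot. By the Cayley--Hamilton argument already invoked in the excerpt, $P_{\mathcal{S}_\varphi}$ divides the characteristic polynomial of the incidence matrix $M_\varphi$, which by hypothesis has the form $X^{m_1} P_\theta(X)$ for the minimal polynomial $P_\theta$ of the common Pisot number $\theta$; the symmetric statement holds for $\varphi'$ with exponent $m_2$. The stated closure properties for sums and external products of LRSA then give that $P_\mathcal{D}$ divides $\operatorname{lcm}(P_{\mathcal{S}_\varphi}, P_{-\mathcal{S}_{\varphi'}})$, which itself divides $X^{\max(m_1,m_2)} P_\theta(X)$, an ultimately Pisot polynomial. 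Proposition~\ref{prop:supp} then applies and yields that $\supp(s_\mathcal{D})$ is a regular language over the convolution alphabet.

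The final step is to express the converter as a Boolean combination of regular sets. With the convention $\nu_\varphi(u) = \val_{\mathcal{N}_\varphi}(u)$ if $u \in L_\varphi$ and $0$ otherwise, one has $s_\mathcal{D}(\langle u,v\rangle) = 0$ whenever $u \in L_\varphi$, $v \in L_{\varphi'}$, and $\val_{\mathcal{N}_\varphi}(u) = \val_{\mathcal{N}_{\varphi'}}(v)$. The languages $L_\varphi$ and $L_{\varphi'}$ are regular (they come from the addressing DFAs, with leading zeros removed), hence so is their convolution $\langle L_\varphi, L_{\varphi'}\rangle$. Intersecting this convolution with the complement of $\supp(s_\mathcal{D})$ isolates exactly the pairs of valid representations with equal valuation, which is the converter; regularity is preserved at each Boolean step.

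The main technical obstacle is the divisibility argument in the second step: I must confirm that the common Pisot hypothesis genuinely forces the least common multiple of the two recurrence polynomials to remain of the form $X^m P_\theta(X)$, rather than accidentally acquiring a new irreducible factor that would violate the ultimately Pisot assumption of Proposition~\ref{prop:supp}. Once this is in place, everything else is a matter of unwinding the definitions of $\sadd$, $\supp$, and $\langle \cdot,\cdot\rangle$, and invoking the already-established closure results on sequence automata and regular languages.
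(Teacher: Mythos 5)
Your proposal is correct and follows exactly the route the paper itself sketches in the sentence preceding the proposition: form the difference LRSA $\mathcal{S}_\varphi - \mathcal{S}_{\varphi'}$, observe that its recurrence polynomial divides $X^{\max(m_1,m_2)}P_\theta(X)$ because $P_\theta$ is irreducible and shared, apply \cref{prop:supp}, and recover the converter as a Boolean combination with the regular languages $L_\varphi$ and $L_{\varphi'}$. The divisibility worry you flag is harmless for the same irreducibility reason you give, so no further work is needed.
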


Parikh vectors of length-$n$ prefixes of the fixed point of a prolongable substitution can be obtained by a slight modification of the addressing sequence automaton.  The \emph{Parikh sequence automaton} $\mathcal{S}^b_\varphi$ for $b\in A$ is the LRSA derived from $\mathcal{A}_\varphi$ with the vector map $
\pi_b(a, i) = \left( |\varphi^n\left(\varphi(a)[0]\cdots\varphi(a)[i-1]\right)|_b \right)_{n\in\NN}$
for all $a\in A$ and  ${i\in\{0,\dots,|\varphi(a)|-1\}}$. 
By the Cayley--Hamilton theorem, its recurrence polynomial $P_{\mathcal{S}^b_\varphi}$ divides the characteristic polynomial of the incidence matrix of $\varphi$. The series of $\mathcal{S}^b_\varphi$ is the \emph{Parikh prefix series}
\[
\nu^b_\varphi(u) = \left|\varphi^\omega(a)[0]\cdots\varphi^\omega(a)[\val_{\mathcal{N}_\varphi}(u)-1]\right|_b \text{~if defined,~} 0 \text{~otherwise.}
\]

\begin{proposition}[Carton et al. \cite{Carton:2024}]\label{prop:synch}
   Let $\infw{x}$ be a fixed point of a prolongable substitution $\varphi$.
   The Parikh vectors of length-$n$ prefixes of $\infw{x}$ form a synchronized sequence when the support of $\mathcal{S}^b_\varphi - \mathcal{S}_\varphi$ is regular for all $b\in A$.
\end{proposition}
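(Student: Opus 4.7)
The plan is to decompose the $(|A|+1)$-ary synchronization relation into a finite intersection of binary conditions, one per letter $b \in A$, and then recognize each binary condition as the zero set of the series of the sequence automaton $\mathcal{S}^b_\varphi - \mathcal{S}_\varphi$. The regularity hypothesis then yields the result via complementation and intersection.

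To implement this, first unfold the definition. Showing that the Parikh-vector-of-prefixes sequence $s(n) = \pav(\infw{x}\interv{0}{n})$ is $\mathcal{N}_\varphi$-synchronized amounts to proving that
\[
R = \bigl\{\langle x, (y_b)_{b \in A}\rangle \suchthat x \in L_\varphi \andd \bigl|\infw{x}\interv{0}{\val(x)}\bigr|_b = \val(y_b) \text{ for every } b \in A\bigr\}
\]
is a regular language inside the product alphabet, where the canonical isomorphism left-pads shorter components with the letter $0$. This padding is regularity-preserving because $L_\varphi$ is closed under prepending zeros by the very definition of an ANSZ.

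For each fixed $b \in A$, the single-letter condition $\bigl|\infw{x}\interv{0}{\val(x)}\bigr|_b = \val(y_b)$ reads exactly $\nu^b_\varphi(x) = \nu_\varphi(y_b)$, or equivalently $(\nu^b_\varphi \sadd -\nu_\varphi)(\langle x, y_b\rangle) = 0$. The preceding paragraphs set up $\mathcal{A} + \mathcal{B}$ as the synchronized sum sequence automaton, so $\mathcal{S}^b_\varphi - \mathcal{S}_\varphi$ has series precisely $\nu^b_\varphi \sadd -\nu_\varphi$ read on a pair of inputs. Hence the set of valid pairs $\langle x, y_b\rangle$ satisfying the equation is the complement, inside the regular language $L_\varphi \times L_\varphi$, of $\supp(\mathcal{S}^b_\varphi - \mathcal{S}_\varphi)$.

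By hypothesis each such support is regular, so its complement in $L_\varphi \times L_\varphi$ is regular; intersecting over the finite alphabet $A$ and enforcing $x \in L_\varphi$ exhibits $R$ as a finite Boolean combination of regular languages, hence regular. I expect the only genuine obstacle to be notational bookkeeping rather than mathematical: one must carefully interpret $\mathcal{S}^b_\varphi - \mathcal{S}_\varphi$ via the product-of-DFA construction so that its support really lives in the pair alphabet $(A_\varphi \times A_\varphi)^*$, and check that the canonical left-padding with zeros is compatible across the $|A|+1$ components. Once these identifications are secured the argument reduces to standard closure properties; the ultimately Pisot character of $\varphi$ enters only indirectly, through the assumed regularity of the supports.
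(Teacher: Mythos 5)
Your argument is correct: the synchronization language for the Parikh-vector sequence is the intersection, over $b\in A$, of the cylindrified binary relations $\nu^b_\varphi(x)=\nu_\varphi(y_b)$, each of which is the complement (within the convolution of $L_\varphi$ with itself) of $\supp(\mathcal{S}^b_\varphi-\mathcal{S}_\varphi)$, so the hypothesis plus closure of regular languages under complement, cylindrification and finite intersection gives the result; your remarks on zero-padding are exactly the right bookkeeping, since $\val$ is invariant under prepending zeros in an ANSZ. Note that the paper itself supplies no proof of this proposition --- it is quoted from Carton et al.\ --- but what you wrote is the intended unfolding of the definitions and matches the surrounding discussion of $\supp(\nu_\varphi\sadd\nu_\varphi\sadd-\nu_\varphi)$.
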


\subsection{Abelian complexity}
\label{sec:abelian complexity}

Let us first recall the recent result of Shallit~\cite{Shallit:2021} about the abelian complexity of an automatic sequence $\infw{x}$ under the assumption that the Parikh vectors of length-$n$ prefixes of $\infw{x}$ are synchronized. 

\begin{theorem}[Shallit \cite{Shallit:2021}]\label{thm:shal}
    Let $\infw{x}\in A^\NN$ be automatic in some regular numeration system $\mathcal{N}$. Suppose that
    \begin{enumerate}
        \item The abelian complexity $\ab{x}(n)$ is bounded above by a constant, and
        \item The Parikh vectors of length-$n$ prefixes of $\infw{x}$ form an $\mathcal{N}$-synchronized sequence.
        \end{enumerate}
    Then $\ab{x}(n)$ is an $\mathcal{N}$-automatic sequence and the DFAO computing it is effectively computable.
\end{theorem}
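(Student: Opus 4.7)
The plan is to reduce $\ab{x}(n)$ to a bounded counting query over a synchronized first-occurrence predicate, and then conclude automaticity from the boundedness hypothesis together with the usual first-order-to-DFAO translation in a regular numeration system.

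First, I would use hypothesis~(2) to express abelian equivalence of length-$n$ factors as an $\mathcal{N}$-synchronized ternary relation. Writing $\Pi$ for the synchronized prefix Parikh function $n\mapsto\pav(\infw{x}\interv{0}{n})\in\NN^A$, one has $\pav(\infw{x}\interv{i}{i+n})=\Pi(i+n)-\Pi(i)$ coordinatewise, so the factors $\infw{x}\interv{i}{i+n}$ and $\infw{x}\interv{j}{j+n}$ are abelian equivalent if and only if $\Pi(i+n)+\Pi(j)=\Pi(j+n)+\Pi(i)$. Since $\mathcal{N}$ is regular and synchronized sequences are closed under coordinatewise addition and equality, the ternary predicate
\[
\mathrm{AbEq}(i,j,n) \equiv \bigl(\Pi(i+n)+\Pi(j)=\Pi(j+n)+\Pi(i)\bigr)
\]
is $\mathcal{N}$-synchronized, and therefore so is the first-occurrence predicate
\[
\mathrm{First}(i,n) \equiv (\forall j<i)\;\neg\,\mathrm{AbEq}(i,j,n).
\]
By construction $\ab{x}(n)=\#\{i\suchthat\mathrm{First}(i,n)\}$, and by hypothesis~(1) this cardinality is bounded by some constant $K$.

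Then, for each fixed $k\in\{1,\ldots,K\}$, the condition $\ab{x}(n)=k$ is expressible by the first-order formula
\[
(\exists i_1<\cdots<i_k)\;\Bigl(\bigwedge_{\ell=1}^{k}\mathrm{First}(i_\ell,n)\Bigr)\wedge(\forall j)\Bigl(\mathrm{First}(j,n)\Rightarrow\bigvee_{\ell=1}^{k}j=i_\ell\Bigr).
\]
Because $\mathcal{N}$ is regular, first-order formulas over $\mathcal{N}$-synchronized predicates define regular languages on representations, so each set $\{n\suchthat\ab{x}(n)=k\}$ is $\mathcal{N}$-recognizable. Taking the disjoint union of the corresponding DFAs, labeled by their respective output $k$, yields a DFAO computing $n\mapsto\ab{x}(n)$. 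Every step is constructive, hence the DFAO is effectively computable.

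The main delicacy is the passage from the synchronization of the $\NN^A$-valued prefix function $\Pi$ to a synchronized test for equality of factor Parikh vectors, which naively requires subtraction; the rewriting $A-B=C-D\Leftrightarrow A+D=B+C$ keeps all quantities in $\NN^A$ and inside the synchronization closure class. Once this is in hand, the rest reduces to bounded first-order manipulation over a regular numeration system, which is exactly what \texttt{Walnut} automates.
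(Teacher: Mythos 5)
Your argument is correct. A preliminary remark: the paper itself gives no proof of this statement — it is imported verbatim from \cite{Shallit:2021} — so the natural comparison is with the cited source and with the paper's own analogous arguments. Your reduction is essentially the standard one: express $\pav(\infw{x}\interv{i}{i+n})$ through the synchronized prefix Parikh function, rewrite the equality of differences as an equality of sums so as to stay inside $\NN^A$ and inside the closure class of synchronized relations (this is the same device the paper uses in \cref{thm:regular} via the predicates $\pfac$ and $\pdiff$), form the first-occurrence predicate, and exploit boundedness. Where you diverge from the way the paper handles the corresponding step in \cref{lemma:step abcomp} and \cref{thm:regular} is the final counting: there, one counts accepting paths of the DFA for the first-occurrence predicate to get a linear representation of $\ab{x}$ (hence $\mathcal{N}$-regularity even without a bound) and then uses boundedness plus the semigroup trick to pass to a DFAO; you instead describe each level set $\{n \mid \ab{x}(n)=k\}$ directly by a first-order formula with $k$ existential quantifiers. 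Both routes are valid. The one point you leave implicit is effectiveness of the constant: your construction needs $K$ in hand, and hypothesis (1) only asserts its existence. This is easily repaired — for $k=1,2,\ldots$ test emptiness of the regular language $\{n \mid (\exists i_1<\cdots<i_{k+1})\,\bigwedge_{\ell=1}^{k+1}\mathrm{First}(i_\ell,n)\}$, a decidable test whose search terminates precisely because $\ab{x}$ is bounded — but it is worth stating, since otherwise ``effectively computable'' is not justified.
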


We can now tackle the abelian complexity of fixed point of primitive substitution of ultimately Pisot type. 

\begin{theorem}
\label{thm: ab complexity Pisot substitution is automatic in DT}
  The abelian complexity of the fixed point of a prolongable primitive substitution of ultimately Pisot type is an automatic sequence in the associated Dumont--Thomas numeration system and the DFAO computing it is effectively computable.
\end{theorem}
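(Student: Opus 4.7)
The plan is to apply Shallit's Theorem~\ref{thm:shal} to $\infw{x}=\varphi^\omega(a)$ inside the Dumont--Thomas numeration system $\mathcal{N}_\varphi$ attached to $\varphi$. Three hypotheses of that theorem must be verified: the numeration system $\mathcal{N}_\varphi$ must be regular, $\infw{x}$ must be $\mathcal{N}_\varphi$-automatic with bounded abelian complexity, and the Parikh vectors of its prefixes must form an $\mathcal{N}_\varphi$-synchronized sequence. Two of these are already in hand: Theorem~\ref{thm:dtpisot} gives regularity of $\mathcal{N}_\varphi$, the addressing automaton with output $q\mapsto q$ supplies an explicit $\mathcal{N}_\varphi$-DFAO for $\infw{x}$, and Theorem~\ref{thm:bounded ab compl} delivers the boundedness of $\ab{x}$.

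The only nontrivial step will be the synchronization of the prefix Parikh vectors. I would invoke Proposition~\ref{prop:synch}, reducing the question to showing that the support of $\mathcal{S}^b_\varphi - \mathcal{S}_\varphi$ is regular for every $b\in A$. The recurrence polynomial of the LRSA $\mathcal{S}^b_\varphi - \mathcal{S}_\varphi$ divides the least common multiple of $P_{\mathcal{S}^b_\varphi}$ and $P_{\mathcal{S}_\varphi}$, and by Cayley--Hamilton each of the latter two divides the characteristic polynomial $X^m \cdot P_\theta(X)$ of the incidence matrix $M_\varphi$. Thus $P_{\mathcal{S}^b_\varphi - \mathcal{S}_\varphi}$ has the form $X^{m'}\cdot Q(X)$ with $Q\in\{1,P_\theta\}$. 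When $Q=P_\theta$, the polynomial is ultimately Pisot and Proposition~\ref{prop:supp} yields regularity of the support directly; when $Q=1$, every sequence appearing in the vector map is eventually zero, so the series takes only finitely many nonzero values and its support is trivially regular.

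Once synchronization is established, Theorem~\ref{thm:shal} applies to $\infw{x}$ in $\mathcal{N}_\varphi$ and yields that $\ab{x}(n)$ is $\mathcal{N}_\varphi$-automatic and that its DFAO is effectively computable. Effectiveness is preserved at every step of the pipeline: the addressing automaton and its Parikh variants are read off directly from $\varphi$, the Cayley--Hamilton divisibilities are explicit, the construction underlying Proposition~\ref{prop:supp} is algorithmic (as is the passage to the support of a $\ZZ$-rational series), and Shallit's proof of Theorem~\ref{thm:shal} is constructive. The main obstacle, and it is a modest one, is the algebraic bookkeeping on divisors of $X^m \cdot P_\theta(X)$ needed to justify the synchronization step, together with the small caveat for the degenerate case $Q=1$ that sits just outside the literal hypothesis of Proposition~\ref{prop:supp}.
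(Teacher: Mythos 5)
Your proof is correct and follows essentially the same route as the paper: verify the four hypotheses of Theorem~\ref{thm:shal} in $\mathcal{N}_\varphi$ using Theorem~\ref{thm:dtpisot}, the addressing DFAO, Theorem~\ref{thm:bounded ab compl}, and Proposition~\ref{prop:synch} combined with Proposition~\ref{prop:supp}. Your extra bookkeeping on divisors of $X^m\cdot P_\theta(X)$ (including the degenerate case $Q=1$, which in fact cannot occur for a primitive growing substitution since the relevant sequences grow like $\theta^n$) is slightly more careful than the paper, which simply asserts that all the LRSAs share the ultimately Pisot recurrence polynomial $P(X)$.
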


\begin{proof}
    Let $\infw{x}$ be the fixed point $\tau^\omega(a)$ of a prolongable primitive substitution $\tau \colon A \rightarrow A^*$ of ultimately Pisot type. The characteristic polynomial $P(X)$ of the incidence matrix of $\tau$ is of the form $P(X)=X^kQ(X)$ for some $k\geq 0$ and some minimal polynomial $Q$ of a Pisot number. Therefore, the LRSA $\mathcal{S}_\tau$ and $\mathcal{S}^b_\tau$ all have the same ultimately Pisot recurrence polynomial $P(X)$. Let $\mathcal{N}_\tau$ be the associated Dumont--Thomas numeration system. To conclude, we want to ensure we can apply~\cref{thm:shal}:
\begin{enumerate}
\item The sequence $\infw{x}$ is indeed automatic in $\mathcal{N}_\tau$.
\item The numeration system $\mathcal{N}_\tau$ is regular by~\cref{thm:dtpisot}.
\item The abelian complexity of $\infw{x}$ is bounded by~\cref{thm:bounded ab compl}. 
\item By~\cref{prop:synch}, the Parikh vectors of length-$n$ prefixes of $\infw{x}$ form a synchronized sequence as the support of $\mathcal{S}^b_\tau - \mathcal{S}_\tau$ is regular for all $b\in A$ by~\cref{prop:supp}.
\end{enumerate}
The construction of every step is effective, thus the DFAO can be effectively computed.
\end{proof}

A more conventional numeration system associated with a Pisot root is its canonical Bertrand numeration system \cite{Bruyere:1997,Frougny:1996,Charlier:2022}. It can be described as a particular Dumont--Thomas numeration system that shares the same Pisot recurrence polynomial. As a consequence, the conversion between both numeration systems can be realized using LRSA with regular support by~\cref{prop:converter}.

\begin{corollary}
\label{cor: ab complexity Pisot substitution is automatic in Bertrand}
   The abelian complexity of the fixed point of a prolongable primitive substitution of ultimately Pisot type is an automatic sequence in the associated canonical Bertrand numeration system and the DFAO computing it is effectively computable.
\end{corollary}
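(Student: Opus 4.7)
The plan is to leverage the preceding theorem, which already gives a DFAO computing the abelian complexity in the Dumont--Thomas numeration system $\mathcal{N}_\tau$ associated with the substitution $\tau$, and then transfer that DFAO across numeration systems by composing it with a converter. The key point flagged in the paragraph just before the statement is that the canonical Bertrand numeration system associated with a Pisot number can itself be realized as a Dumont--Thomas numeration system whose associated LRSA has the \emph{same} ultimately Pisot recurrence polynomial as $\mathcal{S}_\tau$.

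First, I would invoke~\cref{thm: ab complexity Pisot substitution is automatic in DT} to produce a DFAO $\mathcal{D}$ that, on input $\rep_{\mathcal{N}_\tau}(n)$, outputs $\ab{x}(n)$. Second, writing $\mathcal{N}_\beta$ for the canonical Bertrand numeration system attached to the Pisot root of $\tau$, I would exhibit a Dumont--Thomas presentation of $\mathcal{N}_\beta$ as $\mathcal{N}_{\varphi}$ for an auxiliary prolongable substitution $\varphi$ whose incidence matrix has the same Pisot minimal polynomial $Q(X)$ appearing in the factorization $P(X)=X^k Q(X)$ of the characteristic polynomial of $\tau$. Then the LRSA $\mathcal{S}_\tau - \mathcal{S}_\varphi$ (more precisely, its synchronized-addition version $\nu_\tau \sadd -\nu_\varphi$) has an ultimately Pisot recurrence polynomial, so by~\cref{prop:supp} its support is regular; this is precisely~\cref{prop:converter}, giving a regular converter automaton $\mathcal{C}$ recognizing $\{\langle u,v\rangle \mid \val_{\mathcal{N}_\beta}(u)=\val_{\mathcal{N}_\tau}(v)\}$.

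Third, I would compose: given an input $u\in\rep_{\mathcal{N}_\beta}(\NN)$, the composed automaton nondeterministically guesses $v$ with $\langle u,v\rangle$ accepted by $\mathcal{C}$ and simulates $\mathcal{D}$ on $v$; since $\val_{\mathcal{N}_\beta}(u)$ uniquely determines $n$, the output $\mathcal{D}(v)=\ab{x}(n)$ is well defined, and standard determinization (projecting out $v$ with the output tracked as part of the state) yields a DFAO for $\ab{x}$ in $\mathcal{N}_\beta$. Effectiveness is preserved at every step: the DFAO $\mathcal{D}$ is effectively constructible by the previous theorem, the converter $\mathcal{C}$ is effectively constructible from the two LRSA via~\cref{prop:supp}, and the composition/determinization is a routine automata-theoretic construction.

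The only genuine work beyond bookkeeping is justifying that the canonical Bertrand system is a Dumont--Thomas system sharing the same Pisot recurrence polynomial; this is classical (see~\cite{Bruyere:1997,Frougny:1996,Charlier:2022}) and is essentially what the paragraph preceding the corollary asserts. Once that identification is in hand, the rest is a direct application of the converter proposition, so I do not anticipate any substantive obstacle.
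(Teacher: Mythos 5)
Your proposal is correct and follows essentially the same route as the paper: the paper's justification is exactly the observation that the canonical Bertrand numeration system is a particular Dumont--Thomas system sharing the same Pisot recurrence polynomial, so that \cref{prop:converter} yields a regular converter through which the DFAO from \cref{thm: ab complexity Pisot substitution is automatic in DT} is transported. The paper leaves the final composition/determinization step implicit as routine, which you spell out, but there is no substantive difference in the argument.
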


As an application of the previous two results, we turn to so-called Parikh-collinear morphisms.
A morphism $\tau \colon A^* \rightarrow B^*$ is \emph{Parikh-collinear} if the Parikh vectors $\Psi(\tau(a))$, $a\in A$, are collinear.
In other words, Parikh-collinear morphisms have an incidence matrix of rank $1$ (unless they are completely erasing).
This family of morphisms has gained some scientific interest over the years, for instance, see~\cite{Allouche:2021,Cassaigne:2011,Popoli:2024,Rigo:2023,Rigo:2024,Whiteland:2021}. 
With the next result, we note that we recover a particular case of~\cite[Theorem~3]{Rigo:2024}.

\begin{corollary}
\label{cor: ab complexity Parikh collinear}
Let $\tau \colon A \rightarrow A^*$ be a Parikh-collinear prolongable primitive substitution with fixed point $\infw{x}$.
Define $\alpha = \sum_{a\in A} |\tau(a)|_a$.
The abelian complexity of $\infw{x}$ is automatic in both the associated Dumont--Thomas numeration system and in base $\alpha$; moreover, the DFAOs generating it are effectively computable.
\end{corollary}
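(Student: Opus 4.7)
The plan is to verify that a Parikh-collinear prolongable primitive substitution is automatically of ultimately Pisot type, with the associated Pisot number being the rational integer $\alpha$, and then to invoke~\cref{thm: ab complexity Pisot substitution is automatic in DT} and~\cref{cor: ab complexity Pisot substitution is automatic in Bertrand}.

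First I would analyze the incidence matrix $M_\tau$. Parikh-collinearity means that the Parikh vectors $\Psi(\tau(a))$ are all scalar multiples of a common nonzero vector, so $M_\tau$ has rank at most $1$ and can be written as $M_\tau = uv^T$ for some nonnegative integer vectors $u,v$. A direct computation gives
\[
M_\tau^2 = u v^T u v^T = (v^T u)\, u v^T = \alpha\, M_\tau,
\]
where $\alpha = v^T u = \operatorname{tr}(M_\tau) = \sum_{a\in A} |\tau(a)|_a$. By induction, $M_\tau^k = \alpha^{k-1} M_\tau$ for all $k\ge 1$, so the only nonzero eigenvalue of $M_\tau$ is $\alpha$ and its characteristic polynomial is $X^{|A|-1}(X - \alpha)$, that is, the minimal monic polynomial of $\alpha$ multiplied by a power of $X$.

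Next I would check that $\alpha \ge 2$, so that $\alpha$ is a (trivial) Pisot number and $\tau$ is of ultimately Pisot type in the sense of the paper. Since $\tau$ is primitive, some power of $M_\tau$ has strictly positive entries; but $M_\tau^k = \alpha^{k-1} M_\tau$, so $M_\tau$ itself must be strictly positive. This forces every entry of $u$ and $v$ to be at least $1$, hence $\alpha = v^T u \ge |A|\ge 2$ (the degenerate one-letter case yields a purely periodic fixed point for which the statement is trivial and can be treated separately). Since $X-\alpha$ has no conjugate roots, the positivity condition for being Pisot is vacuous, so $\alpha$ is indeed Pisot.

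With ultimately Pisot type established, \cref{thm: ab complexity Pisot substitution is automatic in DT} delivers automaticity of $\ab{x}$ in the associated Dumont--Thomas numeration system $\mathcal{N}_\tau$, together with an effective construction of the DFAO. For the base-$\alpha$ statement, the key observation is that, when the Pisot number is a rational integer $\alpha$, the canonical Bertrand numeration system attached to $X-\alpha$ coincides with the usual base-$\alpha$ representation; applying~\cref{cor: ab complexity Pisot substitution is automatic in Bertrand} then yields automaticity in base $\alpha$, again effectively. The main obstacle I expect is pinning down this last identification cleanly, namely checking that the converter produced by~\cref{prop:converter} between $\mathcal{N}_\tau$ and the Bertrand system for $X-\alpha$ really is the expected base-$\alpha$ converter. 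The rank-one structure of $M_\tau$ does all the algebraic heavy lifting; the remaining work is bookkeeping between the two numeration systems.
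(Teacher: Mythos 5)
Your proposal is correct and follows essentially the same route as the paper: identify the characteristic polynomial as $X^{|A|-1}(X-\alpha)$, conclude that $\tau$ is ultimately Pisot, and invoke \cref{thm: ab complexity Pisot substitution is automatic in DT} and \cref{cor: ab complexity Pisot substitution is automatic in Bertrand} together with the observation that the canonical Bertrand system for the root $\alpha$ is ordinary base $\alpha$. The paper states these facts without proof, whereas you supply the rank-one computation and the check that $\alpha\ge 2$; these are welcome details but not a different argument.
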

\begin{proof}
The characteristic polynomial of $\tau$ is $X^\ell(X-\alpha)$ where $\ell= |A|-1$.
In particular, $\tau$ is ultimately Pisot.
Now the result follows from either~\cref{thm: ab complexity Pisot substitution is automatic in DT} or~\cref{cor: ab complexity Pisot substitution is automatic in Bertrand} since the canonical Bertrand numeration system for $\tau$ is the classical integer base $\alpha$.
\end{proof}

\begin{example}
\label{ex: z1 Parikh-collinear}
Consider the fixed point $\infw{z}=0100111001\cdots$ of the Parikh-collinear primitive substitution $0 \mapsto 010011, 1\mapsto 1001$.
The abelian complexity of $\infw{z}$ is aperiodic~\cite[Proposition~13]{Rigo:2023} and there is a base-$5$ DFAO with $9$ states that computes $\abk{1}{z}$,  see~\cite[Section~5.2]{Rigo:2023}.
It is interesting to note that~\cref{thm: ab complexity Pisot substitution is automatic in DT} gives a $15$-state DFAO in the corresponding Dumont--Thomas numeration system.
Our procedure also allows to convert from base-$5$ to this numeration system and vice versa.
\end{example}

\subsection{Generalized abelian complexity}
\label{sec:k-abelian complexity}

In this section, we turn to the more generalized notion of $k$-abelian complexity of sequences.
We note that the assumptions of the main result of this section slightly differ from those of the previous section.
In short, the method is to translate the problem by studying the abelian complexity of the length-$k$ sliding-block code.
For an illustration of the concepts of this section, see~\cref{sec:appendix-B}.

\begin{definition}[Sliding-block code]
\label{def: sliding block code}
For a sequence $\infw{x}$ and each integer $k\ge 1$, we let $B_k(\infw{x})$ denote the \emph{length-$k$ sliding-block code of $\infw{x}$}, i.e., if $\infw{x} = x_0 x_1 x_2 \cdots$, then we slide a length-$k$ window in $\infw{x}$ to group length-$k$ factors
\[
(x_0 x_1 \cdots x_{k-1}) (x_1 x_2 \cdots x_k) \cdots (x_i x_{i+1} \cdots x_{i+k-1}) \cdots,
\]
and we map distinct length-$k$ factors to distinct letters in a new alphabet of size $\#\faset_k(\infw{x})$ to code $B_k(\infw{x})$.
\end{definition}

It is worth noticing that the letter $i$ in the length-$k$ sliding-block code corresponds to the $i$th factor of length $k$ appearing in the original sequence. 


\begin{lemma}
\label{lem:=k ab equal ab of k-block code}
For a sequence $\infw{x}$ and each integer $k\ge 1$, we have $\abeqk{k}{x}(n+k-1) = \rho_{B_k(\infw{x})}^1(n)$ for all $n\in\NN$.
\end{lemma}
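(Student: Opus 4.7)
The plan is to exhibit an explicit bijection between $\sim_{=k}$-equivalence classes of length-$(n+k-1)$ factors of $\infw{x}$ and abelian equivalence classes of length-$n$ factors of $B_k(\infw{x})$, and then just count.

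First I would unpack the definition of the sliding-block code: the letter $B_k(\infw{x})[i]$ encodes the window $\infw{x}\interv{i}{i+k}$, and distinct length-$k$ factors of $\infw{x}$ correspond to distinct letters of the new alphabet. Consequently, reading $n$ consecutive letters of $B_k(\infw{x})$ starting at position $i$ amounts to reading the $n$ overlapping length-$k$ windows inside $\infw{x}\interv{i}{i+n+k-1}$. This yields a length-preserving bijection $\Phi$ between $\faset_n(B_k(\infw{x}))$ and $\faset_{n+k-1}(\infw{x})$.

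Next I would show that $\Phi$ intertwines abelian equivalence on $B_k(\infw{x})$ with $\sim_{=k}$ on $\infw{x}$. Given $u\in\faset_n(B_k(\infw{x}))$, the coordinate of $\pav(u)$ corresponding to the letter that encodes some $w\in\faset_k(\infw{x})$ is exactly the number of indices $j$ in the relevant range at which the window $x_j\cdots x_{j+k-1}$ equals $w$; but this is precisely $|\Phi(u)|_w$. Therefore $\pav(u)=\pav(v)$ iff $|\Phi(u)|_w = |\Phi(v)|_w$ for every $w\in A^k$ (those $w\notin\faset_k(\infw{x})$ trivially contribute $0$ on both sides), iff $\Phi(u)\sim_{=k}\Phi(v)$. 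Passing to the quotient,
\[
\#\faset_n(B_k(\infw{x}))/\sim_1 \;=\; \#\faset_{n+k-1}(\infw{x})/\sim_{=k},
\]
which is exactly the stated equality.

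There is essentially no obstacle here: the argument is a careful bookkeeping of indices and alphabets. The only point that requires a line of care is ensuring that length-$k$ words not occurring in $\infw{x}$ can be safely ignored on both sides, so that ``equal Parikh vectors in $B_k(\infw{x})$'' and ``equal counts of every length-$k$ word'' really do say the same thing.
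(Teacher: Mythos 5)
Your proof is correct and follows essentially the same route as the paper's, which simply notes that $\faset_{n}(B_k(\infw{x}))$ and $\faset_{n+k-1}(\infw{x})$ are in bijection. In fact you supply the one detail the paper leaves implicit, namely that this bijection carries abelian equivalence on the block code to $\sim_{=k}$ on $\infw{x}$ (so that it descends to the quotients), which is the point that actually makes the count go through.
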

\begin{proof}
To compute $\abeqk{k}{x}(n+k-1)$ (resp., $\rho_{B_k(\infw{x})}^1(n)$), we need to count $\faset_{n+k-1}(\infw{x})/_{\sim_{=k}}$ (resp., $\faset_{n}(B_k(\infw{x}))/_{\sim_1}$).
To conclude, observe that $\faset_{n}(B_k(\infw{x}))$ and $\faset_{n+k-1}(\infw{x})$ are in bijection.
\end{proof}

Let $\tau\colon A \to A^*$ be a substitution prolongable on $a\in A$.
Let $\infw{x}=\tau^\omega(a)$ be the fixed point of $\tau$ with starting letter $a$.
Let $M_\tau$ be the incidence matrix of $\tau$ and let $P_\tau$ be the characteristic polynomial of $M_\tau$.
We note that $P_\tau$ is also monic.
Let $\mathcal{N}_\tau$ be the associated Dumont--Thomas numeration system. 
Then $\infw{x}$ is $\mathcal{N}_\tau$-automatic.
If $\tau$ is ultimately Pisot type, then~\cref{thm: ab complexity Pisot substitution is automatic in DT} assures that the abelian complexity of $\infw{x}$ is $\mathcal{N}_\tau$-automatic.

Let $\tau_k$ denote the substitution derived from $\tau$ such that $B_k(\infw{x})$ is its fixed point.
More precisely, we define $A_k=\{1,\ldots,\facc{x}(k)\}$ and $\Theta_k \colon \faset_k(\infw{x}) \to A_k$ to encode the order in which the length-$k$ factors of $\infw{x}$ appear in $\infw{x}$.
To define $\tau_k\colon A_k \to A_k^*$, for each $\ell\in A_k$, the word $\tau_k(\ell)$ consists of the ordered list of the first $|\tau(u[0])|$ length-$k$ factors of $\tau(u)$, where $u=\Theta^{-1}_k(\ell)$ (i.e., $u$ is the $\ell$th length-$k$ factor encountered in $\infw{x}$).
See~\cite[Section~5.4]{Queffelec:2010} for more details.

Using~\cite[Section~5.4.3]{Queffelec:2010} (also see~\cite[Proposition~21]{Adamczewski:2003}), if $\tau$ is primitive, then $\tau_k$ is also primitive and the dominant Perron eigenvalue of $M_{\tau_k}$ is that of $M_\tau$.
Moreover, the eigenvalues of $M_{\tau_k}$ (with $k\ge 2$) are those of $M_{\tau_2}$ with additional zeroes~\cite[Corollary~5.5]{Queffelec:2010}, i.e., $P_{\tau_k}(X) = X^m P_{\tau_2}(X)$ for some integer $m\ge 0$.
This identity implies the next result.

\begin{lemma}
If $\tau_2$ is Pisot, then $\tau_k$ is ultimately Pisot with the same Pisot root for $k\ge 2$.  
\end{lemma}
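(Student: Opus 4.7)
The plan is to unwind definitions and apply the spectral identity recalled in the paragraph preceding the statement. By hypothesis $\tau_2$ is Pisot, which by the definition given at the start of~\cref{sec:method 2} means that $P_{\tau_2}(X)$ coincides with the minimal polynomial $P_\theta(X)$ of some Pisot number $\theta$. From~\cite[Corollary~5.5]{Queffelec:2010}, as recalled just above the lemma, we have $P_{\tau_k}(X) = X^m P_{\tau_2}(X)$ for some integer $m = m(k) \geq 0$. Substituting the first identity into the second yields $P_{\tau_k}(X) = X^m P_\theta(X)$, which is precisely the shape $X^m\cdot P_\theta(X)$ demanded by the definition of an ultimately Pisot polynomial. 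Consequently, $\tau_k$ is ultimately Pisot.

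For the second assertion, it suffices to observe that the Pisot factor $P_\theta(X)$ appearing in $P_{\tau_k}(X)$ is literally the same irreducible polynomial for every $k\geq 2$, so the associated Pisot root is the same $\theta$ throughout. The whole argument is little more than a direct unpacking of definitions once the spectral relation $P_{\tau_k}(X)=X^m P_{\tau_2}(X)$ is in hand; I do not anticipate any genuine obstacle. The only point requiring minor care is to verify that the characteristic polynomial $P_{\tau_2}$ really equals the minimal polynomial of $\theta$ rather than a higher multiple of it, but this is built directly into the paper's definition of a Pisot substitution, which requires the characteristic polynomial of the incidence matrix to coincide with the minimal polynomial of the Pisot number in question.
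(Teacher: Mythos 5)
Your argument is exactly the paper's: the paper gives no separate proof but simply notes that the spectral identity $P_{\tau_k}(X)=X^m P_{\tau_2}(X)$ from~\cite[Corollary~5.5]{Queffelec:2010} "implies the next result," and your write-up just makes that one-line deduction explicit by substituting $P_{\tau_2}=P_\theta$. Correct, and the same approach.
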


\begin{lemma}
\label{lem:subset of spectres}
For each $k\ge 2$, each eigenvalue of $M_\tau$ is also an eigenvalue of $M_{\tau_k}$.
\end{lemma}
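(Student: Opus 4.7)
The plan is to produce, for each eigenpair $(\lambda, v)$ of $M_\tau$, an explicit eigenpair $(\lambda, \tilde{v})$ of $M_{\tau_k}$ by lifting $v$ along the first-letter projection $\faset_k(\infw{x}) \to A$, $u \mapsto u[0]$. Concretely, given a right eigenvector $v \in \CC^A$ satisfying $M_\tau v = \lambda v$, I define $\tilde{v} \in \CC^{A_k}$ (identifying $A_k$ with $\faset_k(\infw{x})$ via $\Theta_k$) by $\tilde{v}(u) := v(u[0])$, and then verify the eigenvalue relation entry-by-entry.

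The verification hinges on the prefix description of $\tau_k$: by construction, $\tau_k(u) = f_0 f_1 \cdots f_{|\tau(u[0])|-1}$, where $f_i$ is the length-$k$ factor of $\tau(u)$ starting at position $i$, so that $f_i[0] = \tau(u)[i]$. Moreover, since $\tau(u) = \tau(u[0])\tau(u[1])\cdots\tau(u[k-1])$, the first $|\tau(u[0])|$ letters of $\tau(u)$ are exactly those of $\tau(u[0])$. Combining these two observations, the computation
\[
(M_{\tau_k}\tilde{v})_u = \sum_{i=0}^{|\tau(u[0])|-1} \tilde{v}(f_i) = \sum_{i=0}^{|\tau(u[0])|-1} v(\tau(u[0])[i]) = \sum_{b\in A} |\tau(u[0])|_b\, v(b) = (M_\tau v)_{u[0]} = \lambda\, \tilde{v}(u)
\]
collapses to the required identity $M_{\tau_k}\tilde{v} = \lambda\,\tilde{v}$.

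It remains to ensure $\tilde{v} \neq 0$ whenever $v \neq 0$. Under the standing primitivity assumption carried into this section, every letter of $A$ occurs in the fixed point $\infw{x}$ and hence appears as the first letter of at least one element of $\faset_k(\infw{x})$; if $v(a) \neq 0$ and $u$ is any length-$k$ factor beginning with $a$, then $\tilde{v}(u) = v(a) \neq 0$. The only real subtlety is the indexing convention: the lift must use the \emph{first} letter of $u$ rather than the last, because the definition of $\tau_k$ selects the prefix of $\tau(u)$ of length $|\tau(u[0])|$, not the suffix. Beyond this bookkeeping I do not anticipate any obstacle, and the argument yields all eigenvalues of $M_\tau$ inside the spectrum of $M_{\tau_k}$ simultaneously for every $k \ge 2$.
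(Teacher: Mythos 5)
Your proposal is correct and follows essentially the same route as the paper: both lift a right eigenvector $v$ of $M_\tau$ along the first-letter projection $u \mapsto u[0]$ to get $\tilde v(u) = v(u[0])$, and both verify $M_{\tau_k}\tilde v = \lambda \tilde v$ using the fact that the first letters of the $|\tau(u[0])|$ listed length-$k$ factors in $\tau_k(u)$ are exactly the letters of $\tau(u[0])$ (the paper organizes this sum by fibers of the projection, you sum over positions directly, which is the same computation). Your explicit check that $\tilde v \neq 0$ is a small welcome addition that the paper leaves implicit.
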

\begin{proof}
Fix some integer $k\ge 2$ and recall that $A=\{0,1,\ldots,n-1\}$.
Define $\pi_k \colon A_k \to A, i \mapsto (\Theta_k^{-1}(i))[0]$, i.e., $\pi_k(i)$ encodes the first letter of the $i$th length-$k$ factor encountered in $\infw{x}$.
Now let $V$ be an eigenvector of $M_\tau$ with eigenvalue $\alpha$, i.e., $M_\tau V = \alpha V$.
Define the vector $V_k$ such that its $i$th component is given by $V_k[i]=V[\pi_k(i)]$.
Then we show that $M_{\tau_k} V_k = \alpha V_k$.
Fix $i\in\{1,\ldots,\facc{x}(k)\}$.
We have
\begin{align*}
(M_{\tau_k} V_k)[i]
&= \sum_{\ell=1}^{\facc{x}(k)} M_{\tau_k}[i,\ell] V_k[\ell] = \sum_{m=1}^n \left( \sum_{j \in \pi_k^{-1}(m)} M_{\tau_k}[i,j] \right) V[m],\\
&= \sum_{m=1}^n M_\tau [\pi_k(i),m] V[m]= (M_\tau V) [\pi_k(i)]= \alpha V[\pi_k(i)]= \alpha V_k[i],
\end{align*}
where the third equality holds since
\[
\sum_{j \in \pi_k^{-1}(m)} M_{\tau_k}[i,j]
= \sum_{j \in \pi_k^{-1}(m)} |\tau_k(i)|_j
= |\tau(\pi_k(i))|_m
= M_\tau[\pi_k(i),m].\qedhere
\]
\end{proof}

\begin{proposition}
\label{pro: tau2 Pisot implies tau Pisot}
Assume that $\tau_2$ is ultimately Pisot such that its characteristic polynomial is $X^m \cdot P_\theta(X)$ for some Pisot number $\theta$ and some integer $m\ge 0$.
Then the characteristic polynomial of $\tau$ is of the form $X^\ell \cdot P_\theta(X)$ with $\ell \le m$.
In particular, $\tau$ is ultimately Pisot with the same Pisot root.
\end{proposition}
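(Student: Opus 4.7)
The plan is to combine the spectral inclusion of \cref{lem:subset of spectres} with a Galois-rationality argument for $P_\tau$ and the Perron--Frobenius theorem applied to the primitive matrix $M_\tau$. Throughout, $\theta_1=\theta,\theta_2,\ldots,\theta_d$ denote the (distinct) Galois conjugates of $\theta$, which are exactly the roots of $P_\theta(X)$, so that the eigenvalues of $M_{\tau_2}$ lie in the set $\{0,\theta_1,\ldots,\theta_d\}$ with $\theta_1$ of multiplicity one, all other $\theta_i$ of multiplicity one, and $0$ of multiplicity~$m$.

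First I would invoke \cref{lem:subset of spectres} with $k=2$ to see that every eigenvalue of $M_\tau$ is an eigenvalue of $M_{\tau_2}$, hence belongs to $\{0,\theta_1,\ldots,\theta_d\}$. Since $P_\tau$ is monic with integer coefficients and $P_\theta$ is irreducible over $\mathbb{Z}$, the non-zero roots of $P_\tau$ form a Galois-stable subset of $\{\theta_1,\ldots,\theta_d\}$ in which each $\theta_i$ occurs with a common multiplicity. Consequently
\[
P_\tau(X) = X^\ell \, P_\theta(X)^s
\]
for some integers $\ell,s \ge 0$, and the proposition reduces to showing $s=1$ and $\ell\le m$.

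Next I would use the primitivity of $\tau$ (implicit in the setting of this subsection, and needed for $\tau_2$ to inherit a Pisot root from $\tau$ via the Qu\'effelec correspondence) to apply Perron--Frobenius to $M_\tau$. Primitivity together with prolongability forces the spectral radius of $M_\tau$ to be strictly greater than $1$: otherwise $|\tau^n(a)|$ would stay bounded, contradicting $\lim_n |\tau^n(a)| = +\infty$. In the candidate spectrum $\{0,\theta_1,\ldots,\theta_d\}$, only $\theta$ has modulus $>1$, so $\theta$ must be an eigenvalue of $M_\tau$, giving $s\ge 1$. Perron--Frobenius for primitive matrices then guarantees that this dominant eigenvalue is algebraically simple in $P_\tau$, so $s=1$, and thus $P_\tau(X)=X^\ell P_\theta(X)$.

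Finally, the bound $\ell\le m$ will follow from a degree count. Comparing degrees gives $|A| = \ell + d$ and $|A_2| = m + d$, whence $\ell - m = |A| - |A_2|$. By primitivity every letter of $A$ occurs in $\infw{x}$, and each occurrence is followed by another letter in the infinite sequence, so every letter of $A$ is the first letter of at least one length-$2$ factor; thus $|A|\le |A_2|$ and $\ell\le m$ as required. The main obstacle is the middle step: pinning down the algebraic multiplicity of $\theta$ in $P_\tau$, which is exactly what primitivity, via Perron--Frobenius, is designed to deliver, while \cref{lem:subset of spectres} by itself only controls the set of eigenvalues.
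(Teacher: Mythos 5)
Your proof is correct and follows the same skeleton as the paper's: apply \cref{lem:subset of spectres} with $k=2$ to place the eigenvalues of $M_\tau$ among those of $M_{\tau_2}$, then use the irreducibility of $P_\theta$ over $\ZZ$ to identify the non-nilpotent factor of the characteristic polynomial. The paper's proof is a two-line version of this that silently passes over exactly the two points you treat explicitly: it asserts that the factor $R(X)$ carrying the nonzero eigenvalues \emph{divides} $P_\theta(X)$, which presupposes that each conjugate of $\theta$ has algebraic multiplicity at most one in $P_\tau$ (something \cref{lem:subset of spectres}, which only controls the \emph{set} of eigenvalues, does not give --- a priori $R$ could be $P_\theta^s$ with $s\ge 2$), and it asserts $\ell\le m$ with no argument. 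Your appeal to Perron--Frobenius for the primitive matrix $M_\tau$ (spectral radius $>1$ by prolongability, hence equal to $\theta$, hence algebraically simple, hence $s=1$) and your degree count $|A|\le |A_2|=\facc{x}(2)$ supply precisely the missing justifications; both legitimately use primitivity, which is the standing hypothesis of this subsection even though the proposition does not restate it. In short: same route, but your version is the complete one.
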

\begin{proof}
From~\cref{lem:subset of spectres}, each eigenvalue of $\tau$ is one of $\tau_2$, so the characteristic polynomial of $\tau$ can be written as $X^\ell R(X)$ for some integer $\ell\le m$ and some polynomial $R(X)$ for which $0$ is not one of its zeroes and that divides $P_\theta(X)$.
Since $\theta$ is Pisot, $P_\theta(X)$ is irreducible and so $R(X)=P_\theta(X)$.
\end{proof}




\begin{theorem}
\label{thm:bounded}
Let $\infw{x}$ be a fixed point of a primitive substitution $\tau$.
If $\tau_2$ is ultimately Pisot, then the $k$-abelian complexity $(\abk{k}{x}(n))_{n\ge 0}$ is bounded for each $k\ge 1$.
\end{theorem}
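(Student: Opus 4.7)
The plan is to chain together the results already established in the excerpt: reduce $k$-abelian to exact $i$-abelian complexities via \cref{lem:properties and observations for k-ab complexities}, translate each exact $i$-abelian complexity into the abelian complexity of the length-$i$ sliding-block code via \cref{lem:=k ab equal ab of k-block code}, and then apply \cref{thm:bounded ab compl} to each $\tau_i$.

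First, I would verify that the substitution $\tau_i$ associated with the length-$i$ sliding-block code $B_i(\infw{x})$ is prolongable, primitive, and ultimately Pisot for every $i\ge 1$. For $i=1$ this is exactly $\tau$, which is ultimately Pisot by \cref{pro: tau2 Pisot implies tau Pisot}. For $i\ge 2$, primitivity of $\tau_i$ is recalled in the paragraph preceding the statement (citing \cite[Section 5.4.3]{Queffelec:2010}), and the spectral identity $P_{\tau_i}(X) = X^{m_i}\,P_{\tau_2}(X)$ from \cite[Corollary 5.5]{Queffelec:2010} together with the hypothesis that $\tau_2$ is ultimately Pisot immediately gives that $\tau_i$ is ultimately Pisot with the same Pisot root.

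Next, applying \cref{thm:bounded ab compl} to the fixed point $B_i(\infw{x})$ of $\tau_i$ yields a constant bound on its abelian complexity $\rho^{1}_{B_i(\infw{x})}$. By \cref{lem:=k ab equal ab of k-block code}, $\rho^{=i}_{\infw{x}}(n+i-1) = \rho^{1}_{B_i(\infw{x})}(n)$, so $\rho^{=i}_{\infw{x}}$ is bounded for every $i\ge 1$. Fix now $k\ge 1$. The second item of \cref{lem:properties and observations for k-ab complexities} gives
\[
\rho^{k}_{\infw{x}}(n) \le \prod_{i=1}^{k} \rho^{=i}_{\infw{x}}(n),
\]
and since the right-hand side is a finite product of bounded sequences, $\rho^{k}_{\infw{x}}$ is bounded as well.

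I do not expect a serious obstacle: every ingredient is already stated or proved in the excerpt. The only subtle point is making sure the version of \cref{thm:bounded ab compl} applies to $\tau_i$, which requires checking prolongability (clear, since $B_i(\infw{x})$ starts with the image under $\Theta_i$ of the length-$i$ prefix of $\infw{x}$ and this letter iterates to $B_i(\infw{x})$ under $\tau_i$) and primitivity and ultimate Pisot-ness (already noted above). Everything else is bookkeeping.
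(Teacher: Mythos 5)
Your proof is correct, and it reaches the conclusion by a genuinely different (though closely related) route. The paper's proof is shorter: it invokes \cite[Theorem~22]{Adamczewski:2003} as a black box to get that $C_{\infw{x}}^k(n)=\max_{w\in \faset_k(\infw{x})} \max_{u,v\in \faset_n(\infw{x})} ||u|_w-|v|_w|$ is uniformly bounded, deduces that each exact $k$-abelian complexity $\abeqk{k}{x}$ is bounded, and then finishes, exactly as you do, with Item~2 of \cref{lem:properties and observations for k-ab complexities}. You instead re-derive the boundedness of each $\abeqk{i}{x}$ from the paper's own toolkit: the Queff\'elec spectral identity $P_{\tau_i}(X)=X^{m_i}P_{\tau_2}(X)$ shows each $\tau_i$ is primitive, prolongable and ultimately Pisot, so \cref{thm:bounded ab compl} bounds the abelian complexity of $B_i(\infw{x})$, and \cref{lem:=k ab equal ab of k-block code} transfers this back to $\abeqk{i}{x}$. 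This is essentially an unpacking of how Adamczewski's Theorem~22 is itself proved (factor balancedness of $\infw{x}$ is letter balancedness of the block codings, controlled by the spectrum of $M_{\tau_2}$), so the underlying mathematics coincides; what your version buys is self-containedness relative to the results already stated in the paper, at the cost of having to verify prolongability and primitivity of every $\tau_i$ and to treat $i=1$ separately via \cref{pro: tau2 Pisot implies tau Pisot} --- all of which you do correctly.
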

\begin{proof}
From~\cite[Theorem~22]{Adamczewski:2003} (also see the beginning of~\cite[Section~6]{Adamczewski:2003}), the quantity $C_{\infw{x}}^k(n):=\max_{w\in \faset_k(\infw{x})} \max_{u,v\in \faset_n(\infw{x})} \{ ||u|_w-|v|_w| \}$ is bounded for all $k,n\ge 0$.
In particular, $(\abeqk{k}{x}(n))_{n\ge 0}$ is bounded.
Due to Item 2 of~\cref{lem:properties and observations for k-ab complexities}, $(\abk{k}{x}(n))_{n\ge 0}$ is also bounded.
\end{proof}


\newcommand{\ppref}{\operatorname{prefb}}
\newcommand{\pfac}{\operatorname{facb}}
\newcommand{\pmin}{\operatorname{minb}}
\newcommand{\pdiff}{\operatorname{diffb}}
\newcommand{\pborder}{\operatorname{border}}
\newcommand{\pconv}{\operatorname{conv}}

If we show that the $k$-abelian complexity is furthermore $\mathcal{N}_\tau$-regular, it is then $\mathcal{N}_\tau$-automatic. Recall $\pfeq_\infw{x}$ and $\pabeq_\infw{x}$ from~\cref{sec:method 1}.
We now introduce the following relations and functions:
\begin{align*}
    \ppref_{\infw{x},a}(n) &= \Psi\left(B_k(\infw{x})\interv{0}{n}\right)[a],\\
    \pfac_{\infw{x},a}(i,n) &= \ppref_{\infw{x},a}(i+n) - \ppref_{\infw{x},a}(n),\\
    \pmin_{\infw{x},a}(n) &= \min_{i\geq 0} \{ \pfac_{\infw{x},a}(i,n) \}, \\
    \pdiff_{\infw{x},a}(i,n) &=\pfac_{\infw{x},a}(i,n)-\pmin_{\infw{x},a}(n), \\
    \pborder_\infw{x} &= \left\{ (i,j,k,n) \mid \left(k\leq n \Rightarrow \pfeq_\infw{x}(i,j,k-1)\right) \land \left(n<k \Rightarrow \pfeq_\infw{x}(i,j,n)\right)\right\}.
\end{align*}

\begin{theorem}
\label{thm:regular}
    Let $k\ge 1$ be an integer and let $\infw{x}\in A^\NN$ be a sequence such that $B_k(\infw{x})$ is automatic in some regular numeration system $\mathcal{N}_k$.
    If the Parikh vectors of length-$n$ prefixes of $B_k(\infw{x})$ form an $\mathcal{N}_k$-synchronized sequence, then $(\abk{k}{x}(n))_{n\ge 0}$ is $\mathcal{N}_k$-regular.
\end{theorem}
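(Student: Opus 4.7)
The plan is to follow the pipeline of Section~\ref{sec:method 1}, letting the prefix-Parikh-vector synchronization hypothesis on $B_k(\infw{x})$ play the role that synchronization of the balancedness relation played in Lemma~\ref{lemma:step balsync}. Throughout, $k\ge 1$ is fixed as a constant. The key reduction is Lemma~\ref{lem:=k ab equal ab of k-block code}: exact $k$-abelian equivalence of two length-$(n+k)$ factors of $\infw{x}$ is equivalent to equality of Parikh vectors of the corresponding length-$(n+1)$ factors of $B_k(\infw{x})$. So it suffices to express this Parikh-vector equality as a synchronized predicate, adjoin the border condition of Lemma~\ref{lem:equivalent def for k-ab equiv}, and count first occurrences.

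First I would upgrade the hypothesis into $\mathcal{N}_k$-synchronization of the auxiliary functions listed just before the theorem. Since the alphabet of $B_k(\infw{x})$ is finite, synchronization of the full prefix-Parikh map is equivalent to $\mathcal{N}_k$-synchronization of each component $\ppref_{\infw{x},a}$. Then $\pfac_{\infw{x},a}(i,n)=\ppref_{\infw{x},a}(i+n)-\ppref_{\infw{x},a}(n)$ is itself $\mathcal{N}_k$-synchronized, using regularity of $\mathcal{N}_k$ so that sums and differences of synchronized sequences remain synchronized. The finite conjunction over $a$ of the equalities $\pfac_{\infw{x},a}(n+1,i)=\pfac_{\infw{x},a}(n+1,j)$ expresses equality of the Parikh vectors of the two length-$(n+1)$ factors of $B_k(\infw{x})$ starting at positions $i$ and $j$, which by Lemma~\ref{lem:=k ab equal ab of k-block code} gives a synchronized description of $\pabexeq_\infw{x}(i,j,k,n)$.

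Next, I would observe that $\infw{x}$ is itself $\mathcal{N}_k$-automatic, since it is the letter-to-letter coding of $B_k(\infw{x})$ that sends an encoded length-$k$ factor to its first letter. Consequently $\pfeq_\infw{x}$ is $\mathcal{N}_k$-synchronized, and hence so is the border predicate $\pborder_\infw{x}(i,j,k,n)$; conjoining it with $\pabexeq_\infw{x}$ and invoking Lemma~\ref{lem:equivalent def for k-ab equiv}(1) produces a synchronized description of the full $k$-abelian equivalence $\pabeq_\infw{x}(i,j,k,n)$ for this fixed $k$. From here the conclusion is the standard first-occurrence/path-counting construction already used in Lemma~\ref{lemma:step abcomp}: the synchronized predicate selecting, for each $n$, the lexicographically first representative $i$ of each $\sim_k$-class of length-$n$ factors of $\infw{x}$, fed to the path-counting procedure on its recognizing DFA as in~\cite[Section~9.8]{Shallit:2023}, produces a linear representation of $n\mapsto \abk{k}{x}(n)$ in $\mathcal{N}_k$, which is precisely $\mathcal{N}_k$-regularity.

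The main conceptual obstacle of Section~\ref{sec:method 1} has been absorbed into the hypothesis: the hard work there was proving synchronization of the balancedness relation from the uniform factor-balancedness assumption (Lemmas~\ref{lemma:step balance}--\ref{lemma:step balsync}), whereas here that entire detour is short-circuited by the Parikh-vector synchronization input, at the price of working with the typically larger alphabet of $B_k(\infw{x})$. What remains is bookkeeping---constant shifts between indices in $\infw{x}$ and $B_k(\infw{x})$, and making sure the finitely many synchronized predicates over the letters of the $B_k$-alphabet are combined within the same numeration system $\mathcal{N}_k$---which is routine.
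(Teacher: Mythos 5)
Your proposal is correct and follows essentially the same route as the paper: synchronize the prefix Parikh counts of $B_k(\infw{x})$, deduce synchronization of factor counts, express exact $k$-abelian equivalence as a finite conjunction of count equalities via \cref{lem:=k ab equal ab of k-block code}, adjoin the border condition of \cref{lem:equivalent def for k-ab equiv}, and finish with the first-occurrence and path-counting construction. The only (immaterial) difference is that you compare the factor counts $\pfac_{\infw{x},a}$ directly, whereas the paper routes the comparison through the bounded quantities $\pdiff_{\infw{x},a}$ (the deviation from the minimum), a choice motivated by the practical \texttt{Walnut} implementation rather than by logical necessity; your explicit remark that $\infw{x}$ is $\mathcal{N}_k$-automatic as a coding of $B_k(\infw{x})$, which justifies using $\pfeq_\infw{x}$, is a detail the paper leaves implicit.
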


\begin{proof}
Fix $k\ge 1$ and let $A_k=\{1,\ldots,m:=\facc{x}(k)\}$ be the alphabet over which $B_k(\infw{x})$ is defined. By hypothesis, the functions $\ppref_{\infw{x},a}$ are all synchronized. Therefore, the functions $\pfac_{\infw{x},a}$ are all synchronized with the following formula in first-order logic: $\pfac_a(i,n,z)=\exists x,y \; \ppref_{\infw{x},a}(i,x) \land \ppref_{\infw{x},a}(i+n,y) \land z+x=y$. Similarly, the functions $\pmin_{\infw{x},a}$ and $\pdiff_{\infw{x},a}$ are also all synchronized. We now provide a formula for $\pabeq_\infw{x}(i,j,k,n)$. This formula is split in two cases: if $n\leq k$, then the $k$-abelian equivalence is simply the factor equivalence, and otherwise we use the border condition; see \cref{lem:equivalent def for k-ab equiv}: \begin{align*}
\pabeq_\infw{x}(i,j,k,n)&= [n<k \land \pfeq_{\infw{x}}(i,j,n)] \land [ \pborder_\infw{x}(i,j,k,n+k) \\
 & \land \exists z (\pdiff_{\infw{x},1}(i,n,z) \land \pdiff_{\infw{x},1}(j,n,z) ) \land \cdots \\ & \land \exists z (\pdiff_{\infw{x},m}(i,n,z) \land \pdiff_{\infw{x},m}(j,n,z) ].
\end{align*} Therefore, this relation is also synchronized. Finally, we define the following relation that identifies the first occurrences of $k$-abelian equivalent factors: \begin{align*}
    \Lambda_\infw{x}= \left\{ (i,k,n) \mid \forall j \; \pabeq_\infw{x}(i,j,k,n) \implies i\leq j \right\}.
\end{align*} 
By using the path-counting technique \cite[Section 9.8]{Shallit:2023}, $(\abk{k}{x}(n))_{n\ge 0}$ is a regular sequence. 
\end{proof}

\begin{corollary}\label{cor:kab-auto}
Let $k\ge 1$ be an integer and let $\infw{x}$ be a fixed point of a primitive substitution $\tau$.
Let $\mathcal{N}_k$ be the numeration system associated with $\tau_k$.
If $\tau_2$ is ultimately Pisot, then the $k$-abelian complexity $(\abk{k}{x}(n))_{n\ge 0}$ is $\mathcal{N}_\tau$-automatic.
\end{corollary}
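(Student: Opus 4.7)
The plan is to apply Theorem \ref{thm:regular} to the length-$k$ sliding-block code $B_k(\infw{x})$, combine with boundedness from Theorem \ref{thm:bounded} to upgrade regularity to automaticity in the Dumont--Thomas numeration system $\mathcal{N}_k$ of $\tau_k$, and finally convert to $\mathcal{N}_\tau$ by exploiting the common Pisot root.

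First I would handle the case $k=1$ separately: then $B_1(\infw{x})=\infw{x}$ and $\tau_1=\tau$, and by Proposition \ref{pro: tau2 Pisot implies tau Pisot} the substitution $\tau$ is itself ultimately Pisot, so Theorem \ref{thm: ab complexity Pisot substitution is automatic in DT} gives $\mathcal{N}_\tau$-automaticity directly. For $k\ge 2$, I would first record that $\tau_k$ is ultimately Pisot with the same Pisot root $\theta$: by the relation $P_{\tau_k}(X)=X^m P_{\tau_2}(X)$ recalled just before Proposition \ref{pro: tau2 Pisot implies tau Pisot}, writing $P_{\tau_2}(X)=X^{m_2}P_\theta(X)$ gives $P_{\tau_k}(X)=X^{m+m_2}P_\theta(X)$. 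Consequently both LRSA $\mathcal{S}_{\tau_k}$ and $\mathcal{S}^b_{\tau_k}$ have ultimately Pisot recurrence polynomial.

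Next I would verify the hypotheses of Theorem \ref{thm:regular} for $B_k(\infw{x})$ relative to $\mathcal{N}_k$: the sequence $B_k(\infw{x})=\tau_k^\omega(\Theta_k(\infw{x}[0]\cdots\infw{x}[k-1]))$ is $\mathcal{N}_k$-automatic by the definition of the Dumont--Thomas framework; the numeration system $\mathcal{N}_k$ is regular by Theorem \ref{thm:dtpisot}; and the Parikh vectors of length-$n$ prefixes of $B_k(\infw{x})$ form an $\mathcal{N}_k$-synchronized sequence by Proposition \ref{prop:synch} together with Proposition \ref{prop:supp} applied to each $\mathcal{S}^b_{\tau_k}-\mathcal{S}_{\tau_k}$, since these LRSA share the ultimately Pisot recurrence polynomial $P_{\tau_k}$. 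Theorem \ref{thm:regular} then gives that $(\abk{k}{x}(n))_{n\ge 0}$ is $\mathcal{N}_k$-regular. Theorem \ref{thm:bounded} gives that this sequence is bounded, and a bounded $\mathcal{N}_k$-regular sequence is $\mathcal{N}_k$-automatic by the standard closure of regular sequences under finite image (one reduces the linear representation and reads off the automaton, exactly as in the semigroup trick already used in Lemma \ref{lemma:step balance}).

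Finally I would convert from $\mathcal{N}_k$ to $\mathcal{N}_\tau$: by Proposition \ref{pro: tau2 Pisot implies tau Pisot} and the step above, $\tau$, $\tau_2$ and $\tau_k$ all share the Pisot root $\theta$, so $\mathcal{N}_\tau$ and $\mathcal{N}_k$ are Dumont--Thomas numeration systems associated with the same Pisot number. Proposition \ref{prop:converter} then provides a regular converter between them, and composing the $\mathcal{N}_k$-DFAO for $\abk{k}{x}$ with this converter produces an $\mathcal{N}_\tau$-DFAO computing the same sequence, giving $\mathcal{N}_\tau$-automaticity. Each of these steps is effective, so the DFAO can be explicitly constructed. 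The main obstacle is really the bookkeeping around the numeration systems: ensuring that $\tau_k$ inherits the ultimately Pisot property with the correct root so that both Proposition \ref{prop:supp} (for synchronization inside $\mathcal{N}_k$) and Proposition \ref{prop:converter} (for the change of numeration to $\mathcal{N}_\tau$) can be invoked simultaneously.
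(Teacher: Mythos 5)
Your proposal is correct and follows essentially the same route as the paper: establish $\mathcal{N}_k$-automaticity by combining Theorem~\ref{thm:regular} (regularity) with Theorem~\ref{thm:bounded} (boundedness), then transfer to $\mathcal{N}_\tau$ via Proposition~\ref{prop:converter} using the shared Pisot root. Your version merely spells out more of the bookkeeping (the $k=1$ case, the verification of the synchronization hypothesis via Propositions~\ref{prop:synch} and~\ref{prop:supp}, and the use of Proposition~\ref{pro: tau2 Pisot implies tau Pisot} where the paper cites Lemma~\ref{lem:subset of spectres}), which is consistent with, and if anything slightly more explicit than, the paper's argument.
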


\begin{proof}
Note that $B_k(\infw{x})$ is the fixed point of $\tau_k$, so $B_k(\infw{x})$ is $\mathcal{N}_k$-automatic. Since a regular sequence that is bounded is automatic, we deduce that $(\abk{k}{x}(n))_{n\ge 0}$ is $\mathcal{N}_k$-automatic, by combining both \cref{thm:bounded,thm:regular}. Finally, we apply \cref{lem:subset of spectres} and \cref{prop:converter} to deduce that this sequence is also $\mathcal{N}_\tau$-automatic. 
\end{proof}

\begin{remark}
Let us notice that the sliding-block code of Parikh-collinear substitutions are not necessarily ultimately Pisot. For instance, resuming~\cref{ex: z1 Parikh-collinear}, the substitution for the length-$2$ sliding-block code is defined by $1 \mapsto 123144, 2 \mapsto 2312, 3 \mapsto 123142, 4 \mapsto 2314$ with polynomial $X^2(X-1)(X-5)$, which is therefore not ultimately Pisot (nor Pisot).  
\end{remark}

\subsection{Application to the Narayana sequence}
\label{sec:Narayana sequence}

We consider the sequence $\infw{n}=01200101201200120010\cdots$, fixed point starting with $0$ of the substitution $\tau: 0\mapsto 01, 1\mapsto 2, 2 \mapsto 0$.
Up to a renaming of the letters, it is the sequence~\cite[\seqnum{A105083}]{Sloane}.
This sequence $\infw{n}$ is called the \emph{Narayana sequence}; see~\cite{Shallit:2025}.
We note that the characteristic polynomial of the substitution $\tau$ is given by the minimal polynomial $P_\theta(X)=X^3-X^2-1$ of the Pisot number $\theta \approx 1.46557$. 

Many combinatorial properties of $\infw{n}$ have recently been studied by Shallit~\cite{Shallit:2025} using \texttt{Walnut} and by Letouzey~\cite{Letouzey:2025}.
For example, its factor complexity satisfies $\facc{s}(n)=2n+1$ (see~\cite[Theorem 13]{Shallit:2025}). As in the previous section, we let $\tau_k$ denote the substitution that generates the length-$k$ sliding-block code of $\infw{n}$. Since $\tau_2\colon 1 \mapsto 12, 2 \mapsto 3, 3 \mapsto 4, 4 \mapsto 15, 5 \mapsto 3$ is ultimately Pisot, with polynomial $X^2P_\theta(X)$, we can apply \Cref{cor:kab-auto}. Therefore, we have computed the $k$-abelian complexity of the sequence $\infw{n}$, up to $k=10$. The details of the \texttt{Walnut} implementation are provided in~\Cref{sec:appendix-Walnut}.

\begin{theorem}
Let $\infw{n}$ be the Narayana sequence, fixed point of $0\mapsto 01, 1\mapsto 2, 2 \mapsto 0$.
For $k\in[1,10]$, the $k$-abelian complexity of $\infw{n}$ takes on the values in the set given in~\cref{tab:first values of first few abelian complexities Narayana}.

\end{theorem}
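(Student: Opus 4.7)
The plan is to apply~\cref{cor:kab-auto} and then use \texttt{Walnut} to compute the $k$-abelian complexity explicitly for each $k \in [1,10]$. First, I verify the hypotheses of~\cref{cor:kab-auto}: the substitution $\tau\colon 0\mapsto 01, 1\mapsto 2, 2\mapsto 0$ is primitive (its incidence matrix has a power with strictly positive entries, or alternatively, $\tau^3$ maps every letter to a word containing all three letters). The characteristic polynomial $P_\theta(X)=X^3-X^2-1$ is the minimal polynomial of the Pisot number $\theta\approx 1.46557$, so in particular $\tau$ itself is Pisot (hence ultimately Pisot). The paper states that $\tau_2\colon 1\mapsto 12, 2\mapsto 3, 3\mapsto 4, 4\mapsto 15, 5\mapsto 3$ has characteristic polynomial $X^2 P_\theta(X)$, so $\tau_2$ is ultimately Pisot with the same Pisot root $\theta$. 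All hypotheses of~\cref{cor:kab-auto} are therefore satisfied, and we conclude that $(\abk{k}{n}(n))_{n\ge 0}$ is $\mathcal{N}_\tau$-automatic for each $k\ge 1$.

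Next, for each $k\in[1,10]$, I construct the substitution $\tau_k$ generating the length-$k$ sliding-block code $B_k(\infw{n})$, using~\cref{def: sliding block code}. The construction is algorithmic: enumerate the length-$k$ factors of $\infw{n}$ in order of first occurrence (there are $\facc{n}(k)=2k+1$ of them, by~\cite[Theorem 13]{Shallit:2025}), define the encoding $\Theta_k$, and read off the images of $\tau_k$ from the images of $\tau$ applied to these factors. This gives $\tau_k$ over the alphabet $A_k=\{1,\ldots,2k+1\}$, together with its Dumont--Thomas numeration system $\mathcal{N}_k$.

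I then feed the setup into \texttt{Walnut}, following the predicates defined in the proof of~\cref{thm:regular}. Concretely, for each $k$ I build a DFA for $B_k(\infw{n})$, a synchronized automaton for the Parikh prefix series $\ppref_{\infw{n},a}$ for each $a\in A_k$ (which is regular by~\cref{prop:synch,prop:supp} applied to $\mathcal{S}^a_{\tau_k}-\mathcal{S}_{\tau_k}$), and then construct in turn $\pfac$, $\pmin$, $\pdiff$, $\pborder$, $\pabeq_{\infw{n}}$, and the first-occurrence relation $\Lambda_{\infw{n}}$. Path-counting on $\Lambda_{\infw{n}}$ yields a linear representation of $\abk{k}{n}$ which, being bounded (\cref{thm:bounded}), reduces to a DFAO in $\mathcal{N}_k$. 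Using~\cref{prop:converter} we convert from $\mathcal{N}_k$ to $\mathcal{N}_\tau$, producing a single DFAO per value of $k$. Reading off all reachable outputs of each DFAO gives the sets listed in~\cref{tab:first values of first few abelian complexities Narayana}.

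The main obstacle is computational rather than conceptual: the automata produced at intermediate steps (especially $\pabeq_{\infw{n}}$ for $k$ close to $10$) can become very large, and memory/time blow-ups in the minimization and determinization passes are the binding constraint. This is managed by performing eager minimization at every intermediate step and by exploiting the fact that each Parikh prefix function is a scalar-valued synchronized relation, so the product construction for $\pdiff_{\infw{n},a}$ across $a\in A_k$ can be done incrementally. A sanity check is provided by comparing the output of this second method with the first-method results of~\cref{sec:method 1} wherever both apply, as already discussed in~\cref{subsec:check validity}.
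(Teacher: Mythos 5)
Your proposal is correct and follows essentially the same route as the paper: verify that $\tau_2$ is ultimately Pisot so that \cref{cor:kab-auto} applies, then carry out the sliding-block-code construction and the predicate pipeline of \cref{thm:regular} in \texttt{Walnut} (Parikh prefixes, $\pdiff$, border condition, abelian equivalence, first occurrences, path counting, conversion back to $\mathcal{N}_\tau$), exactly as detailed in \cref{sec:appendix-Walnut} for $k=3$. The only difference is presentational: you make explicit the verification of primitivity and the computational bottlenecks, which the paper leaves implicit.
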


\begin{table}[h!tbp]
       \centering
        \[
        \begin{array}{c|l|l}
            k & \{\abk{k}{n}(n) \mid n\ge 0\} & \text{Size of the automaton} \\
            \hline
            1 & \{1\} \cup [3, 8] & 97 \\
            2 & \{1, 3, 5, 7\} \cup[9, 22] & 277 \\
            3 & \{1, 3, 5, 7, 9, 11, 13\} \cup [15, 37] & 467 \\
            4 & \{1, 3, 5, 7, 9, 11, 13, 15, 17, 19\} \cup [21, 52] & 634 \\
            5 & \{2n+1 \mid 0\leq n \leq 11 \} \cup \{25, 26, 28, 29, 30, 31, 32\} \cup [34, 66] & 871 \\
            6 & \{2n+1 \mid 0\leq n \leq 16\} \cup [34, 81] & 969 \\
            7 & \{2n+1 \mid 0\leq n \leq 18\} \cup \{38\} \cup [40,47] \cup [49, 96] & 1218 \\
            8 & \{2n+1 \mid 0\leq n \leq 21\} \cup [47, 111] & 1309 \\
            9 & \{2n+1 \mid 0\leq n \leq 23\} \cup [49,52] \cup [54,63] \cup [65, 125] & 1646 \\
            10 & \{2n+1 \mid 0\leq n \leq 28\} \cup \{54\} \cup [59,68]  \cup \{70\} \cup [72, 140] & 1745 \\
        \end{array}
        \]
        \caption{For $k\in[1,10]$, the values taken by the $k$-abelian complexity of the Narayana sequence $\infw{n}$.}
        \label{tab:first values of first few abelian complexities Narayana}
    \end{table}

\subsection{Application to other sequences}

The second approach can also be applied to the four sequences studied with the first approach presented in \cref{sec:method 1}. In general, a good rule would be to first try the first approach and turn to the second approach if the computation does not converge in reasonable time (either because the sequence is not uniformly-factor-balanced or because it is too heavy). The supplementary material of the paper provides the reader with the results for all the sequences listed above plus the following: \begin{itemize}
     \item The fixed point of $0\mapsto 011,1\mapsto 01$, with Pisot root $1+\sqrt{2}$;
     \item The fixed point of $0\mapsto 0001011,1\mapsto 001011$, with Pisot root $\frac{7+\sqrt{37}}{2}$;
     \item The fixed point of $0\mapsto 001, 1\mapsto 02,2\mapsto 002$, with Pisot root of $X^3-3X^2+X-1$; 
     \item The fixed point of $0\mapsto 010,1\mapsto 2,2\mapsto 02$, with Pisot root of $X^3-3X^2+2X-1$;
     \item The \emph{twisted Tribonacci sequence}~\cite[\seqnum{A277735}]{Sloane}, fixed point of $0\mapsto 01,1\mapsto 20,2\mapsto 0$, with Pisot root of $X^3-X^2-X-1$. 
\end{itemize}

\section{Open problems and questions}
\label{sec:open pbs}

\begin{conjecture}
Let $\infw{t}$ be the Tribonacci sequence, fixed point of $0\mapsto 01, 1\mapsto 02, 2 \mapsto 0$.
The $2$-dimensional sequence $(\abk{k}{t}(n))_{k\ge 1, n\ge 0}$ is not synchronized but computed by a sequence automaton of polynomial $(X-1)(X^3-X^2-X-1)$.
\end{conjecture}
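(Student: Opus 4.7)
The plan splits naturally by the two claims of the conjecture. For the positive half, the strategy is to leverage the reduced $264$-dimensional linear representation of $(k,n)\mapsto\abk{k}{\infw{t}}(n)$ obtained in Section~\ref{sec:Tribonacci} and convert it into an honest sequence automaton in the sense of Section~\ref{sec:method 2}. The key subtask is to identify states of the linear representation whose induced sequences in $\ZZ^\NN$ coincide (and not merely their projections to position $0$), which is a direct computation given the explicit representation. Once the candidate sequence automaton is on hand, the recurrence polynomial can be read off by computing a common minimal polynomial of the finite family of linear recurrence sequences appearing in its vector map. The conjectured factor $X^3-X^2-X-1$ is precisely the characteristic polynomial of the Tribonacci incidence matrix, so its appearance is structural; the additional factor $(X-1)$ should arise from the identity $\abk{k}{\infw{t}}(n)=2n+1$ valid for $n<k$ (Item~3 of~\cref{lem:properties and observations for k-ab complexities}), which contributes a linear-in-$n$ component corresponding to eigenvalue~$1$ on top of the Tribonacci spectrum.

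For the negative half, the strategy is to assume for contradiction that the graph relation $\{\langle k,n,v\rangle : \abk{k}{\infw{t}}(n)=v\}$ is regular in the Tribonacci numeration system and exhibit a witness family violating Myhill--Nerode. Since $\abk{k}{\infw{t}}$ is bounded for each fixed $k$ but grows linearly in the diagonal regime $n<k$, the crux is to show that no finite-state device can consistently interpolate between these two regimes. One concrete approach is to fix infinite sequences of pairs $(k_i,n_i)$ whose Tribonacci representations share common prefixes but differ in the length of carefully chosen tails; computing the corresponding $\abk{k_i}{\infw{t}}(n_i)$ via the $264$-dimensional representation, one would demonstrate that the residual behaviours distinguishing them grow unboundedly. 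An alternative route is a closure argument: the $1$D slices $n\mapsto \abk{k}{\infw{t}}(n)$ for increasing $k$ yield infinitely many pairwise distinct bounded automatic sequences, and synchronization of the $2$D function would force their graphs to be uniformly representable inside a single DFA, contradicting the unboundedness of their individual state complexities.

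The main obstacle is clearly the non-synchronization proof: constructing and verifying the sequence automaton is essentially a finite, albeit large, computation that the tooling of Section~\ref{sec:method 1 effective computation} together with the sequence automata machinery of Section~\ref{sec:method 2} should carry out, whereas proving non-regularity of a bivariate language is a genuinely combinatorial task without an obvious algorithmic certificate. A pragmatic fallback would be to use the $264$-dimensional representation to compute, for each small threshold $v$, a candidate automaton for $\{(k,n) : \abk{k}{\infw{t}}(n)=v\}$ and show that the state complexity of these automata is unbounded as $v$ varies, thereby precluding a single synchronizing DFA; turning such computational evidence into a clean combinatorial proof is where the real effort would lie.
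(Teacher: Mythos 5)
This statement is presented in the paper as an open conjecture in \cref{sec:open pbs}; the paper offers no proof of it, so there is nothing to compare your attempt against. More importantly, what you have written is a research plan rather than a proof: neither half of the conjecture is actually established. For the positive half, the step ``convert the $264$-dimensional linear representation into a sequence automaton'' is exactly the content of the conjecture, not a routine translation. A sequence automaton in the sense of \cref{sec:method 2} is a \emph{deterministic} partial DFA whose vector map takes values in linear recurrence sequences with a common minimal polynomial; a general linear representation $(\lambda,\mu,\gamma)$ carries no such structure, and there is no given procedure that produces one from the other, let alone one whose recurrence polynomial is forced to be $(X-1)(X^3-X^2-X-1)$. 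Your heuristic that the factor $(X-1)$ ``should arise'' from $\abk{k}{\infw{t}}(n)=2n+1$ on the regime $n<k$ is plausible motivation (it is presumably why the authors conjectured this polynomial) but it is not an argument.

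The negative half contains a step that would fail as stated. Your closure argument claims that synchronization of the two-dimensional function would force the one-dimensional slices $n\mapsto\abk{k}{\infw{t}}(n)$ to be ``uniformly representable inside a single DFA,'' contradicting unbounded slice state complexity. But fixing $k$ in a synchronized relation means intersecting with the language of pairs whose first component represents that particular $k$; the resulting automaton for the slice has size growing with $|\rep_{\mathcal{N}}(k)|$ even when the two-dimensional relation is recognized by a fixed DFA. So unbounded state complexity of the slices is perfectly compatible with synchronization and yields no contradiction. (You would also first need to \emph{prove} the slices have unbounded state complexity, which you do not.) The Myhill--Nerode route is only gestured at: no witness family is exhibited, and computing finitely many values via the $264$-dimensional representation can never certify non-regularity. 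In short, the statement remains a conjecture, and your proposal identifies the right difficulties without resolving either of them.
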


\begin{question}
 We checked the computed complexities for the Fibonacci sequence and the $3$-abelian complexity of the Tribonacci sequence.
 In general, can we obtain some inductive procedure to check/certify/validate our result?
\end{question}

Given in~\cref{thm:Fib k2balanced,thm:Trib k2balanced} on the Fibonacci and Tribonacci sequences, we raise the following question. 

\begin{question}
For an integer $m\ge 2$, let $\infw{x}_m$ be the $m$-bonacci sequence, fixed point of $0\mapsto 01, 1 \mapsto 02, \ldots, m-2 \mapsto 0(m-1), m-1 \mapsto 0$.
    What is the value of the smallest integer $C^{(m)}_k\ge 1$ such that $\infw{x}_m$ is $(k,C^{(m)}_k)$-balanced? Bounds on $C^{(m)}_1$ are given in~\cite{Brinda:2014}.
\end{question}


\begin{question}
 Let $\infw{x}$ be a fixed point of a substitution $\tau$.
 Consider a substitution $\sigma \colon A \to A^*$ that might be erasing and let $\infw{y} = \sigma (\infw{x})$.
 If $\tau$ is Pisot, then both $\infw{x}$ and $\infw{y}$ have automatic abelian complexities.
 Can we generalize this result to all $k$-abelian complexities? How do we compute the length-$k$ sliding-block code of the composition $\sigma \circ \tau$?
\end{question}



\section*{Acknolwedgments}
Pierre Popoli is supported by ULiège's Special Funds for Research, IPD-STEMA Program.
Jeffrey Shallit is supported by NSERC grant 2024-03725..
Manon Stipulanti is an FNRS Research Associate supported by the Research grant 1.C.104.24F

The authors want to thank Bastián Espinoza and Julien Leroy for useful discussions and the LIFO research laboratory for providing computation time on its \texttt{mirev} cluster.


\bibliographystyle{plainurl}
\bibliography{abco.bib}

\appendix

\section{Validation script for \cref{subsec:check validity}}

Here is the details of the \texttt{Walnut} script used to check the DFAO and ensure it computes $\pbal_\infw{t}$, where $\infw{t}$ is the Tribonacci sequence, fixed point of $0\mapsto 01, 1\mapsto 02, 2 \mapsto 0$.

\begin{lstlisting}
eval init "?msd_tri Ai,j1,j2,k Dequitri[i][j1][j2][k][0]=@-1
| Dequitri[i][j1][j2][k][0]=@0
| Dequitri[i][j1][j2][k][0]=@1":
\end{lstlisting}

\begin{lstlisting}
eval initXX "?msd_tri Ai,j1,j2,k 
        ($feq_tri(i,j1,k) <=> $feq_tri(i,j2,k)) 
    <=> Dequitri[i][j1][j2][k][0]=@0":
eval initTF "?msd_tri Ai,j1,j2,k 
        ($feq_tri(i,j1,k) & ~$feq_tri(i,j2,k)) 
    <=> Dequitri[i][j1][j2][k][0]=@1":
eval initFT "?msd_tri Ai,j1,j2,k 
        (~$feq_tri(i,j1,k) & $feq_tri(i,j2,k)) 
    <=> Dequitri[i][j1][j2][k][0]=@-1":
\end{lstlisting}

\begin{lstlisting}
def increase "?msd_tri 
  (Dequitri[i][j1][j2][k][n]=@-2 & Dequitri[i][j1][j2][k][n+1]=@-1) 
| (Dequitri[i][j1][j2][k][n]=@-1 & Dequitri[i][j1][j2][k][n+1]=@0) 
| (Dequitri[i][j1][j2][k][n]=@0 & Dequitri[i][j1][j2][k][n+1]=@1)
| (Dequitri[i][j1][j2][k][n]=@1 & Dequitri[i][j1][j2][k][n+1]=@2)":
def decrease "?msd_tri 
  (Dequitri[i][j1][j2][k][n]=@-1 & Dequitri[i][j1][j2][k][n+1]=@-2)
| (Dequitri[i][j1][j2][k][n]=@0 & Dequitri[i][j1][j2][k][n+1]=@-1)
| (Dequitri[i][j1][j2][k][n]=@1 & Dequitri[i][j1][j2][k][n+1]=@0) 
| (Dequitri[i][j1][j2][k][n]=@2 & Dequitri[i][j1][j2][k][n+1]=@1)":
def constant "?msd_tri 
  (Dequitri[i][j1][j2][k][n]=@-2 & Dequitri[i][j1][j2][k][n+1]=@-2)
| (Dequitri[i][j1][j2][k][n]=@-1 & Dequitri[i][j1][j2][k][n+1]=@-1)
| (Dequitri[i][j1][j2][k][n]=@0 & Dequitri[i][j1][j2][k][n+1]=@0)
| (Dequitri[i][j1][j2][k][n]=@1 & Dequitri[i][j1][j2][k][n+1]=@1) 
| (Dequitri[i][j1][j2][k][n]=@2 & Dequitri[i][j1][j2][k][n+1]=@2)":
\end{lstlisting}

\begin{lstlisting}
eval nxt "?msd_tri Ai,j1,j2,k,n
    $constant(i,j1,j2,k,n) 
  | $increase(i,j1,j2,k,n) 
  | $decrease(i,j1,j2,k,n)":
\end{lstlisting}

\begin{lstlisting}
eval nxtXX "?msd_tri Ai,j1,j2,k,n 
        ($feq_tri(i,j1+n+1,k) <=> $feq_tri(i,j2+n+1,k)) 
    <=> $constant(i,j1,j2,k,n)":
eval nxtTF "?msd_tri Ai,j1,j2,k,n
        ($feq_tri(i,j1+n+1,k) & ~$feq_tri(i,j2+n+1,k))
    <=> $increase(i,j1,j2,k,n)":
eval nxtFT "?msd_tri Ai,j1,j2,k,n
        (~$feq_tri(i,j1+n+1,k) & $feq_tri(i,j2+n+1,k))
    <=> $decrease(i,j1,j2,k,n)":
\end{lstlisting}

\section{\texttt{Walnut} code for~\cref{sec:Fibonacci,sec:Pell seq,sec:Tribonacci}}
\label{sec:appendix-C}

\cref{thm:Fib k2balanced} can be proven by running the following \texttt{Walnut} code, returning \texttt{TRUE}:

\begin{lstlisting}
eval b1fib "?msd_fib Ai,j1,j2,k,n Dequifib[i][j1][j2][k][n] <= @2":
\end{lstlisting}

\cref{thm: balanced Fib} can be proven by running the following \texttt{Walnut} code, where the last command returns \texttt{TRUE}:

\begin{lstlisting}
def unb1fib "?msd_fib Ai Ej1,j2,n Dequifib[i][j1][j2][k][n] > @1":
eval allfrom4 "?msd_fib Ak $unb1fib(k) <=> k>=4":
\end{lstlisting}

Similarly, we run the following command to obtain a proof of~\cref{thm:Pell k3balanced}, returning \texttt{TRUE}:

\begin{lstlisting}
eval b1pell "?msd_pell Ai,j1,j2,k,n Dequipell[i][j1][j2][k][n]<= @3":
\end{lstlisting}

The following commands give a proof of the first part of~\cref{thm: unbalanced Pell}, returning \texttt{TRUE}:

\begin{lstlisting}
def unb1pell "?msd_pell Ai Ej1,j2,n Dequipell[i][j1][j2][k][n] > @1":
eval allfrom6 "?msd_pell Ak $unb1pell(k) <=> k>=6":
\end{lstlisting}

And the following commands proof of the second part of~\cref{thm: unbalanced Pell}, returning an automaton recognizing the empty set:

\begin{lstlisting}
def unb2pell "?msd_pell Ai Ej2,j2,n Dequipell[i][j2][j2][k][n] > @2":
eval unb2 "?msd_pell $unb2pell(k)":
\end{lstlisting}

\cref{thm:Trib k2balanced} can be proven by running the following \texttt{Walnut} code, returning \texttt{TRUE}:

\begin{lstlisting}
eval b1tri "?msd_tri Ai,j1,j2,k,n Dequitri[i][j1][j2][k][n] <= @2":
\end{lstlisting}

\cref{thm:Trib k2unbalanced} can be proven by running the following \texttt{Walnut} code, where the last command returns \texttt{TRUE}:

\begin{lstlisting}
def to2tri "?msd_tri Dequitri[i][j1][j2][k][n] > @1":
def tri2tri "?msd_tri Ej1,j2 $to2tri(i,j1,j2,k,n)":
def unb1tri "?msd_tri Ai En $tri2tri(i,k,n)":
eval allfrom "?msd_tri Ak $unb1tri(k) <=> k>=1":
\end{lstlisting}

\section{Exemplification of~\cref{sec:k-abelian complexity} with the Thue--Morse sequence}
\label{sec:appendix-B}

We illustrate the concepts of the beginning of~\cref{sec:k-abelian complexity}.
For this, we fix the Thue--Morse sequence $\infw{x}=01 10 10 01 10 01 01 10\cdots$, fixed point of the substitution $\tau: 0\mapsto 01, 1\mapsto 10$.
We start with its sliding-block code of length $2$ from~\cref{def: sliding block code}.
Since $\#\faset_2(\infw{f})=4$, $B_2(\infw{x})$ is encoded over an alphabet of four letters and we have $B_2(\infw{x})=12 31 34 12 34 13\cdots$.
Let us obtain the substitution generating this latter sequence.
We get $A_2=\{1,2,3,4\}$ and the encoding $\Theta_2 \colon 01\mapsto 1, 11 \mapsto 2, 10  \mapsto 3, 00 \mapsto 4.$
Now $\tau_2$ is defined by $1\mapsto 12$, $2 \mapsto 31$, $3 \mapsto 34$, and $4 \mapsto 13$.
For instance, since $\ell=1$ encodes the factor $u=01$ of $\infw{x}$ and $\tau(u[0])=\tau(0)=01$ has length $2$, we look at the first $2$ length-$2$ factors of $\tau(u)=0110$, i.e.,  $\tau_2(1)=\underbrace{1}_{=01}\underbrace{2}_{=11}$.

We now illustrate the proof of~\cref{lem:subset of spectres}. The eigenvalues of $M_{\tau}=\begin{pmatrix}
    1 & 1 \\
    1 & 1
\end{pmatrix}$ are $0$ and $2$ and respective eigenvectors are given by \[
V_0 = \begin{pmatrix} 1 \\ -1 \end{pmatrix} 
\quad \text{and} \quad
V_2 = \begin{pmatrix} 1 \\ 1 \end{pmatrix}.
\]
Now seeing $\Theta_2$, we obtain that $\pi_2 \colon 1 \mapsto 0, 2 \mapsto 1, 3 \mapsto 1, 4 \mapsto 0$.
For instance, since $\Theta_2^{-1}(1)=12$, we look at the first letter of the factor $0110$ coded by $12$ of $\infw{x}$ to obtain $\pi_2(1)$, which is $0$.
So the vectors
\[
V'_0 = \begin{pmatrix} V_0[\pi_2(1)] \\ V_0[\pi_2(2)] \\ V_0[\pi_2(3)] \\ V_0[\pi_2(4)] \end{pmatrix}
=  \begin{pmatrix} 1 \\ -1 \\ -1 \\ 1 \end{pmatrix}
\quad \text{and} \quad
V'_2 = \begin{pmatrix} V_2[\pi_2(1)] \\ V_2[\pi_2(2)] \\ V_2[\pi_2(3)] \\ V_2[\pi_2(4)] \end{pmatrix} 
= \begin{pmatrix} 1 \\ 1 \\ 1 \\ 1 \end{pmatrix} 
\]
are eigenvectors of $M_{\tau_2}$ with respective eigenvalues $0$ and $2$.

We now observe that the converse of~\cref{pro: tau2 Pisot implies tau Pisot} does not hold.
The Thue--Morse substitution $\tau\colon 0\mapsto 01, 1\mapsto 10$ is ultimately Pisot with characteristic polynomial $P_\tau(X)=X(X-2)$.
However, $\tau_2$ is not, since it has characteristic polynomial $P_{\tau_2}(X)=X(X-1)(X+1)(X-2)$.

\section{\texttt{Walnut} details for the Narayana sequence }\label{sec:appendix-Walnut}

In this section, we illustrate the method of \Cref{sec:k-abelian complexity} on the Narayana sequence, fixed point of $\tau\colon 0\mapsto 01, 1\mapsto 2, 2 \mapsto 0$.
In \texttt{Walnut}, $\infw{n}$ is encoded by \verb|Nara| and we also define the corresponding Dumont--Thomas numeration system; see \cref{list:deftri}.
\begin{lstlisting}[caption={Generate the Dumont--Thomas numeration system for the Narayana substitution.},label=list:deftri,float=h]
%%python
from licofage.kit import *
import os
setparams(True, True, os.environ["WALNUT_HOME"])

s = subst('01/2/0')
ns = address(s, "nara")
ns.gen_ns()
ns.gen_word_automaton()
\end{lstlisting}

Then we set a factor comparison predicate in \texttt{Walnut} and a first factor occurrence predicate; see \cref{list:factor}.

\begin{lstlisting}[caption={The predicates for factor comparison and first factor occurrence.},label=list:factor,float=h]
def cut "?msd_nara i<=u & j<=v & u+j=v+i & u<n+i & v<n+j":
def feq_nara "?msd_nara ~(Eu,v $cut(i,j,n,u,v) & Nara[u]!=Nara[v])":
eval comp_nara n "?msd_nara Aj $feq_nara(i,j,n) => i<=j":

\end{lstlisting}

As explained in~\cite[Section~8.1]{Fici:2023}, we use these predicates to define the border condition of~\cref{lem:equivalent def for k-ab equiv}:
\begin{lstlisting}
def bordercond "?msd_nara (k<=n => $feq_nara(i,j,k-1)) 
                    & (n<k => $feq_nara(i,j,n))":
\end{lstlisting}

Let us agree that we want to compute the $3$-abelian complexity $(\abk{3}{n}(n))_{n\ge 0}$ of $\infw{n}$.
Then we need the length-$3$ sliding-block code $B_3(\infw{n})$ of $\infw{n}$ and the corresponding substitution $\tau_3$.
By~\cite[Theorem 13]{Shallit:2025}, the sequence $\infw{n}$ have $2\cdot 3+1=7$ length-$3$ factors.
The substitution $\tau_3$ is thus over $7$ letters, and one can check that \[
    \tau_3 \colon 0\mapsto 01, 1 \mapsto 2, 2 \mapsto 3, 3 \mapsto 45, 4 \mapsto 12, 5 \mapsto 6, 6 \mapsto 3. 
    \]

We then obtain the Dumont--Thomas numeration systems associated with both $\tau,\tau_3$ and convert one to another.
We observe that the conversion is the identity (as can also be seen in~\cref{fig:conv_nara-nara3}), but this is not always the case.
We use the following code to obtain our results:
\begin{lstlisting}
%%python
s3 = block(s, 3)
ns3 = address(s3, "narab3")
ns3.gen_ns()
(ns-ns3).gen_dfa("conv_nara_narab3")
\end{lstlisting}

\begin{figure}[h!tbp]
        \centering
        \includegraphics[width=0.5\linewidth]{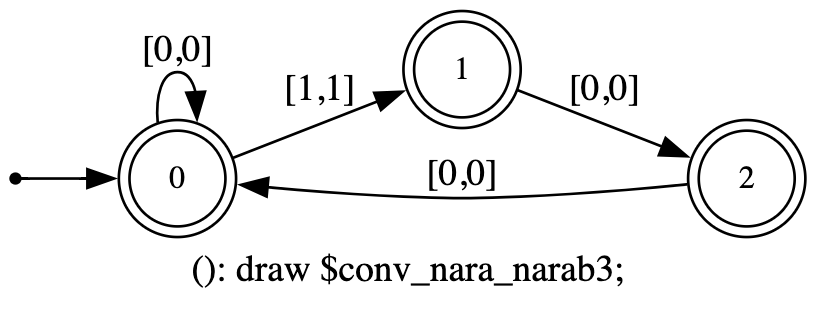}
        \caption{The converter between the Dumont--Thomas numeration systems associated with the Narayana substitution $\tau$ and the substitution behind the length-$3$ sliding-block code of its fixed point (here it computes the identity).}
        \label{fig:conv_nara-nara3}
\end{figure}

We translate the border condition into the new numeration system:
\begin{lstlisting}
def bordercond3 "?msd_narab3 (?msd_nara Eii,jj,kk,nn (
                    $conv_nara_narab3(?msd_nara ii, ?msd_narab3 i) & 
                    $conv_nara_narab3(?msd_nara jj, ?msd_narab3 j) & 
                    $conv_nara_narab3(?msd_nara kk, ?msd_narab3 k) & 
                    $conv_nara_narab3(?msd_nara nn, ?msd_narab3 n) & 
                    $bordercond(ii,jj,kk,nn)))":
\end{lstlisting}
    
We are now able to compute $(\abk{3}{n}(n))_{n\ge 0}$. We first compute the Parikh vectors for the prefixes of $B_3(\infw{n})$:

\begin{lstlisting}
%%python
for (m,a) in enumerate(ns3.alpha):
    w = {'_': 0}
    w[a] = 1
    parikh = address(s3, ns3.ns, **w)
    (parikh - ns3).gen_dfa(f"narab3p{m}")
\end{lstlisting}

Second, for each $i\in[0,6]$ (since $B_3(\infw{n})$ is over $7$ letters), we write the following predicates:

\begin{lstlisting}
def fac{m} "?msd_narab3 Ex,y $narab3p{m}(i,x) & $narab3p{m}(i+n,y) 
                                          & z+x=y":
def min{m} "?msd_narab3 Ei $fac{m}(i,n,x) & Aj,y $fac{m}(j,n,y) 
                                          => y>=x":
def diff{m} "?msd_narab3 Ex,y $min{m}(n,x) & $fac{m}(i,n,y) & z+x=y":
\end{lstlisting}

For instance, \verb|fac0(i,n,z)| insures that $\verb|z|$ gives the number of letters $0$ in the length-$n$ factor $B_3(\infw{x})\interv{i}{i+n}$. Similarly, \verb|min0(n,x)| insures that $\verb|x|$ is the smallest number of $0$'s in all length-$n$ factors of $B_3(\infw{x})$; and \verb|diff0(i,n,z)| insures that $\verb|z|$ is the quantity needed to obtain the number of $0$'s in $B_3(\infw{x})\interv{i}{i+n}$ from the minimum number of $0$'s in all length-$n$ factors of $B_3(\infw{x})$. Then we combine all $7$ predicates to obtain the $3$-abelian complexity as follows:
\begin{lstlisting}
def abeq_narab3 "?msd_narab3 $bordercond3(i,j,3,n+2) 
 & (Ez $diff0(i,n,z) &  $diff0(j,n,z)) 
 & (Ez $diff1(i,n,z) &  $diff1(j,n,z)) 
 & (Ez $diff2(i,n,z) &  $diff2(j,n,z)) 
 & (Ez $diff3(i,n,z) &  $diff3(j,n,z)) 
 & (Ez $diff4(i,n,z) &  $diff4(j,n,z)) 
 & (Ez $diff5(i,n,z) &  $diff5(j,n,z)) 
 & (Ez $diff6(i,n,z) &  $diff6(j,n,z))":
\end{lstlisting}

Finally, to get back to the original numeration system and to compute the first values, we use the following predicate:
\begin{lstlisting}
def abeq_nara3 "?msd_nara (n<2 & $feq_nara(i,j,n)) 
      | (n>=2 & (?msd_narab3 Ei,j,n 
      ($conv_nara_narab3(?msd_nara ii, ?msd_narab3 i) 
     & $conv_nara_narab3(?msd_nara jj, ?msd_narab3 j) 
     & $conv_nara_narab3(?msd_nara nn, ?msd_narab3 n) 
     & $abeq_narab3(ii,jj,nn-2))))":
\end{lstlisting}

The following command gives a linear representation of the $3$-abelian complexity: 
\begin{lstlisting}
eval comp_nara3 n "?msd_nara Aj $abeq_nara3(i,j,n) => i<=j":
\end{lstlisting}

Finally, by applying the semigroup trick, we obtain the desired DFAO for the $3$-abelian complexity: 
\begin{lstlisting}
%SGT comp_nara3 msd_nara Comp_nara3
\end{lstlisting}

\end{document}